\newtheorem{definition}{Definition}
\newtheorem{theorem}{Theorem}
\newtheorem{lemma}{Lemma}
\newtheorem{proposition}{Proposition} 
\newtheorem{corollary}{Corollary}
\newtheorem{remark}{Remark}
\newtheorem{example}{Example}
\newtheorem{algorithm}{Algorithm}
\def\BibTeX{{\rm B\kern-.05em{\sc i\kern-.025em b}\kern-.08em
    T\kern-.1667em\lower.7ex\hbox{E}\kern-.125emX}}
\begin{document}
\title{Classifications of Single-input Lower Triangular Forms}
\author{Duan Zhang and Ying Sun
\thanks{This work has been submitted for possible publication. Copyright may be transferred without notice, after which this version may no longer be accessible.}
\thanks{D. Zhang is with the College of Computer Science and Technology, Zhejiang University of Technology, Hangzhou, 310023 China (e-mail: dzhang@zjut.edu.cn). }
\thanks{Y. Sun is with the School of Civil Engineering and Architecture, Zhejiang Sci-Tech University, Hangzhou, 310018, China (e-mail: suny\_kangaroo@zstu.edu.cn). }}

\maketitle

\begin{abstract}

The purposes of this paper are to classify lower triangular forms and to determine under what conditions a nonlinear system is equivalent to a specific type of lower triangular forms. According to the least multi-indices and the greatest essential multi-index sets, which are introduced as new notions and can be obtained from the system equations, two classification schemes of lower triangular forms are constructed. It is verified that the type that a given lower triangular form belongs to is invariant under any lower triangular coordinate transformation. Therefore, although a nonlinear system equivalent to a lower triangular form is also equivalent to many other appropriate lower triangular forms, there is only one type that the system can be transformed into. Each of the two classifications induces a classification of all the systems that are equivalent to lower triangular forms. A new method for transforming a nonlinear system into a lower triangular form, if it is possible, is provided to find what type the system belongs to. Additionally, by using the differential geometric control theory, several necessary and sufficient conditions under which a nonlinear system is locally feedback equivalent to a given type of lower triangular form are established. An example is given to illustrate how to determine which type of lower triangular form a given nonlinear system is equivalent to without performing an equivalent transformation.

\end{abstract}

\begin{IEEEkeywords}
Classification, feedback equivalence, lower triangular form, multi-index.
\end{IEEEkeywords}

\section{Introduction}
\label{sec:introduction}

\IEEEPARstart{S}{ince} nonlinear 
phenomena are widely present in nature and many industrial processes, the studies of nonlinear control systems are of obvious practical value \cite{isidori1995nonlinear, sepulchre2012constructive, sastry2013nonlinear}. Lower triangular forms are a class of nonlinear systems attracting considerable attention. For example, backstepping, as a powerful control strategy for lower triangular systems, has been developed based on the cascade structures of these systems \cite{sepulchre2012constructive, vcelikovsky1996equivalence, zhang2012new, ma2010backstepping, su2015cooperative, wang2009robust, fotiadis2020prescribed}. Many exciting results have been obtained for some special classes of lower triangular forms, such as strict feedback forms \cite{zhang2000adaptive, tang2003adaptive, bechlioulis2009adaptive, chen2009direct, zhang2012new, zhai2018output, zhang2019low} and $p$-normal forms \cite{lin2000adding, qian2001non, qian2001continuous, lin2002adaptive, qian2002practical, qian2006recursive}. Motivated by these works, we address two problems in this paper. The first one is how to make classifications of lower triangular forms in favor of the design of control laws for these systems. The second problem is whether and how a nonlinear system can be equivalently transformed into a given type of lower triangular form.\par

Before discussing the classification scheme, we first review the related research on lower triangular forms. A nonlinear system is called a lower triangular form \cite{vcelikovsky1996equivalence} if it takes the form 
\begin{equation}\label{eq_tri}
\begin{aligned}
	&\dot{x}_{1}=f_{1}\left(x_{1}, x_{2}\right) \\
	&\vdots \\
	&\dot{x}_{n-1}=f_{n-1}\left(x_{1} ,\dots, x_{n}\right) \\
	&\dot{x}_{n}=f_{n}\left(x_{1}, \dots, x_{n}\right)+g_{n}\left(x_{1}, \dots, x_{n}\right) v
\end{aligned}
\end{equation}
where $x=(x_1,\dots,x_n)$ is the state vector, $v$ is the scale input, $g_n$ is a smooth function with $g_n(0) \ne 0$, and $f_i$, $i=1,\dots,n$, are smooth functions such that $\partial f_j / \partial x_{j+1} \not\equiv 0$, $j =1,\dots,n-1$, hold in a neighborhood of the origin. A lower triangular form is said to be a $p$-normal form \cite{cheng2003p, respondek2003transforming} if it is also of the special form
\begin{equation}\label{eq_p}
\begin{aligned}
	&{{\dot x}_1} = {\psi _{1,{p_1}}}({x_1},{x_2})x_2^{{p_1}} + \sum\limits_{j = 0}^{{p_1} - 1} {{\psi _{1,j}}({x_1})x_2^j} \\
	&\; \vdots \\
	&{{\dot x}_{n - 1}} = {\psi _{n - 1,{p_{n - 1}}}}({x_1}, \dots ,{x_n})x_n^{{p_{n - 1}}}\\
	&\ \ \ \ \ \ \ \ \ + \sum\limits_{j = 0}^{{p_{n - 1}} - 1} {{\psi _{n - 1,j}}({x_1}, \dots ,{x_{n - 1}})x_n^j} \\
	&{{\dot x}_n} = {f_n}({x_1}, \dots ,{x_n}) + {g_n}({x_1}, \dots ,{x_n})v
\end{aligned}
\end{equation}
where $p_i$, $i=1,\dots,{n-1}$, are positive integers, and $\psi_{i,j}$, $i=1,\dots,{n-1}$ and $j=p_i,\dots,1$, are smooth functions with 
\begin{equation}\nonumber
	\psi_{i, j}(0)\left\{\begin{matrix}
		\neq 0 & j=p_{i} \\
		=0 & j \neq p_{i}.
	\end{matrix}\right.
\end{equation}
When $p_1=\dots=p_{n-1}=1$, \eqref{eq_p} becomes a strict feedback form, which has been verified to be feedback equivalent to the controllable canonical form. The first report on $p$-normal forms was carried out by Lin and Qian. From 2000 to 2006, they conducted a series of systematic studies about the controller design for $p$-normal forms to meet various control objectives, including global stabilization  \cite{lin2000adding,qian2001non,qian2001continuous}, adaptive control \cite{lin2002adaptive}, output tracking \cite{qian2002practical}, and output feedback stabilization \cite{qian2006recursive}.  Subsequently, further impressive studies focused on those systems are presented, such as finite-time control \cite{hong2006adaptive,sun2015new,chen2020unified},  $ H_{\infty}$ control \cite{long2012h}, state-constrained control \cite{su2014stabilization},  global stabilization using multiple Lyapunov functions \cite{long2015integral}, nonsingular prescribed-time stabilization \cite{ding2020nonsingular}, and tracking control \cite{ding2021low}. 

Seeing that one can find a great many lower triangular forms other than $p$-normal forms and strict feedback forms, how to classify lower triangular forms is a problem worthy of study. As far as we know, there has been no report on this problem. Two classification schemes proposed in Section \uppercase\expandafter{\romannumeral3} are expected to be helpful in analyzing the behavior of lower triangular forms. The first classification scheme is directly inspired by $p$-normal forms. Let us denote the left-hand side of the $i$th equation of \eqref{eq_p} by $\varphi_i(x)$ for $i=1,\dots,n-1$. $P$-normal forms have a property that $\partial^{j} \varphi_{i} / \partial x_{i+1}^{j}(0)=0$, $j=1, \cdots, p_i-1$, and $\partial^{p_i} \varphi_i / \partial x_{i+1}^{p_i}(0) \neq 0$ are satisfied. In this paper, we say that $(0,\dots,0,\underbrace{p_{i}}_{(i+1) {\rm th}})$ is the least $(i+1)$-multi-index of $\varphi_i(x)$ (see Definition \ref{def_least}). These multi-indices are observed playing an key role in the controllers for $p$-normal forms  \cite{lin2000adding,qian2001non,qian2001continuous,lin2002adaptive,qian2002practical,qian2006recursive,hong2006adaptive,sun2015new,chen2020unified,long2012h,su2014stabilization,long2015integral,ding2020nonsingular,ding2021low}. This motivates us to classify \eqref{eq_tri} by the least $(i+1)$-multi-indices of $f_i(x)$ for $i=1,\dots,n-1$. Moreover, we will see that the least $(i+1)$-multi-index of $f_i(x)$ is invariant under a class of coordinate transformations called lower triangular coordinate transformations. The other way presented in this paper to classify lower triangular forms is based on another new notion called the greatest essential $(i+1)$-multi-index set of $f_i$ (see Definition \ref{def_we_e} and \ref{def_ge}). Since the least $(i+1)$-multi-index of $f_i$ belongs to the set, this classification is a refinement of the first one. It will be verified that the set is finite and invariant under any lower triangular coordinate transformation. Also, two algorithms for determining those sets from \eqref{eq_tri} are given in section \uppercase\expandafter{\romannumeral3}. It is reasonable to infer that those multi-indices can be expected to act as a pivotal part of the controllers for lower triangular forms, considering that the terms corresponding to the least $(i+1)$-multi-index of $f_i$ and the elements of the greatest essential $(i+1)$-multi-index set of $f_i$ can be regarded as the invariant "control" terms for the equation $\dot{x}=f_i(x_1,\dots,x_{i+1})$ given in \eqref{eq_tri} (see Remark \ref{rem_controlterms}). \par
Since a classification of lower triangular forms induces a classification of all the systems that are feedback equivalent to lower triangular forms, the next problem naturally considered in this paper is whether a given nonlinear system is equivalent to a specific type of lower triangular form via a state feedback and a change of coordinates. This problem is about feedback equivalence between different systems. In recent decades, a series of original results have been achieved on the issue of feedback equivalence. In 1973, Krener provided several sufficient and necessary conditions that an affine nonlinear system is equivalent to another affine system or a linear system via a local coordinate transformation \cite{krener1973equivalence}. In 1978, taking invariants under feedback into consideration, Brockett proposed a necessary and sufficient condition for a nonlinear system to be equivalent to a controllable linear system via a local coordinate transformation $x=T(\xi)$ and a state feedback of the form  $u = \alpha_u(\xi) + \beta_u v$, where $x$ and $\xi$ are two state vectors, $ \alpha_u(\xi)$ is a smooth function, and $\beta_u$ is a real number \cite{brockett1978feedback}. In the 1980s, the problem of exact linearization with a feedback taking the form $u = \alpha_u(\xi) + \beta_u(\xi) v$, where $\beta_u(\xi)$ is a function satisfying $\beta_u(\xi) \ne 0$, was solved in \cite{jacubczyk1980linearization,su1982On,hunt1983global}. The multi-input exact feedback linearization problem was solved in \cite{isidori1981nonlinear}. Cheng and Lin  \cite{cheng2003p} presented a necessary and sufficient condition under which a nonlinear system is feedback equivalent to a $p$-normal form via a coordinate transformation and a state feedback of the form $u = \alpha_u(\xi) + \beta_u v$, and also designed an algorithm to find the appropriate coordinate transformations and feedback control laws in 2003. In late this year, Respondek \cite{respondek2003transforming} solved the $p$-normalization problem  using a state feedback of the form $u = \alpha_u(\xi) + \beta_u(\xi) v$ and pointed out $p$-normal forms, taking the form \eqref{eq_p}, are all locally equivalent to their special cases with $\psi_{i, p_i}(x) = 1$ for $i = 1,\dots,n-1$. \par
  Two methods are provided to determine whether a nonlinear system is feedback equivalent to a given type of lower triangular form in Section \uppercase\expandafter{\romannumeral4}. A way to solve the problem is to transform the system into a lower triangular system from which one then can determine the least $(i+1)$-multi-index and the greatest essential $(i+1)$-multi-index set of the right-hand side of its $i$th equation. A new necessary and sufficient condition for a single-input nonlinear system to be equivalent to a lower triangular form is given to simplify the transformation mentioned above. Since it may be quite difficult to find an appropriate change of coordinates to transform a system into a lower triangular form, we seek a new method for judging the type without implementing an equivalent transformation. Theorem \ref{theo_sysleast}, Theorem \ref{theorem_type}, Corollary \ref{col_least}, and Corollary \ref{col_type} allow us to determine whether a nonlinear system is equivalent to a specific type of lower triangular form by computing Lie brackets.\par

The rest of this paper is organized as follows. Section \uppercase\expandafter{\romannumeral2} will describe in detail the problem of how to classify single-input lower triangular forms and the problem of whether a system is equivalent to a specific type of lower triangular form. Section \uppercase\expandafter{\romannumeral3} gives two ways to solve the former, and Section \uppercase\expandafter{\romannumeral4} discusses the latter. We conclude the paper in Section  \uppercase\expandafter{\romannumeral5}.

\section{Problem Formulations}
To begin with, we clarify that throughout this paper all the definitions and statements are local, although it is possible to generalize to the global as well. In other words, we always operate in some neighborhoods of the origin which are small enough. To classify lower triangular forms, we pay special attention to a class of coordinate transformations defined as follows. 
\begin{definition}\label{def_tri_coord}
	A local coordinate transformation $y = U(x)$ is said to be lower triangular if it takes the form 
	\begin{equation}\label{eq_tri_coord}
		\begin{aligned}
			&{y_1} = {U_1}\left( {{x_1}} \right)\\
			&\; \vdots \\
			&{y_n} = {U_n}\left( {{x_1}, \dots ,{x_n}} \right).
		\end{aligned}
	\end{equation}
\end{definition}\par
\begin{lemma}\label{lem_tri_coord}
Let $y = U(x)$ be a coordinate transformation. Rewriting \eqref{eq_tri} in $y$-coordinates, it still takes a lower triangular form if and only if the coordinate transformation is of the form \eqref{eq_tri_coord}. Moreover, the inverse transformation of \eqref{eq_tri_coord} is also a local lower triangular coordinate transformation.
\end{lemma}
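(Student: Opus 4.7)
I would organize the proof into three claims: (a) the inverse of a coordinate change of the form \eqref{eq_tri_coord} is again of the same form; (b) lower triangular coordinate transformations preserve the lower triangular structure of \eqref{eq_tri}; (c) any coordinate change that preserves lower triangular structure must itself be lower triangular. The main obstacle is claim (c).

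For (a), the Jacobian of $U$ is lower triangular, so invertibility at the origin forces each $\partial U_i/\partial x_i(0) \ne 0$, and the implicit function theorem applied recursively---solving $y_1 = U_1(x_1)$ for $x_1 = V_1(y_1)$, substituting into $y_2 = U_2(x_1, x_2)$ to obtain $x_2 = V_2(y_1, y_2)$, and so on---produces an inverse $x = V(y)$ of the same shape. Using this, claim (b) follows from the chain rule: for $i < n$,
\[
\dot{y}_i = \sum_{k=1}^i \frac{\partial U_i}{\partial x_k} f_k(x_1,\dots,x_{k+1})
\]
involves only $x_1, \dots, x_{i+1}$, which after substitution becomes a function $\tilde{f}_i(y_1, \dots, y_{i+1})$. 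The nondegeneracy condition is verified by factoring
\[
\frac{\partial \tilde{f}_i}{\partial y_{i+1}} = \frac{\partial U_i}{\partial x_i} \cdot \frac{\partial f_i}{\partial x_{i+1}} \cdot \frac{\partial V_{i+1}}{\partial y_{i+1}},
\]
where the outer factors are nowhere zero locally and the middle one is not identically zero by hypothesis. The analogous calculation for $\dot{y}_n$ produces the required control term.

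Claim (c) is the hard part. Suppose the transformed system has right-hand sides $\tilde{f}_i(y_1, \dots, y_{i+1})$. I would use downward induction on $k$ from $n$ down to $2$ to prove $\partial U_i/\partial x_k \equiv 0$ for every $i < k$. The base case $k = n$ comes from matching coefficients of the control input: the coefficient of $v$ in $\dot{y}_i$ for $i < n$ is $(\partial U_i/\partial x_n)\,g_n$, which must vanish, and since $g_n(0) \ne 0$ we obtain $\partial U_i/\partial x_n \equiv 0$. For the inductive step at level $k < n$, the hypothesis gives $U_j = U_j(x_1, \dots, x_k)$ for every $j \le k$, so the Jacobian of $U$ has a vanishing upper-right $k \times (n-k)$ block; invertibility of the full Jacobian then makes the top-left $k \times k$ block invertible, so $(x_1, \dots, x_k) \mapsto (y_1, \dots, y_k)$ is a local diffeomorphism. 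Hence, for $i < k$, the function $\tilde{f}_i(y_1, \dots, y_{i+1})$ re-expressed in $x$-coordinates depends only on $x_1, \dots, x_k$. On the other hand, differentiating $\dot{y}_i = \sum_{k'=1}^k (\partial U_i/\partial x_{k'}) f_{k'}$ with respect to $x_{k+1}$ leaves only the single term $(\partial U_i/\partial x_k)(\partial f_k/\partial x_{k+1})$. Equating this to zero and using $\partial f_k/\partial x_{k+1} \not\equiv 0$ together with continuity forces $\partial U_i/\partial x_k \equiv 0$, closing the induction.
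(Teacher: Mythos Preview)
The paper states this lemma without proof, so there is no argument to compare against directly. Your decomposition into (a) inverse, (b) sufficiency, and (c) necessity is the natural one, and (a) and (b) are carried out correctly.

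There is, however, a genuine gap in (c). At the inductive step you arrive at
\[
\frac{\partial U_i}{\partial x_k}(x_1,\dots,x_k)\cdot\frac{\partial f_k}{\partial x_{k+1}}(x_1,\dots,x_{k+1})\equiv 0
\]
and conclude $\partial U_i/\partial x_k\equiv 0$ from $\partial f_k/\partial x_{k+1}\not\equiv 0$ ``together with continuity.'' In the smooth category this inference is invalid: the zero set of a smooth function that is not identically zero can have nonempty interior, so the second factor may vanish on an open region while the first is supported precisely there. A concrete obstruction with $n=3$: take $f_1(x_1,x_2)=x_2$, $f_2(x_1,x_2,x_3)=\psi(x_2)\,x_3$ with $\psi$ smooth, $\psi\equiv 0$ on $[0,\infty)$ and $\psi>0$ on $(-\infty,0)$; and set $U_1=x_1+H(x_2)$, $U_2=x_2$, $U_3=x_3$ with $H$ smooth, $H(0)=0$, $H'\equiv 0$ on $(-\infty,0]$ and $H'>0$ on $(0,\infty)$. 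Then $H'(x_2)\psi(x_2)\equiv 0$, so $\dot y_1=f_1(y_1-H(y_2),y_2)=y_2$ and the transformed system is again of the form \eqref{eq_tri}, yet $U$ is not lower triangular. Your continuity argument therefore cannot close the induction; the necessity direction (and hence the lemma as stated) requires an additional hypothesis---real-analyticity, or that each $\partial f_k/\partial x_{k+1}$ has no interior zeros near the origin---for the step to go through.
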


The classifications we investigate here should guarantee that the type a lower triangular form belongs to is unchanged under any lower triangular coordinate transformation.

There are some clarifications about the classifications of lower triangular forms we would like to illustrate. First, the rules we design to classify lower triangular forms are independent of $f_n(x)$ and $g_n(x)$ introduced in \eqref{eq_tri} because they can be changed by the input $v$. Suppose that ${f'_n}(x)$ and ${g'_n}(x)$ are two given smooth functions with ${g'_n}(0) \ne 0$. Take
$v = {{\left( {{{f'}_n}(x) - {f_n}(x)} \right)} \mathord{\left/
		{\vphantom {{\left( {{{f'}_n}(x) - {f_n}(x)} \right)} {{g_n}(x)}}} \right.
		\kern-\nulldelimiterspace} {{g_n}(x)}} + {{{{g'}_n}(x)} \mathord{\left/
		{\vphantom {{{{g'}_n}(x)} {{g_n}(x)}}} \right.
		\kern-\nulldelimiterspace} {{g_n}(x)}}v'$  
in an appropriate neighborhood of the origin, and then the last equation of \eqref{eq_tri} becomes  
${\dot x_n} = {f'_n}(x) + g'(x)v'$. Second, in some literature, such as \cite{sun2015new, chen2020unified,ding2020nonsingular}, the parameters $p_i$, $i=1,...,n-1$, in \eqref{eq_p} are allowed to be selected as positive fractions. Since $x_{i+1}^{p_i}$ is not smooth at the origin when $p_i$ is not a nonnegative integer, we only consider the case that  $p_i$, $i=1,...,n-1$, are all positive integers. Last, a smooth nonaffine system 
\begin{equation}\nonumber
\begin{aligned}
	&\dot{x}_{1}=f_{1}\left(x_{1}, x_{2}\right) \\
	&\vdots \\
	&\dot{x}_{n-1}=f_{n-1}\left(x_{1}, \dots, x_{n}\right) \\
	&\dot{x}_{n}=f_{n}\left(x_{1}, \dots, x_{n}, v\right),
\end{aligned}
\end{equation}
can be equivalently transformed into an affine system via adding a new coordinate variable $x_{n+1}=v$. In fact, the system can be rewritten as
\begin{equation}\nonumber
\begin{aligned}
	&\dot{x}_{1}= f_{1}\left(x_{1}, x_{2}\right) \\
	&\vdots \\
	&\dot{x}_{n}= f_{n}\left(x_{1}, \dots, x_{n+1}\right) \\
	&\dot{x}_{n+1} = \dot{v}.
\end{aligned}
\end{equation} 
Thus, a classification of affine lower triangular forms can be naturally extended to nonaffine lower triangular forms, and we only examine affine systems here.

If the problem of how to classify lower triangular forms has been solved, let us consider a single-input nonlinear system 
\begin{equation}\label{eq_s}
\begin{aligned}
	&{{\dot \xi }_1} = {F_1}(\xi ) + {G_1}(\xi )u\\
	&\dots \\
	&{{\dot \xi }_n} = {F_n}(\xi ) + {G_n}(\xi )u
\end{aligned}
\end{equation} 
where $\xi  = {({\xi _1}, \dots ,{\xi _n})} \in {{\mathbb{R}}^n}$ is the system state, $u \in {\mathbb{R}}$ is the control input, $F_i(\xi)$, $i=1,\dots,n$, are smooth functions with $F_i(0)=0$, and $G_i(\xi)$, $i=1,\dots,n$, are all smooth functions such that there exists an integer $j \in \{1,\dots,n\}$ satisfying $G_j(0) \ne 0$. The next problem we address in this paper is whether \eqref{eq_s} is locally equivalent to a given type of lower triangular form via a state feedback and a change of coordinates. The state feedback considered here is of the form
\begin{equation}\label{eq_fb}
	u = \alpha_u (\xi ) + \beta_u (\xi )v
\end{equation} 
where $\alpha_u (\xi )$ and $\beta_u (\xi )$ are smooth functions with $\beta_u (0) \neq  0$, and the change of coordinates can be expressed as 
\begin{equation}\label{eq_ct}
	x = T(\xi ) = {\left( {{T_1}(\xi ), \dots ,{T_n}(\xi )} \right)}
\end{equation} 
where $T:{{\mathbb{R}}^n} \to {{\mathbb{R}}^n}$ is a smooth invertible mapping with $T(0)=0$. 

\section{Classifications of Lower Triangular Forms}
The problem we are concerned with in this section is how to classify lower triangular forms. Let us start with the following two definition.
\begin{definition}
An $m$-dimensional multi-index or $m$-multi-index is an ordered $m$-tuple 
\begin{equation}\label{eq_multi-index}
\alpha  = ({\alpha _1} \dots ,{\alpha _m})
\end{equation} 
where $m$ is an integer satisfying $1 \le m \le n$ and ${\alpha _i}$, $i = 1, \dots ,m$, are all nonnegative integers \cite{rudin1991functional}; \eqref{eq_multi-index} is called a proper $k$-multi-index if ${\alpha _k} \ge 1$ and ${\alpha _{k + 1}} =  \dots  = {\alpha _m} = 0$ hold for some $k$ with $1 \le k \le m$; \eqref{eq_multi-index} is said to be a proper $0$-multi-index if $\alpha_i  = 0$ for all $i=1,\dots,m$, and we may simply write $\alpha = 0$ in this case.
\end{definition}\par
\begin{definition}
	Let $\alpha$ and $\beta$ be multi-indices. We write $\alpha=\beta$ if and only if they are both proper $k$-multi-indices with $k \ge 0$ and $\alpha_i=\beta_i$ holds for every $i = 1,\dots ,k$ when $k>0$ \cite{rudin1991functional}.
\end{definition}\par
\begin{remark}
	Every proper $k$-multi-index can be regarded as an $i$-multi-index with $i \ge k$.
\end{remark}\par
 Taking $\alpha$ as an $m$-dimensional multi-index, for ease of notation, we write 
\begin{equation}\nonumber
	{x^\alpha } = x_1^{\alpha_1} \dots x_m^{\alpha_m}
\end{equation} 
and 
\begin{equation}\nonumber
	\frac{{{\partial ^\alpha }}}{{\partial {x^\alpha }}} = \frac{{{\partial ^{\left| \alpha  \right|}}}}{{\partial x_1^{\alpha _1} \dots \partial x_m^{\alpha _m}}}
\end{equation} 
where $\left| \alpha  \right|=\alpha_1+ \dots +\alpha_m$. Moreover, if $p(x_1,\dots,x_m)$ is a function and $\left| \alpha  \right|=0$, we define that $\partial ^{\left| \alpha  \right|} p/{\partial x_1^{\alpha _1} \dots \partial x_m^{\alpha _m}}=p(x_1,\dots,x_m)$ \cite{rudin1991functional}.

\begin{definition}
	$p(x_1,\dots,x_m)$ is a smooth function (or a holomorphic function) and $\alpha$ is a multi-index with $\left| \alpha \right| > 0$. We say that $\alpha$ is a multi-index of $p(x_1,\dots,x_m)$ (with respect to the coordinates $x_1,\dots,x_m$) if $\partial ^\alpha p / \partial x^\alpha (0) \neq 0$ holds.
\end{definition}\par
\begin{remark}
	$0$ is a multi-index of $p$ if and only if $p(0) \ne 0$.
\end{remark}
\begin{remark}
	In most cases, we consider the function $p(x_1,\dots,x_m)$ to be real-valued and smooth. This function is allowed to be complex-valued and holomorphic only for discussing invariant multi-indices in subsection B. For the same reason, the lower triangular coordinate transformation $y=U(x)$ defined by Definition \ref{def_tri_coord} can be smooth or biholomorphic.
\end{remark}
\begin{proposition}
	Suppose $p(x_1,\dots,x_m)$ is a smooth (or a holomorphic) function and $\alpha$ is a multi-index of $p(x_1,\dots,x_m)$. $p$ can be express as 
	\begin{equation}\nonumber
		p(x_1,\dots,x_m) = {c_\alpha }{x^\alpha } + \bar p(x_1,\dots,x_m).
	\end{equation}
	In above equation, $c_\alpha = {{{\partial ^\alpha p}}}/{{\partial {x^\alpha }}}(0)$ is a nonzero coefficient and $\bar p(x_1,\dots,x_m)$ is a function with ${{{\partial ^\alpha \bar p}}}/{{\partial {x^\alpha }}}(0) = 0$.
\end{proposition}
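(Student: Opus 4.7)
The plan is to construct $\bar p$ by an explicit subtraction and then verify the two required properties by direct differentiation. Since $\alpha$ is a multi-index of $p$, the definition of multi-index-of-a-function immediately gives $\partial^\alpha p/\partial x^\alpha(0)\neq 0$, so whatever constant is peeled off as the coefficient $c_\alpha$ will automatically be nonzero. I would therefore set $\bar p(x) := p(x) - c_\alpha\, x^\alpha$, and the entire proposition reduces to computing $\partial^\alpha \bar p /\partial x^\alpha(0)$.

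The only substantive step is a monomial-derivative identity: by an easy induction on $|\alpha|$ (differentiating $x_1^{\alpha_1}\cdots x_m^{\alpha_m}$ one variable at a time), one has $\partial^\alpha x^\alpha / \partial x^\alpha \equiv \alpha_1!\,\alpha_2!\cdots \alpha_m!$, a positive constant independent of $x$. Using linearity of $\partial^\alpha/\partial x^\alpha$, I then obtain
\begin{equation*}
\frac{\partial^\alpha \bar p}{\partial x^\alpha}(0) \;=\; \frac{\partial^\alpha p}{\partial x^\alpha}(0) \;-\; c_\alpha\cdot \alpha_1!\cdots \alpha_m!,
\end{equation*}
so to kill the right-hand side the correct normalization is $c_\alpha = \tfrac{1}{\alpha!}\,\partial^\alpha p/\partial x^\alpha(0)$, the usual Taylor coefficient. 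The factor $1/\alpha!$ appears to have been inadvertently dropped in the stated formula for $c_\alpha$, and I would either correct the statement or interpret $\partial^\alpha p/\partial x^\alpha(0)$ in the stated formula as shorthand for this normalized Taylor coefficient.

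I do not anticipate any genuine obstacle: the content is essentially the first-order Taylor expansion, isolating the single monomial $x^\alpha$ with its leading coefficient and bundling everything else into the remainder $\bar p$. The smooth and holomorphic cases are treated identically, since only linearity of $\partial^\alpha$ and the monomial-derivative identity are used; both apply termwise in the holomorphic setting (via the convergent power series) and pointwise in the smooth setting.
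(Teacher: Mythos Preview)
Your argument is correct, and the paper in fact states this proposition without proof; it is treated as an immediate consequence of the definitions (essentially Taylor's theorem isolating a single monomial). Your approach---define $\bar p := p - c_\alpha x^\alpha$ and verify $\partial^\alpha \bar p/\partial x^\alpha(0)=0$ by direct computation---is the natural one, and your observation about the missing $1/\alpha!$ normalization in the stated formula for $c_\alpha$ is well taken: as written, $c_\alpha = \partial^\alpha p/\partial x^\alpha(0)$ would give $\partial^\alpha \bar p/\partial x^\alpha(0) = c_\alpha(1-\alpha!)$, which vanishes only when $\alpha! = 1$.
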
\par
For convenience, let us denote the set of all the proper $k$-multi-indices of $p(x_1,\dots,x_m)$ by ${\cal I}_k (p)$ for $k=0,\dots,m$ and write ${\cal I} (p) = \bigcup_{k=0}^{m} {\cal I}_k (p)$ throughout this paper.\par
The rest of this section is divided into three subsections. Subsection A discusses several properties of multi-indices. Subsection B investigates the invariant multi-indices of a function under lower triangular coordinate transformations. In Subsection C, we propose two classification schemes of lower triangular forms.\par

\subsection{The Least Multi-index and Essential Multi-indices of Functions}
In this subsection, we investigate which multi-indices of a function may be more vital by exploring the relations between multi-indices. The following definition presents one of the ways to compare two multi-indices.
\begin{definition}
	$\alpha$ and $\beta$ are proper $k_\alpha$-multi-index and proper $k_\beta$-multi-index, respectively. Let $m = \mathrm{max}(k_\alpha,k_\beta) + 1$. We say that $\alpha$ is less than $\beta$ in lexicographical order, denoted by $\alpha \lessdot \beta $, if there exists an integer $i \in \{1,\dots,m\}$ such that $\alpha_i < \beta_i$ and $\alpha_j = \beta_j$ for all $j = 1,\dots,i-1$.
\end{definition}
\begin{example}
	As defined above, we have  $(2,3,9) \lessdot (2,5,1)$ and $(0,3) \lessdot
	(1,0,1)$.
\end{example}

\begin{definition}\label{def_least}
	 Let $I$ be a set whose members are all proper $i$-multi-indices. $\alpha \in I$ is said to be the least $i$-multi-index of $I$ if $\alpha \lessdot \beta$ holds for any $\beta \in I$ different from $\alpha$. Further let $p(x_1,\dots,x_m)$ be a smooth function (or a holomorphic function). We also call the least $i$-multi-index of $ {\cal I}_i (p)$ as the least $i$-multi-index of $p$, written as ${\cal L}_i(p)$.
\end{definition}
\begin{remark}
	${\cal L}_0(p)=0$ if and only if $p(0) \ne 0$. 
\end{remark}
\begin{example}
	Consider the following lower triangular form
	\begin{equation}\nonumber
		\begin{aligned}
			&{{\dot x}_1} = x_1 x_2^3+x_1^2 x_2 - {x_1} = f_1(x_1,x_2)\\
			&{{\dot x}_2} = x_2^2{x_3} + x_1 x_3  = f_2(x_1,x_2,x_3)\\
			&{{\dot x}_{\rm{3}}} = {x_3} + u\; .
		\end{aligned}
	\end{equation}
	We have ${\cal L}_2(f_1)=(1,3)$ and ${\cal L}_3(f_2)=(0,2,1)$.
\end{example}
\begin{definition}\label{def_prec}
	Let $\alpha$ and $\beta$ be multi-indices. If there exists a lower triangular  coordinate transformation $y = U(x)$ such that 
	\begin{equation}\nonumber
		{x^\alpha } = {c_\beta }{y^\beta } + h(y)
	\end{equation}
where $c_\beta \neq 0$ and the function $h(y)$ satisfies ${{{\partial ^\beta }h} / {\partial {y^\beta }}}(0) = 0$, then we say that $\beta$ is generated by $\alpha$, denoted by $\alpha \preceq \beta $. If $\alpha \preceq \beta $ and $\alpha \ne \beta $, we write $\alpha \prec \beta $.
\end{definition}
\begin{remark}
	Arbitrary proper $i$-multi-index ($i>0$) can be generated by the proper $i$-multi-index $(0,\dots,0,1)$. The $0$-multi-index $0$ can only generate itself and can only be generated by itself.
\end{remark}

\begin{example}
	Let $\alpha=(1,2,1)$, and select the following lower triangular coordinate transformation
	\begin{equation}\nonumber
		\begin{aligned}
			&{y_1} = {x_1}\\
			&{y_2} = {x_2} + x_1^2\\
			&{y_3} = {x_3} + {x_1},
		\end{aligned}
	\end{equation}
whose inverse transformation can be expressed as 
	\begin{equation}\nonumber
		\begin{aligned}
			&{x_1} = {y_1}\\
			&{x_2} = {y_2} - y_1^2\\
			&{x_3} = {y_3} - {y_1}.
		\end{aligned}
	\end{equation}
Substituting the above equations into $x^\alpha$ yields  
	\begin{equation}\nonumber
		\begin{aligned}
		{x^\alpha } &= {x_1}x_2^2{x_3} \\
		&= {y_1}y_2^2{y_3} - y_1^2y_2^2 - 2y_1^3{y_2}{y_3} + 2y_1^4{y_2} + y_1^5{y_3} - y_1^6 \; ;
		\end{aligned}
	\end{equation}
that is, $\alpha$ can generate at least the six 3-multi-indices as follows: $(1,2,1)$, $(2,2,0)$, $(3,1,1)$, $(4,1,0)$, $(5,0,1)$, and $(6,0,0)$. 
\end{example}

\begin{proposition}\label{prop_ngene}
	Let $\alpha$ and $\beta$ be proper $m_\alpha$-multi-index and proper $m_\beta$-multi-index, respectively. If $m_\alpha<m_\beta$ then $\alpha$ can not generate $\beta$. 
\end{proposition}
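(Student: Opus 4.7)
The plan is to exploit the structural fact that a lower triangular coordinate transformation preserves the "variable support" in a precise sense: the monomial $x^\alpha$ written in $y$-coordinates will still only depend on the first $m_\alpha$ coordinates. This, combined with the fact that any proper $m_\beta$-multi-index with $m_\beta > m_\alpha$ has $\beta_{m_\beta} \ge 1$, immediately kills any chance of generating $\beta$.

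Concretely, I would proceed as follows. First, invoke Lemma \ref{lem_tri_coord} to conclude that if $y=U(x)$ is any lower triangular coordinate transformation, then its inverse $x = U^{-1}(y)$ is also lower triangular, so each $x_i$ depends only on $y_1,\dots,y_i$. Since $\alpha$ is a proper $m_\alpha$-multi-index we have $\alpha_{m_\alpha+1}=\dots=0$, hence
\begin{equation}\nonumber
x^\alpha = x_1^{\alpha_1}\cdots x_{m_\alpha}^{\alpha_{m_\alpha}}
\end{equation}
depends, after substitution, only on $y_1,\dots,y_{m_\alpha}$.

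Next, suppose for contradiction that $\alpha \preceq \beta$ with $m_\beta > m_\alpha$, so that there exist $c_\beta\ne 0$ and a function $h$ with $\partial^\beta h/\partial y^\beta(0)=0$ satisfying $x^\alpha = c_\beta y^\beta + h(y)$. Applying $\partial^\beta/\partial y^\beta$ to the right-hand side at $y=0$ yields $c_\beta\,\beta_1!\cdots\beta_{m_\beta}!\ne 0$, because the derivative of $h$ vanishes by hypothesis. On the other hand, since $\beta$ is a proper $m_\beta$-multi-index we have $\beta_{m_\beta}\ge 1$, so $\partial^\beta/\partial y^\beta$ includes at least one differentiation with respect to $y_{m_\beta}$. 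But the left-hand side $x^\alpha$ is independent of $y_{m_\beta}$ (as $m_\beta>m_\alpha$), so $\partial^\beta(x^\alpha)/\partial y^\beta \equiv 0$, contradicting the nonvanishing we just derived.

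There is no real obstacle here; the only thing to be careful about is correctly identifying that a proper $m_\beta$-multi-index forces $\beta_{m_\beta}\ge 1$ (which is immediate from the definition) and that the inverse of a lower triangular transformation is lower triangular (which is precisely the second claim of Lemma \ref{lem_tri_coord}). The proof is therefore a short two-line argument once these two observations are in hand.
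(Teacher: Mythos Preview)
Your argument is correct. The paper states this proposition without proof, treating it as self-evident from the definitions; your write-up supplies exactly the natural justification. The only ingredients used---that the inverse of a lower triangular transformation is again lower triangular (Lemma~\ref{lem_tri_coord}) and that a proper $m_\beta$-multi-index has $\beta_{m_\beta}\ge 1$---are precisely what make the result immediate, and your contradiction via $\partial^\beta/\partial y^\beta$ is clean. One small stylistic remark: you do not even need to compute $c_\beta\,\beta_1!\cdots\beta_{m_\beta}!$ explicitly; it suffices to observe that $\partial^\beta(x^\alpha)/\partial y^\beta(0)=0$ (since $x^\alpha$ is independent of $y_{m_\beta}$) while the definition of $\alpha\preceq\beta$ requires this same quantity to equal $c_\beta\,\beta!\ne 0$. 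But this is a matter of presentation, not substance.
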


\begin{theorem}\label{thm_gene_condi}
	$\alpha$ and $\beta$ are proper $m_\alpha$-multi-index and proper $m_\beta$-multi-index, respectively,  satisfying $m_\alpha \geq m_\beta > 0$ and $\alpha \neq \beta$. Then $\alpha \prec \beta $  if and only if for all $i=1,\dots,m_\alpha$ we have
	\begin{equation}\label{neq_prec}
		\sum\limits_{j = 1}^i {{\alpha _j}}  \le \sum\limits_{j = 1}^i {{\beta _j}} .
	\end{equation}
\end{theorem}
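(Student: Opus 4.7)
The plan is to prove necessity by a power-series count and sufficiency by composing finitely many elementary lower triangular transformations. The basic engine is the transitivity of $\preceq$: by Lemma \ref{lem_tri_coord} the composition of two lower triangular coordinate changes is again lower triangular, so $\alpha \preceq \gamma$ and $\gamma \preceq \beta$ imply $\alpha \preceq \beta$. Hence it is enough for sufficiency to reach $\beta$ from $\alpha$ by a chain of one-variable perturbations.

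For necessity, suppose $\alpha \prec \beta$ is witnessed by $y = U(x)$, and let $x = \tilde U(y)$ be its inverse, also lower triangular by Lemma \ref{lem_tri_coord}. Each $x_k = \tilde U_k(y_1,\ldots,y_k)$ admits a Taylor expansion in which every appearing monomial $y^\gamma$ has support contained in $\{1,\ldots,k\}$ and $|\gamma| \geq 1$. Expanding $x^\alpha = \prod_{k=1}^{m_\alpha}\prod_{r=1}^{\alpha_k} x_k$ as a formal series in $y$, the coefficient of $y^\beta$ is a sum over decompositions $\beta = \sum_{k,r} \gamma^{(k,r)}$ with each $\gamma^{(k,r)}$ supported in $\{1,\ldots,k\}$ and $|\gamma^{(k,r)}| \geq 1$. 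This coefficient is nonzero by hypothesis, so fix such a decomposition; for any $i$, each pair $(k,r)$ with $k \leq i$ has its full mass in the first $i$ coordinates and so contributes at least $1$ to $\sum_{j \leq i}\beta_j$, while the other pairs contribute nonnegatively. The first kind of pairs number $\sum_{k \leq i}\alpha_k$, whence $\sum_{j \leq i}\beta_j \geq \sum_{j \leq i}\alpha_j$.

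For sufficiency, I would introduce three elementary moves, each carried out by changing a single coordinate. A \emph{relocate} move uses $x_k = y_k + y_j^r$ for $j<k$, $r \geq 1$, and $\alpha_k \geq 1$, giving $\alpha \preceq \alpha - e_k + r e_j$ by extracting the $s=1$ term of $(y_k+y_j^r)^{\alpha_k}$ with coefficient $\alpha_k \neq 0$. A \emph{lateral-growth} move uses $x_k = y_k(1+y_j)$ for $j < k$ and $\alpha_k \geq 1$, yielding $\alpha \preceq \alpha + e_j$ without altering the $k$th component. A \emph{self-growth} move uses $x_k = y_k + y_k^2$ for $\alpha_k \geq 1$ and yields $\alpha \preceq \alpha + e_k$. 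Given $\alpha \neq \beta$ satisfying the partial-sum hypothesis, let $i^*$ be the least index at which the two multi-indices disagree; the hypothesis forces $\alpha_{i^*} < \beta_{i^*}$. Let $K$ be the smallest $k > i^*$, if any, with $\sum_{j \leq k}\alpha_j = \sum_{j \leq k}\beta_j$. When $K$ is finite, the definitions force $\alpha_K \geq 1$, and the relocate move with $r=1$ sends $\alpha$ to $\alpha' = \alpha - e_K + e_{i^*}$. When no such $K$ exists, a lateral-growth move at $(j,k) = (i^*, m_\alpha)$ (if $m_\alpha > i^*$) or a self-growth move at $i^*$ (if $m_\alpha = i^*$, where $\alpha_{i^*} \geq 1$ because $\alpha$ is a proper $m_\alpha$-multi-index) sends $\alpha$ to $\alpha' = \alpha + e_{i^*}$. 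In both cases the partial-sum condition is preserved and the nonnegative integer $D(\alpha,\beta) := \sum_{i=1}^{m_\beta}\bigl(\sum_{j\leq i}\beta_j - \sum_{j\leq i}\alpha_j\bigr)$ strictly decreases. Under the hypothesis, $D=0$ forces $\alpha=\beta$, so the induction terminates and transitivity yields $\alpha \prec \beta$.

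The main obstacle is the bookkeeping for sufficiency, in particular ensuring $m_{\alpha'} \geq m_\beta$ so that the next inductive step is legitimate. The only subtle case is a relocate move with $K = m_\alpha$ and $\alpha_K = 1$, which erases the last nonzero coordinate of $\alpha$. Here the minimality of $K$, combined with $\beta_{m_\beta} \geq 1$, rules out the configuration in which every position between $m_\beta$ and $K-1$ has $\alpha_j = 0$; indeed, such a configuration would produce a cumulative-sum equality at some index strictly smaller than $K$, contradicting the choice of $K$. Hence at least one coordinate of $\alpha'$ still lies at position $\geq m_\beta$, and the inductive argument closes.
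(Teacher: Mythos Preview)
Your argument is correct, but it follows a genuinely different strategy from the paper's own proof.

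For necessity, the paper simply declares it ``obvious'' and moves on; you actually supply the combinatorial reason via a Taylor-coefficient (Fa\`a di Bruno) count, which is the right content behind that word.

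For sufficiency, the paper argues by induction on $m_\alpha$. In the base case $m_\alpha=1$ it uses the single change $x_1=y_1+y_1^{\beta_1-\alpha_1+1}$. In the inductive step it splits on whether $\alpha_{k+1}<\beta_{k+1}$ or $\alpha_{k+1}\ge\beta_{k+1}$: in the first case it applies the inductive hypothesis to the first $k$ coordinates and then a one-variable change in $x_{k+1}$; in the second it first applies $x_{k+1}=\bar x_k+\bar x_{k+1}$ to produce an intermediate $\gamma$ with $\gamma_{k+1}=\beta_{k+1}$, and then invokes the inductive hypothesis on the first $k$ coordinates (taking some care that the unwanted terms produced by the binomial expansion do not accidentally contribute to the $\beta$-coefficient). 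So the paper's engine is dimension reduction.

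Your engine is instead a potential-function descent: you keep the ambient dimension fixed, introduce three elementary one-coordinate moves, and drive the nonnegative integer $D(\alpha,\beta)=\sum_{i\le m_\beta}\bigl(\sum_{j\le i}\beta_j-\sum_{j\le i}\alpha_j\bigr)$ strictly down while preserving the partial-sum hypothesis and the invariant $m_\alpha\ge m_\beta$. The delicate point---that a relocate move at $K=m_\alpha$ with $\alpha_K=1$ cannot push $m_{\alpha'}$ below $m_\beta$---you handle correctly: the assumed vanishing of $\alpha_j$ for $m_\beta\le j\le K-1$ forces $\sum_{j\le m_\beta-1}\alpha_j=\lvert\beta\rvert-1\ge\sum_{j\le m_\beta-1}\beta_j$, hence equality there, contradicting the minimality of $K$ (or the strict inequality at $i^*$ when $m_\beta-1=i^*$).

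What each approach buys: the paper's induction on $m_\alpha$ keeps the number of coordinate changes small and transparent (essentially two per dimension), at the cost of a case split and a somewhat subtle ``no interference'' check in the second case. Your descent is more algorithmic and uniform---every step is one of three explicit moves and progress is certified by a single scalar---at the cost of a possibly longer chain of compositions and the bookkeeping on $m_{\alpha'}$ that you isolate in the final paragraph.
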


\begin{proof}
	The necessity is obvious, let us verify the sufficiency. We first consider the case of  $m_\alpha = 1$. It is clear that $m_\beta = 1$ in this case. From $\alpha \ne \beta$ and \eqref{neq_prec}, $0<\alpha_1 < \beta_1$ holds. Let $h(x_1) = x_1^{\alpha_1}$ and $y_1$ a new coordinate satisfying
	\begin{equation}\nonumber
		{x_1} = {y_1} + y_1^{{\beta _1} - {\alpha _1} + 1}\;.
    \end{equation}	
	Since substituting the above equation into $h(x_1)$ yields
	\begin{equation}\nonumber
		h(x_1)=h_y(y_1) = \sum\limits_{i = 0}^{{\alpha _1}} {\left( {\begin{matrix}
					{{\alpha _{_1}}}\\
					i
			\end{matrix}} \right)y_1^{{\alpha _1} - i}y_1^{i({\beta _1} - {\alpha _1} + 1)}} \; ,
	\end{equation} 
	we have
	\begin{equation}\nonumber
		\frac{{{\partial ^{{\beta _1}}}h_y}}{{\partial {y_1^{{\beta _1}}}}}(0) = {\alpha_1} \cdot {\beta_1}! \ne 0\;,
	\end{equation}	
	and then $\alpha \prec \beta$ holds for the case.	 
	
	Suppose that, for an integer $k>0$ and all the $m_\alpha =1,\dots,k$, $\alpha \prec \beta$ holds when  \eqref{neq_prec} is satisfied. We now prove that, \eqref{neq_prec} still implies $\alpha \prec \beta$ when $\alpha$ is a proper $(k+1)$-multi-index. To this end, let us consider the two cases as discussed below. For the case ${\alpha_{k+1}} < {\beta_{k+1}}$, one can construct a family of new coordinates $y_1,\dots,y_k$ satisfying
	\begin{equation}\nonumber
		\begin{aligned}
			&{x_1} = {U_1}\left( {{y_1}} \right)\\
			&\;\vdots \\
			&{x_{k}} = {U_{k}}\left( {{y}, \dots ,{y_{k}}} \right).
		\end{aligned}
	\end{equation}
	and $x_1^{\alpha_1} \dots x_k^{\alpha_k} = c_{(\beta_1,\dots,\beta_k)} y_1^{\beta_1} \dots y_k^{\beta_k} + s(y)$ where the coefficient $c_{(\beta_1,\dots,\beta_k)} \ne 0$ and  $\partial^{(\beta_1,\dots,\beta_k)} s / \partial y^{(\beta_1,\dots,\beta_k)} (0) = 0$. If we choose the next coordinate $y_{k+1}$ satisfying
	\begin{equation}\nonumber
		{x_{k+1}} = {y_{k+1}} + y_{k+1}^{{\beta _{k+1}} - {\alpha _{k+1}} + 1}
	\end{equation}
	then 
	\begin{equation}\label{eq_case1}
		\begin{aligned}
			x^\alpha = & \left( c_{(\beta_1,\dots,\beta_k)} y_1^{\beta_1}  \dots y_k^{\beta_k} + s(y)\right)  \; \cdot \\
			&   \sum\limits_{i = 0}^{{\alpha _{k+1}}} {\left( {\begin{matrix}
						{{\alpha _{_{k+1}}}}\\
						i
				\end{matrix}} \right)y_{k+1}^{{\alpha _{k+1}} - i}y_{k+1}^{i({\beta _{k+1}} - {\alpha _{k+1}} + 1)}}
		\end{aligned}
	\end{equation}
	is obtained. There is a term  
	\begin{equation}\nonumber
		c_{(\beta_1,\dots,\beta_k)}\, \cdot \, \alpha_{k+1} \, \cdot \, y_1^{\beta_1} \dots y_{k+1}^{\beta_{k+1}}
	\end{equation}	
	in the right-hand side of \eqref{eq_case1}, which implies $\alpha \prec \beta$. The other case is ${\alpha_{k+1}} \ge {\beta_{k+1}}$. Taking a family of new coordinates $\bar x_1,\dots,\bar x_{k+1}$ satisfying
	\begin{equation}\nonumber
	\begin{aligned}		
		&{x_1} = V_1(\bar x_1) = {{\bar x}_1}\\
		&\; \vdots\\
		&{x_k} = V_k(\bar x_k) = {{\bar x}_k}\\
		&{x_{k + 1}} = V_{k+1}(\bar x) ={{\bar x}_k} + {{\bar x}_{k + 1}},
	\end{aligned}
	\end{equation}	
	denoted it by $( x_1,\dots,x_{k+1})=V(\bar x_1,\dots,\bar x_{k+1})$, we compute the function $h_1(\bar x) = x^\alpha$. 	
	\begin{equation}\nonumber
		h_1(\bar x) = \bar x_1^{{\alpha _1}} \dots \bar x_{k - 1}^{{\alpha _{k - 1}}}\sum\limits_{i = 0}^{{\alpha _{k + 1}}} {\left( {\begin{matrix}
					{{\alpha _{k + 1}}}\\
					i
			\end{matrix}} \right)\bar x_k^{{\alpha _k} + i}\bar x_{k + 1}^{{\alpha _{k + 1}} - i}} 
	\end{equation}
	The right-hand side of the above equation includes the term	
	\begin{equation}
		\left( {\begin{matrix}
				{{\alpha _{k + 1}}}\\
				{{\alpha _{k + 1}} - {\beta _{k + 1}}}
		\end{matrix}} \right)\bar x_1^{{\alpha _1}} \dots \bar x_{k - 1}^{{\alpha _{k - 1}}}\bar x_k^{({\alpha _k} + {\alpha _{k + 1}} - {\beta _{k + 1}})}\bar x_{k + 1}^{{\beta _{k + 1}}}.
	\end{equation}
	Let us denote the multi-index of this term by $\gamma  =( \gamma_1,\dots,\gamma_{k+1}) = ({\alpha _1},\dots,{\alpha _{k - 1}},{\alpha _k} + {\alpha _{k + 1}} - {\beta _{k + 1}},{\beta _{k + 1}}) $, it is obvious that $\alpha \preceq  \gamma$. Additionally, for any $l = 1, \dots ,k$, the inequality $\textstyle{\sum_{j = 1}^l {{\gamma _j}}} \le \textstyle{ \sum_{j = 1}^l {{\beta _j}}}$ holds, and then  $(\gamma_1,\dots,\gamma_k) \prec (\beta_1,\dots,\beta_k)$ is obtained. This means that we can find a new family of coordinates $(y_1, \dots, y_{k+1})$ satisfying 
	\begin{equation}\nonumber
		\begin{aligned}
			&{{\bar x}_1} = {W_1}\left( {{y_1}} \right)\\
			&\; \vdots \\
			&{{\bar x}_k} = {W_k}\left( {{y_1}, \dots ,{y_k}} \right)\\
			&{{\bar x}_{k + 1}} = {W_{k+1}(y_{k + 1})}= {y_{k + 1}}
		\end{aligned}
	\end{equation}	
	 such that $\beta$ is a multi-index of $h_2(y) ={\bar x}^\gamma = (W_1(y),\dots,W_{k+1}(y))^\gamma=W(y)^\gamma $ with respect to $y$-coordinates; that is, $\gamma \prec \beta$. To prove $\alpha \prec \beta$, let $\delta \neq \gamma$ be another multi-index of $h_1(\bar x)$ with respect to $\bar{x}$. Since  $\delta_{k+1} \neq \beta_{k+1}$,  $\beta$ is not a multi-index of $h_3(y) = \bar x ^ \delta = W(y)^\delta$ with respect to $y$-coordinates. Thus, $\beta$ must be a multi-index of $h_4(y)=x^\alpha=(V(W(y)))^\alpha$ with respect to $y$-coordinates. 
	 
	 Therefore, $\alpha \prec \beta$ holds when we have \eqref{neq_prec}.
\end{proof}

From the above theorem, the following two corollaries are immediate consequences. 
\begin{corollary}
	Let $I$ be a set of multi-indices,  and take any $\alpha, \beta, \gamma \in I$. Then the relation $\preceq$ has the following properties: \par
	(i) $\alpha \preceq \alpha$;\par
	(ii) both $\alpha \preceq \beta$ and $\beta \preceq \alpha$ imply $\alpha = \beta$; \par 
	(iii) both $\alpha \preceq \beta$ and $\beta \preceq \gamma$ imply $\alpha \preceq \gamma$. \\
	That is, $\preceq$ is a partial order on the ground set $I$.  
\end{corollary}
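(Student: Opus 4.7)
The plan is to verify the three partial-order axioms separately, leaning on two earlier results: Theorem~\ref{thm_gene_condi}, which converts $\alpha \prec \beta$ into the combinatorial condition $\sum_{j=1}^{i}\alpha_j \le \sum_{j=1}^{i}\beta_j$ on prefix sums; and the remark that the zero multi-index can only generate, and be generated by, itself.

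For reflexivity (i), the identity coordinate transformation $y = x$ is lower triangular and yields $x^{\alpha} = 1\cdot y^{\alpha} + 0$, so Definition~\ref{def_prec} is satisfied with $c_{\alpha}=1$ and residual $h \equiv 0$.

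For antisymmetry (ii), I would argue by contradiction: assume $\alpha \preceq \beta$, $\beta \preceq \alpha$, and $\alpha \ne \beta$, so $\alpha \prec \beta$ and $\beta \prec \alpha$ by Definition~\ref{def_prec}. The case in which either of $\alpha,\beta$ equals the zero multi-index is dispatched directly by the remark. Otherwise both have positive dimension, and Proposition~\ref{prop_ngene} applied to $\alpha \prec \beta$ and to $\beta \prec \alpha$ forces $m_{\alpha} = m_{\beta}$. Theorem~\ref{thm_gene_condi} then gives matching prefix-sum inequalities in both directions, forcing $\sum_{j=1}^{i}\alpha_j = \sum_{j=1}^{i}\beta_j$ for every $i$, hence $\alpha_i = \beta_i$ entry-wise, contradicting $\alpha \ne \beta$.

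For transitivity (iii), I would first reduce to the strict case $\alpha \prec \beta \prec \gamma$, since if either $\alpha = \beta$ or $\beta = \gamma$ the conclusion is immediate. The remark precludes $\gamma = 0$, and the antisymmetry just proved precludes $\alpha = \gamma$; Proposition~\ref{prop_ngene} then supplies the dimension chain $m_{\alpha} \ge m_{\beta} \ge m_{\gamma} \ge 1$. The main obstacle is verifying the prefix-sum inequality of Theorem~\ref{thm_gene_condi} for every $i = 1,\dots,m_{\alpha}$, whereas $\alpha \prec \beta$ and $\beta \prec \gamma$ only supply it directly for $i \le m_{\alpha}$ and $i \le m_{\beta}$ respectively. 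For $i \le m_{\beta}$ the two bounds chain. For $m_{\beta} < i \le m_{\alpha}$ I would exploit that the prefix sums of a proper $m$-multi-index stabilize past position $m$: the inequality from $\alpha \prec \beta$ at such $i$ reads $\sum_{j=1}^{i}\alpha_j \le \sum_{j=1}^{i}\beta_j = |\beta|$, while the inequality from $\beta \prec \gamma$ evaluated at $i = m_{\beta}$, combined with $m_{\gamma} \le m_{\beta} < i$, gives $|\beta| \le \sum_{j=1}^{m_{\beta}}\gamma_j = |\gamma| = \sum_{j=1}^{i}\gamma_j$. Invoking Theorem~\ref{thm_gene_condi} in the sufficiency direction should then deliver $\alpha \prec \gamma$.
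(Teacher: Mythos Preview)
Your proposal is correct and follows exactly the route the paper intends: the corollary is stated there as an immediate consequence of Theorem~\ref{thm_gene_condi}, with no separate proof given, and your argument is precisely the natural unpacking of that claim via the prefix-sum characterization, supplemented by Proposition~\ref{prop_ngene} and the remark on the zero multi-index to handle the boundary cases. Your careful treatment of the range $m_{\beta} < i \le m_{\alpha}$ in the transitivity step, using stabilization of prefix sums past the proper dimension, is the only nontrivial wrinkle and you handle it cleanly.
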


\begin{corollary}
	$\alpha$ and $\beta$ are proper $m_\alpha$-multi-index and proper $m_\beta$-multi-index, respectively, with $m_\alpha \geq m_\beta >0$. $\alpha \nprec \beta$ if and only if for some $i \in \{1,\dots,m_\alpha \}$ the inequality
	\begin{equation}\nonumber
		\sum\limits_{j = 1}^i {{\alpha _j}}  > \sum\limits_{j = 1}^i {{\beta _j}}. 
	\end{equation}
	holds. In the case of $m_\alpha = m_\beta$, $\alpha \npreceq \beta$ and $\beta \npreceq \alpha$ are both true if and only if there exist two integers  $i_1 , i_2 \in \{1,\dots,m_\alpha \}$ such that the following inequalities hold.
	\begin{equation}\nonumber
		\sum\limits_{j = 1}^{{i_{\rm{1}}}} {{\alpha _j}}  < \sum\limits_{j = 1}^{{i_{\rm{1}}}} {{\beta _j}} ,{\kern 1pt} \sum\limits_{j = 1}^{{i_{\rm{2}}}} {{\alpha _j}}  > \sum\limits_{j = 1}^{{i_{\rm{2}}}} {{\beta _j}}
	\end{equation}
\end{corollary}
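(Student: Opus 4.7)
The plan is to derive both claims as direct contrapositives of Theorem \ref{thm_gene_condi}. That theorem gives, under the standing hypotheses $\alpha \neq \beta$ and $m_\alpha \geq m_\beta > 0$, the equivalence between $\alpha \prec \beta$ and the componentwise partial-sum condition $\sum_{j=1}^i \alpha_j \leq \sum_{j=1}^i \beta_j$ for every $i \in \{1,\dots,m_\alpha\}$. Negating each side of this biconditional yields at once that $\alpha \nprec \beta$ is equivalent to the existence of some index $i$ with $\sum_{j=1}^i \alpha_j > \sum_{j=1}^i \beta_j$, which is exactly the first claim. The degenerate case $\alpha=\beta$ is excluded by the hypotheses inherited from the theorem (note in particular that if $m_\alpha > m_\beta$, then $\alpha \neq \beta$ is automatic since the two indices are of different proper type).

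For the second claim, where $m_\alpha = m_\beta$, I would first upgrade the theorem's characterization of $\prec$ into a characterization of the weaker relation $\preceq$. Combining Theorem \ref{thm_gene_condi} with the reflexivity $\alpha \preceq \alpha$ gives: $\alpha \preceq \beta$ holds if and only if $\sum_{j=1}^i \alpha_j \leq \sum_{j=1}^i \beta_j$ for every $i \in \{1,\dots,m_\alpha\}$. Indeed, when $\alpha=\beta$ both sides are trivially true (equalities throughout), and when $\alpha \neq \beta$ the relation $\preceq$ coincides with $\prec$ and the claim reduces to the theorem.

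With this reformulation of $\preceq$ in hand, "$\alpha \npreceq \beta$" translates into the existence of an index $i_2$ with $\sum_{j=1}^{i_2}\alpha_j > \sum_{j=1}^{i_2}\beta_j$, and symmetrically "$\beta \npreceq \alpha$" translates into the existence of an index $i_1$ with $\sum_{j=1}^{i_1}\beta_j > \sum_{j=1}^{i_1}\alpha_j$, i.e., $\sum_{j=1}^{i_1}\alpha_j < \sum_{j=1}^{i_1}\beta_j$. Conjoining the two yields the second claim. The main (and only real) subtlety is to handle the equality case carefully when lifting $\prec$ to $\preceq$, so that reflexivity is not lost in the statement; beyond this purely logical packaging, there is no substantive obstacle, as the corollary amounts to a rearrangement of the theorem rather than a new argument.
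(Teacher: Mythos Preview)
Your proposal is correct and matches the paper's approach: the paper states this corollary as an ``immediate consequence'' of Theorem~\ref{thm_gene_condi} without giving any explicit argument, and your contrapositive derivation (together with the reflexivity observation to pass from $\prec$ to $\preceq$) is exactly the intended unpacking. Your remark that the case $\alpha=\beta$ must be read as implicitly excluded in the first claim is also on point, since otherwise that equivalence fails.
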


\begin{definition}\label{def_we_e}
	Let $I$ be a set of multi-indices and $\alpha \in I$ a proper $i$-multi-index. $\alpha$ is said to be a weakly essential $i$-multi-index of $I$ if there is no another proper $i$-multi-index of $I$ that can generate $\alpha$. If $\alpha' \nprec \alpha$ holds for any $\alpha' \in I$, we say that $\alpha$ is an essential $i$-multi-index of $I$. $p(x_1,\dots,x_m)$ is a smooth function (or a holomorphic function) and $\beta$ is a proper $i$-multi-index of $p$. $\beta$ is said to be a weakly essential $i$-multi-index of $p$ if $\beta$ is a weakly essential $i$-multi-index of ${\cal I} (p)$. Moreover, if $\beta$ is an essential $i$-multi-index of ${\cal I}(p)$, we say that $\beta$ is an essential $i$-multi-index of $p$. 
\end{definition}

\begin{lemma}
	$p(x_1,\dots,x_m)$ is a smooth function (or a holomorphic function) and $x = V(y)$ is a lower triangular coordinate transformation. For an $m$-multi-index $\alpha  = ({\alpha _1}, \dots ,{\alpha _m}) \ne 0$ and the function $q(y)=p(V(y))$, we have
	\begin{equation}\label{eq_pd}
		\begin{aligned}
			\frac{{{\partial ^\alpha }q}}{{\partial {y^\alpha }}} = \sum\limits_{\beta  \preceq \alpha } {\sum\limits_{\alpha  = \sum\limits_{k,i} {{\gamma ^{\beta,k,i}}} } {\left( {\frac{{{\partial ^\beta }q}}{{\partial {x^\beta }}}\prod\limits_{k = 1}^m {\prod\limits_{i = 1}^{{\beta _k}} {\frac{{{\partial ^{{\gamma ^{\beta,k,i}}}}{x_k}}}{{\partial {y^{{\gamma ^{\beta,k,i}}}}}}} } } \right)} } 
		\end{aligned}
	\end{equation}
	where $\beta = (\beta_1,\dots,\beta_m)$ and every $\gamma ^{\beta,k,i}$ is a $k$-multi-index.
\end{lemma}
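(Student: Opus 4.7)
This identity is essentially the multivariate Faà di Bruno chain-rule formula specialized to a triangular change of coordinates, so my plan is to prove it by induction on $|\alpha|$. The distinctive feature compared with the generic version is the constraint $\beta \preceq \alpha$ in the outer sum, which is what encodes the lower triangularity of $V$ via Theorem \ref{thm_gene_condi}.

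For the base case $|\alpha|=1$, I write $\alpha = e_j$ and apply the ordinary chain rule to obtain
\begin{equation}\nonumber
\frac{\partial q}{\partial y_j}(y) = \sum_{k=1}^{m} \frac{\partial p}{\partial x_k}(V(y))\,\frac{\partial V_k}{\partial y_j}(y).
\end{equation}
Because $V$ is lower triangular, $\partial V_k/\partial y_j$ is identically zero for $k<j$, so only $\beta = e_k$ with $k\ge j$ contributes; by Theorem \ref{thm_gene_condi} this is exactly the condition $e_k \preceq e_j$, and the unique admissible decomposition is $\gamma^{e_k,k,1}=e_j$. For the inductive step, assuming the formula for all multi-indices of weight $N$ and taking $\alpha' = \alpha + e_j$, I differentiate both sides of \eqref{eq_pd} in $y_j$ using the generalized Leibniz rule. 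Each differentiation acts on a single factor of a term, and does one of two things: (a) it hits $\partial^\beta p/\partial x^\beta$ evaluated at $V(y)$, requiring one further application of the chain rule and producing $\sum_l (\partial^{\beta+e_l} p/\partial x^{\beta+e_l})(\partial V_l/\partial y_j)$, which replaces $\beta$ by $\beta+e_l$ while appending $e_j$ as a new $\gamma$ at slot $(l,\beta_l+1)$; or (b) it hits one of the factors $\partial^{\gamma^{\beta,k,i}}V_k/\partial y^{\gamma^{\beta,k,i}}$, incrementing that single multi-index by $e_j$. In either case the total $\sum_{k,i}\gamma^{\beta,k,i}$ increases by $e_j$, matching $\alpha'$.

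The hard part will be the combinatorial bookkeeping: ensuring that every nonvanishing term on the right-hand side of \eqref{eq_pd} for $\alpha'$ is produced exactly once, and that only terms with $\beta \preceq \alpha'$ actually appear. The first issue I would handle by treating the collection $\{\gamma^{\beta,k,i}\}_{i=1}^{\beta_k}$ as an ordered indexed family, so that slot insertions in case (a) and increments in case (b) give well-defined, non-colliding rewrites of each parent term. The second issue is where the triangularity of $V$ and Theorem \ref{thm_gene_condi} interact decisively: lower triangularity forces $\partial^{\gamma^{\beta,k,i}} V_k/\partial y^{\gamma^{\beta,k,i}}$ to vanish whenever $\gamma^{\beta,k,i}$ is not a proper $k'$-multi-index with $k'\le k$, and once every surviving $\gamma^{\beta,k,i}$ is of proper length at most $k$, summing $\alpha = \sum_{k,i}\gamma^{\beta,k,i}$ yields the partial-sum inequalities $\sum_{j=1}^{i}\beta_j \le \sum_{j=1}^{i}\alpha_j$ for every $i$, which by Theorem \ref{thm_gene_condi} is precisely the relation $\beta \preceq \alpha$. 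The inductive step preserves this property because case (a) adds $e_l$ to $\beta$ only when $l\ge j$ (triangularity at the newly inserted $\gamma = e_j$), and case (b) leaves $\beta$ unchanged while incrementing some $\gamma^{\beta,k,i}$ in a direction still compatible with its being a proper $\le k$-multi-index.
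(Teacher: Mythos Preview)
Your proposal is correct and follows the same inductive strategy as the paper: single-derivative chain rule as the base case, then one further differentiation in $y_j$ split into the two cases (a) hitting $\partial^\beta p/\partial x^\beta$ and (b) hitting a factor $\partial^{\gamma}x_k/\partial y^{\gamma}$, with Theorem~\ref{thm_gene_condi} supplying the partial-sum inequalities that force each surviving $\beta$ to satisfy $\beta\preceq\alpha$. The only procedural difference is that the paper orders the differentiations from $y_m$ down to $y_1$, so that in the inductive step one always has $\alpha=(0,\dots,0,\alpha_j,\dots,\alpha_m)$ and applies $\partial/\partial y_j$ at the leftmost occupied index; your version with $\alpha'=\alpha+e_j$ for arbitrary $j$ works equally well, and your verification that $\beta+e_l\preceq\alpha+e_j$ whenever $l\ge j$ (triangularity) is exactly the content of the paper's check that every $\bar\beta\preceq\alpha'$ with $\bar\beta\npreceq\alpha$ still arises from some predecessor $\preceq\alpha$.
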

\begin{proof}
	Let $\alpha =(0,\dots,0,\alpha_i,0,\dots,0)= (0,\dots,0,1,0,\dots,0)$ where $1 \le i \le m$, then
	\begin{equation}\nonumber
		\frac{{{\partial ^{\alpha}}q}}{{\partial {y^\alpha }}} = \frac{{\partial q}}{{\partial {y_i}}} = \frac{{\partial q}}{{\partial {x_i}}}\frac{{\partial {x_i}}}{{\partial {y_i}}} +  \dots  + \frac{{\partial q}}{{\partial {x_m}}}\frac{{\partial {x_m}}}{{\partial {y_i}}}
	\end{equation}
The equation above implies that \eqref{eq_pd} is satisfied in this case. 

Assume \eqref{eq_pd} holds for a nonzero multi-index $\alpha  = ({\alpha _1}, \dots ,{\alpha _m}) = (0, \dots ,0,{\alpha _j}, \dots ,{\alpha _m})$, where $1 \leq j \leq m$ and $\alpha_j \ge 0$. Let $\alpha ' = (0, \dots ,0,{\alpha _j}{\rm{ + 1}},{\alpha _{j + 1}}, \dots ,{\alpha _m})$. For all $\beta$ satisfying $\beta \preceq \alpha$, $\beta \preceq \alpha'$ can be deduced by using Theorem \ref{thm_gene_condi}. We now focus on the case of $\bar\beta \preceq \alpha'$ but $\bar\beta \npreceq \alpha$. There exists an integer $k \in \{j,\dots,m\}$ such that $\textstyle{\sum_{i = 1}^k {{\bar\beta _i}}}  > \textstyle{\sum_{i = 1}^k {{\alpha _i}}} $ and $\textstyle{\sum_{i = 1}^l {{\bar\beta_i}}}  \le \textstyle{\sum_{i = 1}^l {{\alpha _i}}} $ for all $l=1,\dots,k-1$. Comparing $\alpha'$ to $\alpha$, the relation
\begin{equation}\nonumber
	(0, \dots ,0,{\bar\beta_j}, \dots ,{\bar\beta_{k - 1}},{\bar\beta_k} - 1,{\bar\beta_{k + 1}}, \dots ,{\bar\beta_m}) \preceq \alpha
\end{equation}
must hold for this case. Then a direct calculation presented by \eqref{eq_pdc} shows that \eqref{eq_pd} holds for $\alpha'$.
	\begin{figure*}[!t]
	\begin{equation}\label{eq_pdc}
		\begin{aligned}
			&\frac{\partial^{\alpha^{\prime}} q}{\partial y^{\alpha^{\prime}}}=\frac{\partial^{\alpha^{\prime}} q}{\partial y_{m}^{\alpha_{m}} \cdots \partial y_{j+1}^{\alpha_{j+1}} \partial y_{j}^{\alpha_{j}+1}}=\partial\left(\sum_{\beta \preceq \alpha} \sum_{\alpha=\sum_{k, i}^{\beta, k, i}}\left(\frac{\partial^{\beta} q}{\partial x_{m}^{\beta_{m}} \cdots \partial x_{j}^{\beta_{j}}} \prod_{k=j}^{m} \prod_{i=1}^{\beta_{k}} \frac{\partial^{\gamma^{\beta, k, i}} x_{k}}{\partial y_{k}^{\gamma_{k}^{\beta, k, i}} \cdots \partial y_{j}^{\gamma_{j}^{\beta, k, i}}}\right)\right) \large{\bigg /} \partial y_{j}\\
			&\quad =\sum_{\beta \preceq \alpha} \sum_{\alpha=\sum_{k, i}^{\beta, k, i}}\left(\left(\sum_{l=j}^{m} \frac{\partial^{\beta} q}{\partial x_{m}^{\beta_{m}} \cdots \partial x_{l}^{\beta_{l}+1} \cdots \partial x_{j}^{\beta_{j}}} \frac{\partial x_{l}^{\beta_{l}+1}}{\partial y_{j}}\right) \prod_{k=j}^{m} \prod_{i=1}^{\beta_{k}} \frac{\partial^{\gamma^{k, i}} x_{k}}{\partial y_{k}^{\gamma_{k}^{k, i}} \cdots \partial y_{j}^{\gamma_{j}^{k, i}}}+\frac{\partial^{\beta} q}{\partial x_{m}^{\beta_{m}} \cdots \partial x_{j}^{\beta_{j}}} \cdot\right. \\
			&\qquad \left.\partial\left(\prod_{k=j}^{m} \prod_{i=1}^{\beta_{k}} \frac{\partial^{\gamma^{\beta, k, i}} x_{k}}{\partial y_{k}^{\gamma_{k}^{k, i}} \cdots \partial y_{j}^{\gamma_{j}^{k, i}}}\right) \large{\bigg/} \partial y_{j}\right)=\sum_{\beta^{\prime} \preceq \alpha^{\prime}} \sum_{\alpha^{\prime}=\sum_{k, i} \lambda^{\beta^{\prime}, k, i}}\left(\frac{\partial^{\beta^{\prime}} q}{\partial x_{m}^{\beta_{m}^{\prime}} \cdots \partial x_{j}^{\beta_{j}^{\prime}}} \prod_{k=j}^{m} \prod_{i=1}^{\beta_{k}^{\prime}} \frac{\partial^{\gamma^{\beta^{\prime}, k, i}} x_{k}}{\partial y_{k}^{\lambda_{k}^{\beta^{\prime}, k, i}} \cdots \partial y_{j}^{\lambda_{j}^{\beta^{\prime}, k, i}}}\right)
		\end{aligned}
	\end{equation}
	\hrulefill
	\end{figure*}
	This proves \eqref{eq_pd}. 	
\end{proof}

\begin{proposition}\label{prop_wess_inv}
	$\alpha$ is a weakly essential $i$-multi-index of a smooth function (or a holomorphic function) $p(x_1,\dots,x_m)$ if and only if $\alpha$ is still a weakly essential $i$-multi-index of the function $q(y)=p(V(y))$ where $x=V(y)$ is a coordinate transformation taking the form 
	\begin{equation}\label{eq_tri_coord_i}
		\begin{aligned}
			&x_1 = V_1(y_1),\dots,x_i = V_i(y_1,\dots,y_i),\\
			&x_{i+1}=V_{i+1}(y_{i+1})=y_{i+1},\dots,x_m=V_m(y_m)=y_m.
		\end{aligned}
	\end{equation} 
\end{proposition}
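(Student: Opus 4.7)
The plan is to exploit the special structure of the transformation \eqref{eq_tri_coord_i} via formula \eqref{eq_pd}, and to reduce to a single implication by symmetry. Since $V$ fixes $x_k=y_k$ for $k>i$ and its restriction to the first $i$ coordinates is a lower triangular transformation of $i$ variables, Lemma \ref{lem_tri_coord} shows that $V^{-1}$ is again of the form \eqref{eq_tri_coord_i}. Thus $p=q\circ V^{-1}$ and $q=p\circ V$ play symmetric roles, so it suffices to prove: if $\alpha$ is weakly essential in $p$, then $\alpha$ is weakly essential in $q$.

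The decisive structural observation, which I would extract first, is the following: for any proper $i$-multi-index $\alpha'$, the only multi-indices $\beta$ in \eqref{eq_pd} that can produce a nonzero term of $\partial^{\alpha'} q/\partial y^{\alpha'}(0)$ are themselves proper $i$-multi-indices with $\beta\preceq\alpha'$. The reason is two-fold. For $k>i$, the identity $V_k(y)=y_k$ forces every $\gamma^{\beta,k,\cdot}$ with nonzero derivative at the origin to equal the unit $k$-multi-index, and its $\beta_k$ copies must sum into the $k$-th slot of $\alpha'$; since $\alpha'_k=0$, this demands $\beta_k=0$. On the other hand, if $\beta$ were a proper $j$-multi-index with $j<i$, then every term of the decomposition $\alpha'=\sum_{k,\ell}\gamma^{\beta,k,\ell}$ would be supported in $\{1,\dots,i-1\}$, contradicting $\alpha'_i\ge 1$.

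With this in hand, showing that $\alpha$ is still a proper $i$-multi-index of $q$ is direct. Writing $q(y)=\sum_\beta c_\beta V(y)^\beta$ with $c_\beta$ the Taylor coefficients of $p$ at the origin, the Taylor coefficient of $y^\alpha$ in $q$ reduces to a sum over proper $i$-multi-indices $\beta\preceq\alpha$. The weakly essential hypothesis kills every $\beta\ne\alpha$ in this sum, since any such $\beta$ with $c_\beta\ne 0$ would be a rival proper $i$-multi-index of $p$ that generates $\alpha$. The lone surviving term is $c_\alpha$ times the $y^\alpha$ coefficient of $V(y)^\alpha=\prod_{k\le i}V_k^{\alpha_k}$, and a short matrix-decomposition argument on the degree-$|\alpha|$ homogeneous part shows that the only lower triangular non-negative integer matrix with both row sums and column sums equal to $(\alpha_1,\dots,\alpha_i)$ is the diagonal one; this evaluates the coefficient to $\prod_{k\le i}(\partial V_k/\partial y_k(0))^{\alpha_k}\ne 0$.

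To finish, I would rule out any competing proper $i$-multi-index of $q$ that generates $\alpha$. Suppose $\alpha'\ne\alpha$ is a proper $i$-multi-index of $q$ with $\alpha'\preceq\alpha$. Nonvanishing of the $y^{\alpha'}$ coefficient of $q$, combined with the structural observation, yields at least one proper $i$-multi-index $\beta$ of $p$ with $\beta\preceq\alpha'$; transitivity then gives $\beta\preceq\alpha$. The case $\beta\ne\alpha$ contradicts the weak essentialness of $\alpha$ in $p$, while $\beta=\alpha$ forces $\alpha=\alpha'$ by antisymmetry of $\preceq$, contradicting $\alpha'\ne\alpha$. I expect the main obstacle to be the structural observation of paragraph two: once it collapses the Fa\`a di Bruno sum \eqref{eq_pd} onto proper $i$-multi-indices, both halves of the proof slot in smoothly.
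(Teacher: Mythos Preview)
Your proof is correct and follows essentially the same route as the paper: both reduce to one direction via the symmetry of \eqref{eq_tri_coord_i} under inversion, then use formula \eqref{eq_pd} together with the special structure of $V$ to collapse the chain-rule sum onto proper $i$-multi-indices $\beta\preceq\alpha$, after which weak essentialness of $\alpha$ in $p$ kills every $\beta\neq\alpha$. Your treatment is actually somewhat more complete than the paper's: your matrix-decomposition argument explicitly justifies why the surviving $\beta=\alpha$ contribution equals $\prod_{k\le i}(\partial V_k/\partial y_k(0))^{\alpha_k}$, and your transitivity argument spells out the ``similarly'' step the paper leaves implicit for ruling out rival $\alpha'\prec\alpha$ in $q$.
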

\begin{proof}
	Necessity. Since when $i=0$ the necessity is obvious, we only consider the case of $i \ge 1$. Let $\beta$ be a multi-index satisfying $\beta \prec \alpha$. If $\beta$ is a proper $i$-multi-index, we obtain ${{{\partial ^\beta }q}/{\partial {x^\beta }(0)}} = 0$ since $\beta$ is not a multi-index of $p(x)$. Now consider $\beta$ as a proper $i'$-multi-index with $i < i' \le m$. Owing to $\partial x_{i'}/\partial y_k=0$ for any $k = 1,\dots,i$, any term in $\partial ^\alpha q/\partial y^\alpha$ which has the multiplier $\partial ^\beta q/\partial y^\beta$ is equal to 0. Thus, the only term in ${{\partial ^\alpha }q}/{\partial {y^\alpha }}$ that is not equal to 0 at the origin is
	\begin{equation}\nonumber
		\frac{{{\partial ^\alpha }q}}{{\partial {x^\alpha }}}
		\left( \frac{\partial x_1}{\partial y_1} \right)^{\alpha_1} \dots
		\left( \frac{\partial x_m}{\partial y_i} \right)^{\alpha_i};
	\end{equation} 
	that is, ${{{\partial ^\alpha }q} /{\partial {y^\alpha }(0)}} \ne 0$.  Similarly, we can obtain ${{{\partial ^\gamma }q} /{\partial {y^\gamma }(0)}} = 0$ for arbitrary $i$-multi-index $\gamma  \prec \alpha$. Therefore $\alpha$ is a weakly essential $i$-multi-index of $q(y)$.\par
	To prove the sufficiency, it is enough to note that the inverse transformation of $V$ is of the form
	\begin{equation}\nonumber
		\begin{aligned}
		&y_1 = U_1(x_1),\dots,y_i = U_i(x_1,\dots,x_i),\\
		&y_{i+1}=U_{i+1}(x_{i+1})=x_{i+1},\dots,y_m=U_m(x_m)=x_m 
		\end{aligned}
	\end{equation}
	and to repeat the proof of the necessity. 
\end{proof}

Furthermore, the following proposition can be verify in a similar way to the proof the above proposition.

\begin{proposition}\label{prop_ess_inv1}
	$\alpha$ is an essential multi-index of a smooth function (or a holomorphic function) $p(x_1,\dots,x_m)$ if and only if $\alpha$ is still an essential multi-index of the function $q(y) = p(V(y))$ where $x=V(y)$ is a lower transformation coordinate transformation.  
\end{proposition}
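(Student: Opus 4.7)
The plan is to mirror the proof of Proposition \ref{prop_wess_inv}, applying the chain-rule identity \eqref{eq_pd} of the preceding lemma, and exploiting the strictly stronger essentiality hypothesis to drop the restriction that $V$ fix $x_{i+1},\dots,x_m$. Sufficiency will then follow from necessity by symmetry, since Lemma \ref{lem_tri_coord} guarantees that the inverse of any lower triangular coordinate transformation is again lower triangular.

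For necessity, I would first verify $\alpha \in {\cal I}(q)$ by evaluating \eqref{eq_pd} at $\alpha$. Essentiality of $\alpha$ means the only $\beta \preceq \alpha$ with $\partial^{\beta} p/\partial x^{\beta}(0)\neq 0$ is $\beta=\alpha$ itself, so every other summand in \eqref{eq_pd} vanishes. A short combinatorial check will then show that, when $\beta=\alpha$, the constraint $\alpha=\sum_{k,i}\gamma^{\alpha,k,i}$ with each $\gamma^{\alpha,k,i}$ a proper $k$-multi-index (so $(\gamma^{\alpha,k,i})_k\ge 1$) admits only the decomposition in which every $\gamma^{\alpha,k,i}$ is the $k$-th standard basis vector. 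The product of Jacobian factors then collapses to a positive multinomial times $\partial^{\alpha}p/\partial x^{\alpha}(0)\prod_k(\partial x_k/\partial y_k(0))^{\alpha_k}$, which is nonzero because the Jacobian of a lower triangular map has nonzero diagonal entries.

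Next, to exclude every $\alpha'\prec\alpha$ from ${\cal I}(q)$, I would apply \eqref{eq_pd} at $\alpha'$ and note that any surviving summand would require some $\beta\in{\cal I}(p)$ with $\beta\preceq\alpha'$. Using that $\preceq$ is a partial order (the first corollary following Theorem \ref{thm_gene_condi}), the chain $\beta\preceq\alpha'\prec\alpha$ forces $\beta\prec\alpha$, contradicting essentiality. Hence $\partial^{\alpha'}q/\partial y^{\alpha'}(0)=0$, so $\alpha'\notin{\cal I}(q)$, finishing the proof that $\alpha$ is essential for $q$. The main obstacle I anticipate is the combinatorial uniqueness step for the $\beta=\alpha$ term; it is elementary but requires careful component-by-component bookkeeping, since for each index $k$ the proper $k$-multi-index factors alone already contribute at least $\alpha_k$ to the $k$-th coordinate of $\sum_{k',i}\gamma^{\alpha,k',i}$, which forces the $k'>k$ factors to contribute nothing in the $k$-th slot and each of the $\alpha_k$ proper $k$-multi-index factors to equal the unit vector $e_k$.
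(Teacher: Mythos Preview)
Your approach is correct and is precisely what the paper intends: it simply says the proposition ``can be verified in a similar way to the proof of'' Proposition \ref{prop_wess_inv}, i.e., evaluate \eqref{eq_pd} at the origin, use essentiality to kill every $\beta\prec\alpha$ summand (now for \emph{all} such $\beta$, which is exactly why the restricted form \eqref{eq_tri_coord_i} of $V$ is no longer needed), isolate the diagonal contribution for $\beta=\alpha$, and invoke Lemma \ref{lem_tri_coord} for the converse.

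One small point to tighten in your combinatorial uniqueness step: the lemma only asserts that each $\gamma^{\alpha,k,i}$ is a $k$-multi-index, not a \emph{proper} one, so the inequality $(\gamma^{\alpha,k,i})_k\ge 1$ you invoke is not given a priori. The easy fix is to note that any term surviving at the origin has every $\gamma^{\alpha,k,i}\neq 0$ (otherwise the factor $V_k(0)=0$ appears); since there are $|\alpha|$ such nonzero multi-indices summing to $\alpha$, each must be a unit vector, and then your top-down component argument forces $\gamma^{\alpha,k,i}=e_k$.
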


\begin{definition}\label{def_gwe}
	$I$ is a set of $i$-multi-indies and $I'$ is a subset of $I$. $I'$ is said to be the greatest weakly essential $i$-multi-index set of $I$ if $I'$ consists of all the weakly essential $i$-multi-indices of $I$. Let $p({x_1}, \dots ,{x_m})$ be a smooth function (or a holomorphic function) and $I_p$ is a subset of ${\cal I}_i(p)$. $I_p$ is said to be the greatest weakly essential $i$-multi-index set of $p$, denoted it by ${\cal W}_i(p)$, if it is the greatest weakly essential $i$-multi-index set of ${\cal I}_i(p)$. We also write ${\cal W}(p)=\bigcup_{i=0}^{m}  {\cal W}_i(p)$.
\end{definition}
\begin{definition}\label{def_ge}
	$I$ is a set of multi-indies. $I'$ is said to be the greatest essential $i$-multi-index set of $I$ if $I'$ consists of all the essential $i$-multi-indices of $I$. A set is said to be the greatest essential $i$-multi-indices set of $p({x_1}, \dots ,{x_m})$, denoted it by ${\cal E}_i(p)$, if the set consists of all the essential $i$-multi-indices of ${\cal I}(p)$. We also define ${\cal E}(p)=\bigcup_{i=0}^{m}  {\cal E}_i(p)$, and call ${\cal E}(p)$ as the greatest essential multi-indices set of $p$.
\end{definition}\par
Exploiting Definition \ref{def_gwe}, Definition \ref{def_ge}, Proposition \ref{prop_wess_inv}, and Proposition \ref{prop_ess_inv1}, we obtain the following two theorems.
\begin{theorem}
	Let $p({x_1}, \dots ,{x_m})$ be a smooth function (or a holomorphic function), $x=V(y)$ a change of coordinates taking the form \eqref{eq_tri_coord_i}, and $q(y_1,\dots,y_m)=p(V_1(y_1), \dots ,V_m(y_1,\dots,y_m))$. Then ${\cal W}_i(p) = {\cal W}_i(q)$. 
\end{theorem}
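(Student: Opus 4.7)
The plan is to derive this theorem as a direct consequence of Proposition \ref{prop_wess_inv} together with Definition \ref{def_gwe}. The set $\mathcal{W}_i(p)$ is by definition exactly the collection of weakly essential $i$-multi-indices of $p$, and similarly $\mathcal{W}_i(q)$ is the collection of weakly essential $i$-multi-indices of $q$. So to establish equality of the sets it suffices to show that an $i$-multi-index $\alpha$ is weakly essential for $p$ if and only if it is weakly essential for $q$; but this is precisely the content of Proposition \ref{prop_wess_inv} for transformations of the form \eqref{eq_tri_coord_i}.

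Concretely, I would argue by mutual inclusion. First, take any $\alpha \in \mathcal{W}_i(p)$. By Definition \ref{def_gwe} this means $\alpha$ is a weakly essential $i$-multi-index of $p$. Proposition \ref{prop_wess_inv}, applied in the forward direction to the given $V$, yields that $\alpha$ is a weakly essential $i$-multi-index of $q$, and therefore $\alpha \in \mathcal{W}_i(q)$. For the reverse inclusion, take $\alpha \in \mathcal{W}_i(q)$ and apply the converse direction of Proposition \ref{prop_wess_inv} (equivalently, invoke the proposition with respect to the inverse transformation, which again has the form \eqref{eq_tri_coord_i} as noted in that proof) to conclude $\alpha \in \mathcal{W}_i(p)$.

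There is essentially no obstacle beyond bookkeeping: all the analytic content, in particular the use of the chain-rule expansion \eqref{eq_pd} and the fact that $\partial x_{i'}/\partial y_k = 0$ for $k \le i < i'$, has already been absorbed into Proposition \ref{prop_wess_inv}. The only thing worth flagging is that one should verify the hypothesis of the proposition is met, i.e.\ that $V$ does take the form \eqref{eq_tri_coord_i}, and that the inverse of such a $V$ is of the same form (which is the analogue of Lemma \ref{lem_tri_coord} restricted to the first $i$ coordinates, and can be cited briefly). With those observations the proof is a two-line appeal to the proposition.
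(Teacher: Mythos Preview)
Your proposal is correct and matches the paper's approach exactly: the paper states that this theorem is obtained directly from Definition \ref{def_gwe} and Proposition \ref{prop_wess_inv}, with no additional argument. Your mutual-inclusion write-up is a faithful unpacking of that one-line appeal.
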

\begin{theorem}\label{theo_ess_trans}
	Let $p({x_1}, \dots ,{x_m})$ be a smooth function (or a holomorphic function), $x=V(y)$ a lower triangular coordinate transformation, and $q(y_1,\dots,y_m)=p(V_1(y_1), \dots ,V_m(y_1,\dots,y_m))$. Then ${\cal E}(p) = {\cal E}(q)$ and ${\cal E}_i(p) = {\cal E}_i(q)$ for $i=0,\dots,m$. 
\end{theorem}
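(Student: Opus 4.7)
The plan is to derive this theorem as a direct corollary of Proposition \ref{prop_ess_inv1}, treating it as a packaging of the pointwise invariance proved there into set-theoretic language, together with a check that the $i$-grading is compatible with the transformation $V$.

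First I would observe that, for any nonzero multi-index $\alpha=(\alpha_1,\dots,\alpha_m)$, the integer $i$ for which $\alpha$ is a proper $i$-multi-index is an intrinsic property of $\alpha$: it is simply the largest index with $\alpha_i \neq 0$ (with $i=0$ reserved for the case $\alpha=0$). In particular, whether $\alpha$ belongs to ${\cal I}_i(\cdot)$ is determined by the intrinsic grade of $\alpha$ together with membership in ${\cal I}(\cdot)$, and is not affected by anything that changes when we pass from $x$-coordinates to $y$-coordinates through $V$.

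Next I would invoke Proposition \ref{prop_ess_inv1}: since $V$ is a lower triangular coordinate transformation, for every multi-index $\alpha$ one has $\alpha \in {\cal E}(p)$ if and only if $\alpha \in {\cal E}(q)$, which is precisely the set equality ${\cal E}(p)={\cal E}(q)$. Intersecting both sides with the (function-independent) collection of proper $i$-multi-indices then yields ${\cal E}_i(p)={\cal E}_i(q)$ for each $i=0,\dots,m$, and the identity ${\cal E}(p)=\bigcup_{i=0}^m {\cal E}_i(p)=\bigcup_{i=0}^m {\cal E}_i(q)={\cal E}(q)$ is recovered as a consistency check.

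I do not expect a serious obstacle here, since the analytic heart of the argument --- the chain-rule expansion \eqref{eq_pd} combined with the characterization of $\preceq$ supplied by Theorem \ref{thm_gene_condi} --- has already been carried out inside Proposition \ref{prop_ess_inv1}. The only conceptual point worth emphasizing is that the $i$-graded refinement follows from the ungraded statement for free, because ``being a proper $i$-multi-index'' is a property of the multi-index $\alpha$ itself rather than of the function to which it is attached; this is in contrast with the weakly essential case, where ${\cal W}_i$ genuinely depends on $i$ and consequently only the more restricted transformations of the form \eqref{eq_tri_coord_i} preserve it, whereas here a general lower triangular $V$ suffices.
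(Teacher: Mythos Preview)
Your proposal is correct and mirrors the paper's own approach: the paper states that Theorem~\ref{theo_ess_trans} is obtained directly from Definition~\ref{def_ge} and Proposition~\ref{prop_ess_inv1}, and your argument is exactly this unpacking, with the additional (and correct) observation that the grade $i$ of a multi-index is intrinsic, so the graded equalities ${\cal E}_i(p)={\cal E}_i(q)$ follow from ${\cal E}(p)={\cal E}(q)$ by intersection.
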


\begin{proposition}
	$I$ is a set of proper $i$-multi-indices such that, for any two different elements $\alpha, \beta \in I$, both $\alpha \nprec \beta$ and $\beta \nprec \alpha$ are satisfied. Then $I$ is a finite set.
\end{proposition}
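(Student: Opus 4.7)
The plan is to reduce the statement to Dickson's lemma via the partial-sum encoding made available by Theorem~\ref{thm_gene_condi}. Because every $\alpha \in I$ is a proper $i$-multi-index, I only care about the first $i$ components, so I may regard $I$ as a subset of $\{\alpha \in \mathbb{N}^i : \alpha_i \ge 1\}$. Theorem~\ref{thm_gene_condi} then tells me that, for two distinct elements $\alpha, \beta$ of $I$, we have $\alpha \prec \beta$ if and only if
\begin{equation}\nonumber
S_k(\alpha) := \sum_{j=1}^{k} \alpha_j \;\le\; \sum_{j=1}^{k} \beta_j =: S_k(\beta) \qquad \text{for every } k = 1,\dots,i.
\end{equation}
Thus the antichain condition in the hypothesis translates into: for any two distinct $\alpha,\beta \in I$ there exist indices $k_1,k_2 \in \{1,\dots,i\}$ with $S_{k_1}(\alpha) < S_{k_1}(\beta)$ and $S_{k_2}(\alpha) > S_{k_2}(\beta)$.

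Next I introduce the map $\Phi : I \to \mathbb{N}^i$ defined by $\Phi(\alpha) = (S_1(\alpha), S_2(\alpha), \dots, S_i(\alpha))$. Since $\alpha$ can be recovered from its sequence of partial sums via $\alpha_1 = S_1(\alpha)$ and $\alpha_k = S_k(\alpha) - S_{k-1}(\alpha)$ for $k \ge 2$, the map $\Phi$ is injective. Moreover, by the translation above, $\Phi(I)$ is an antichain in $\mathbb{N}^i$ under the componentwise (product) order $\le$: any two distinct images are incomparable.

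Now I invoke Dickson's lemma, which asserts that $(\mathbb{N}^i, \le)$ is a well-partial-order and consequently contains no infinite antichain. Applying this to $\Phi(I)$ yields $|\Phi(I)| < \infty$, and since $\Phi$ is injective, $|I| < \infty$ as desired. If one prefers a self-contained argument rather than citing Dickson's lemma, I would prove the finiteness of antichains in $\mathbb{N}^i$ by induction on $i$: the base case $i=1$ is immediate because $(\mathbb{N},\le)$ is totally ordered so any antichain has at most one element; for the inductive step, given an infinite antichain in $\mathbb{N}^{i+1}$, pick any element $(a_1,\dots,a_{i+1})$ and observe that any other element $(b_1,\dots,b_{i+1})$ of the antichain must satisfy $b_k < a_k$ for some coordinate $k$, so the antichain is covered by finitely many ``slabs'' $\{b : b_k < a_k\}$, each of which projects injectively onto an antichain in $\mathbb{N}^i$ after deleting coordinate $k$, giving a contradiction with the induction hypothesis.

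I do not anticipate a serious obstacle: the only conceptual content is recognizing that Theorem~\ref{thm_gene_condi} turns $\preceq$ into the product order after the partial-sum change of variables, after which the finiteness is a standard well-quasi-ordering fact. The mildly delicate point to check carefully is that $\Phi$ is indeed injective and that the translation of ``incomparable under $\preceq$'' really coincides with ``incomparable under the componentwise order'' on the image, which is immediate from the biconditional in Theorem~\ref{thm_gene_condi} applied with $m_\alpha = m_\beta = i$.
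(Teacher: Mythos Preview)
Your argument is correct and takes a genuinely different route from the paper. The paper proves the proposition by a direct induction on $i$: fixing one element $\alpha \in I$, it partitions the remaining elements $\beta$ according to the $\preceq$-relation between the truncations $(\alpha_1,\dots,\alpha_j)$ and $(\beta_1,\dots,\beta_j)$, and bounds each of four cases by hand. Your approach is more structural: you observe that the partial-sum map $\Phi$ turns $\preceq$ on proper $i$-multi-indices into the componentwise order on $\mathbb{N}^i$ (this is exactly what Theorem~\ref{thm_gene_condi} says when $m_\alpha=m_\beta=i$), so the hypothesis becomes ``$\Phi(I)$ is an antichain in $(\mathbb{N}^i,\le)$'', and Dickson's lemma finishes. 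This is shorter, exposes why the result is true, and immediately generalizes to any order that embeds into a product of well-ordered chains; the paper's argument, by contrast, stays closer to the concrete combinatorics of multi-indices and needs no external lemma.

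One caveat about your optional self-contained sketch of Dickson's lemma: the step ``each slab $\{b:b_k<a_k\}$ projects injectively onto an antichain in $\mathbb{N}^i$ after deleting coordinate $k$'' is not quite right. Injectivity on the antichain is fine, but the image need not be an antichain: e.g.\ $(1,5)$ and $(2,3)$ are incomparable in $\mathbb{N}^2$, yet deleting the first coordinate yields $5$ and $3$, which are comparable. The fix is to refine each slab into the finitely many hyperplanes $\{b:b_k=c\}$, $0\le c<a_k$; on a hyperplane the $k$th coordinates coincide, so deleting it does preserve incomparability. Since you already cite Dickson's lemma as a known result, this does not affect the validity of your main argument.
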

\begin{proof}
	When $i=0$, $I$ is obviously finite. Assuming $i=1$ and $\alpha = (\alpha_1) \in I$, $\alpha$ must be the only element of $I$ because, for any $\beta = (\beta_1)$ different from $\alpha$, $\beta_1<\alpha_1$ means $\beta \prec \alpha$ and $\alpha_1<\beta_1$ means $\alpha \prec \beta$. \par
	We now show that if for all $i=1,\dots,j$ the set of $i$-multi-indices $I$ is finite, then $I$ remains finite when $I$ is a set of $(j+1)$-multi-indices. Suppose $\alpha=(\alpha_1,\dots,\alpha_{j+1})$ is a given proper $(j+1)$-multi-index of $I$. For any $\beta = (\beta_1,\dots,\beta_{j+1}) \in I$, there are four possible relations between $(\alpha_1,\dots,\alpha_j)$ and $(\beta_1,\dots,\beta_j)$ as follows:  $(\alpha_1,\dots,\alpha_j)=(\beta_1,\dots,\beta_j)$, $(\alpha_1,\dots,\alpha_j) \prec (\beta_1,\dots,\beta_j)$, $(\beta_1,\dots,\beta_j) \prec (\alpha_1,\dots,\alpha_j)$, and neither $(\alpha_1,\dots,\alpha_j) \preceq (\beta_1,\dots,\beta_j)$ nor $(\beta_1,\dots,\beta_j) \preceq (\alpha_1,\dots,\alpha_j)$. We will verify that the subset consisting of all the multi-indices falling into each case is finite. In the first case, $\beta_{j+1}=\alpha_{j+1}$ must hold to meet both $\alpha \nprec \beta$ and $\beta \nprec \alpha$; that is, $\alpha$ is the only multi-index suitable for this case. In the second case,  $(\alpha_1,\dots,\alpha_j) \prec (\beta_1,\dots,\beta_j)$ means that $\beta \nprec \alpha$ has already been satisfied and we have to choose $\beta$ such that $\textstyle {\sum_{k = 1}^{j+1} {\beta _k}}  < \textstyle {\sum_{k = 1}^{j+1} {\alpha _k}}$.
	For a given $\alpha$, the above inequality implies that the choices of $\beta$ are finite. Let us discuss the third case. The number of all the proper $j$-multi-indices $(\beta_1,\dots,\beta_j)$ satisfying $(\beta_1,\dots,\beta_j) \prec (\alpha_1,\dots,\alpha_j)$ is finite. Furthermore, for a fixed $(\beta_1,\dots,\beta_j)$, there are no more than one element $\beta' \in I$ satisfying $\beta_l' = \beta_l$ for $ l=1,\dots,j$. So the elements of $I$ that meet the third case are also finite. In the last case, the two proper $j$-multi-indices $(\alpha_1,\dots,\alpha_j)$ and $(\beta_1,\dots,\beta_j)$ can not generate each other. For a given $(\alpha_1,\dots,\alpha_j)$, all the proper $j$-multi-indices that can be select as $(\beta_1,\dots,\beta_j)$ have been assumed to be finite. Note that, for a fixed $(\beta_1,\dots,\beta_j)$, at most one proper $(j+1)$-multi-index taking the form $(\beta_1,\dots,\beta_j,\beta_{j+1})$ can belong to $I$. Then, all the possible proper $(j+1)$-multi-indices that can be chosen as $\beta$ in this case are finite. In summary, the set $I$ is finite.
\end{proof}\par
The following theorem can be obtained directly from the above proposition.
\begin{theorem}\label{theo_w_fin}
	Suppose $p(x_1,\dots,x_m)$ is a smooth function (or a holomorphic function). Then, for $i=0,\dots,m$, ${\cal W}_i(p)$, ${\cal E}_i(p)$, ${\cal W}(p)$, and ${\cal E}(p)$ are all finite sets.
\end{theorem}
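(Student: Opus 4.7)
The plan is to derive this directly from the preceding proposition, which asserts that any set of proper $i$-multi-indices in which no element generates another is finite. The theorem is essentially a packaging result: once we observe that ${\cal W}_i(p)$ and ${\cal E}_i(p)$ are, by construction, antichains with respect to the partial order $\preceq$ restricted to proper $i$-multi-indices, finiteness of each follows immediately, and finiteness of the full unions ${\cal W}(p)$ and ${\cal E}(p)$ follows because they are finite unions (indexed by $i = 0,\dots,m$) of finite sets.

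First, I would handle ${\cal W}_i(p)$. By Definition \ref{def_we_e}, a proper $i$-multi-index $\alpha \in {\cal I}_i(p)$ lies in ${\cal W}_i(p)$ precisely when no other proper $i$-multi-index of ${\cal I}(p)$ generates $\alpha$. In particular, no other element of ${\cal W}_i(p) \subseteq {\cal I}_i(p)$ can generate $\alpha$, and by symmetry $\alpha$ cannot generate any other $\beta \in {\cal W}_i(p)$ either (else $\beta$ would fail to be weakly essential). Thus ${\cal W}_i(p)$ satisfies the antichain hypothesis of the preceding proposition and is therefore finite.

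Second, I would treat ${\cal E}_i(p)$ in the same way. By Definition \ref{def_we_e}, $\alpha \in {\cal E}_i(p)$ means that no element of ${\cal I}(p)$ generates $\alpha$ strictly, i.e., $\alpha' \nprec \alpha$ for every $\alpha' \in {\cal I}(p)$. Restricting to ${\cal E}_i(p) \subseteq {\cal I}_i(p) \subseteq {\cal I}(p)$, any two distinct $\alpha,\beta \in {\cal E}_i(p)$ satisfy $\alpha \nprec \beta$ and $\beta \nprec \alpha$, so the proposition again applies and ${\cal E}_i(p)$ is finite.

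Finally, since ${\cal W}(p) = \bigcup_{i=0}^{m} {\cal W}_i(p)$ and ${\cal E}(p) = \bigcup_{i=0}^{m} {\cal E}_i(p)$ are finite unions of finite sets, they are themselves finite. There is no real obstacle here; the substantive combinatorial work has already been done in the preceding proposition, and the only thing to check carefully is the book-keeping, namely that the definitions of ${\cal W}_i(p)$ and ${\cal E}_i(p)$ indeed force the antichain property (in particular, for ${\cal W}_i(p)$ this uses that both members of any candidate pair $\alpha,\beta$ are proper $i$-multi-indices, so the ``no other proper $i$-multi-index generates'' clause in Definition \ref{def_we_e} applies symmetrically).
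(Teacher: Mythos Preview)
Your proposal is correct and matches the paper's approach exactly: the paper simply states that the theorem ``can be obtained directly from the above proposition,'' and your write-up spells out precisely that derivation---checking that ${\cal W}_i(p)$ and ${\cal E}_i(p)$ are antichains for $\preceq$ among proper $i$-multi-indices so the proposition applies, and then taking finite unions over $i=0,\dots,m$.
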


Let $I$ be a set of multi-indices. We write the set that consists of all the multi-indices generated by the elements of $I$ as ${\cal G}(I)$,  and write the subset that consists of all the proper $i$-multi-indices of ${\cal G}(I)$ as ${\cal G}_i(I)$.

\begin{theorem}\label{theo_min_2}
	$I$ is a set of multi-indices, and $W$ is a set of weakly essential $i$-multi-indices of $I$. Suppose $\alpha \in I \setminus {\cal G}_i(W)$ is a proper $i$-multi-index and there exists an integer $l \in \{1,\dots,i\}$ such that \par
	(i) $\textstyle{\sum_{j = 1}^l {{\alpha _j}}}  \le \textstyle{\sum_{j = 1}^l {{\beta _j}}}$
	holds for every proper $i$-multi-index $\beta \in I \setminus \left( {\cal G}_i(W) \bigcup \{\alpha\} \right)$,\par
	(ii) $\alpha \lessdot \beta$ is satisfied when $\textstyle{\sum_{j = 1}^l {{\alpha _j}}}  = \textstyle{\sum_{j = 1}^l {{\beta _j}}}$. \\
	Then $\alpha$ must be a weakly essential $i$-multi-index of $I$. Additionally, if $p(x_1,\dots,x_m)$ is a smooth function (or a holomorphic function) and the aforementioned set $I= {\cal I}(p)$, then $\alpha \in {\cal W}_i(p)$.
\end{theorem}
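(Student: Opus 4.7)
The plan is to argue by contradiction: suppose there exists some proper $i$-multi-index $\gamma \in I$, distinct from $\alpha$, with $\gamma \prec \alpha$. Since $\gamma \in I$, exactly one of two mutually exclusive situations occurs: either $\gamma \in {\cal G}_i(W)$, or $\gamma \in I \setminus ({\cal G}_i(W) \cup \{\alpha\})$. I would handle each case separately and extract a contradiction in each.

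Before the split, I would record a small but crucial observation: for two distinct proper $i$-multi-indices, $\gamma \prec \alpha$ forces $\gamma \lessdot \alpha$ in lexicographic order. This follows from Theorem \ref{thm_gene_condi}, which translates $\gamma \prec \alpha$ into the partial-sum inequalities $\sum_{j=1}^{k}\gamma_j \le \sum_{j=1}^{k}\alpha_j$ for all $k=1,\dots,i$: at the smallest index $k^{*}$ where the two multi-indices differ, the partial sums up to $k^{*}-1$ coincide, which forces $\gamma_{k^{*}} < \alpha_{k^{*}}$.

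For Case 1, I would pick $w \in W$ with $w \preceq \gamma$ and invoke transitivity of $\preceq$ (the partial-order corollary that follows Theorem \ref{thm_gene_condi}) to get $w \preceq \gamma \prec \alpha$, hence $w \prec \alpha$; this places $\alpha$ into ${\cal G}_i(W)$, violating the hypothesis $\alpha \in I \setminus {\cal G}_i(W)$. For Case 2, I would apply hypothesis (i) with $\beta = \gamma$ to get $\sum_{j=1}^{l}\alpha_j \le \sum_{j=1}^{l}\gamma_j$, combine it with the reverse inequality supplied by $\gamma \prec \alpha$ via Theorem \ref{thm_gene_condi}, and then invoke hypothesis (ii) on the resulting equality to conclude $\alpha \lessdot \gamma$. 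The preliminary observation, however, gives $\gamma \lessdot \alpha$, which is the contradiction.

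The addendum about ${\cal W}_i(p)$ is then immediate from Definition \ref{def_we_e}: specializing $I = {\cal I}(p)$ and noting that $\alpha \in {\cal I}_i(p)$ (because $\alpha$ is by assumption a proper $i$-multi-index already lying in ${\cal I}(p)$) identifies $\alpha$ as a weakly essential $i$-multi-index of $p$, so $\alpha \in {\cal W}_i(p)$. I do not anticipate a serious obstacle: the argument is purely structural. The only spot that demands care is making the preliminary lemma $\gamma \prec \alpha \Rightarrow \gamma \lessdot \alpha$ explicit, because it is what prevents Case 2 from silently collapsing — without it one could not directly collide hypothesis (ii) against the generation relation $\gamma \prec \alpha$.
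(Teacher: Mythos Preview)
Your proof is correct and follows essentially the same line as the paper's own argument. The paper's proof is a single sentence asserting, without elaboration, that $\beta \nprec \alpha$ for every proper $i$-multi-index $\beta \in I \setminus ({\cal G}_i(W)\cup\{\alpha\})$ and that $\alpha' \nprec \alpha$ for every $\alpha' \in {\cal G}_i(W)$; your Case~2 and Case~1 are precisely the detailed justifications of these two claims, and your preliminary observation ($\gamma \prec \alpha \Rightarrow \gamma \lessdot \alpha$ for distinct proper $i$-multi-indices) is the step the paper leaves implicit when invoking hypothesis~(ii).
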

\begin{proof}
	Since $\beta \nprec \alpha$  and $\alpha' \nprec \alpha$ for any $\alpha' \in {\cal G}_i(W) $, $\alpha$ can not be generated by another proper $i$-multi-index of $I$. Thus, $\alpha$ is a weakly essential multi-index of $I$. 
\end{proof}

\begin{corollary}\label{coro_min_2}
	Let $I$ be a set of multi-indices and $W$ a set of weakly essential $i$-multi-indices of $I$. $\alpha$ is the least $i$-multi-indices of  $I \setminus {\cal G}_i(W)$, then $\alpha$ must be a weakly essential multi-index of $I$. Additionally, suppose $p(x_1,\dots,x_m)$ is a smooth function (or a holomorphic function) and $I$ is exactly ${\cal I}(p)$, then $\alpha \in {\cal W}_i(p)$.
\end{corollary}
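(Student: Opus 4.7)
The plan is to realise this corollary as a one-line instance of Theorem \ref{theo_min_2}, with the choice $l = 1$. The main task will therefore be to check that the defining property of ``the least $i$-multi-index in lexicographical order'' forces the two partial-sum hypotheses of that theorem at $l = 1$.

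First, I would unpack the hypothesis. Because $\alpha$ is, by assumption, the least $i$-multi-index of $I \setminus {\cal G}_i(W)$, for every proper $i$-multi-index $\beta \in I \setminus ({\cal G}_i(W) \cup \{\alpha\})$ we have $\alpha \lessdot \beta$. By the definition of $\lessdot$, this means there is some smallest index $k$ with $\alpha_k < \beta_k$ and $\alpha_j = \beta_j$ for $j < k$; in particular $\alpha_1 \le \beta_1$ always holds, with equality precisely when the discriminating index $k$ is at least $2$.

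Next, I would verify the two hypotheses of Theorem \ref{theo_min_2} for $l = 1$. Condition (i) reduces to $\alpha_1 \le \beta_1$, which is exactly what was just observed. Condition (ii) requires that whenever $\sum_{j=1}^{1}\alpha_j = \sum_{j=1}^{1}\beta_j$, i.e.\ $\alpha_1 = \beta_1$, we still have $\alpha \lessdot \beta$; but $\alpha \lessdot \beta$ already holds for every such $\beta$ by the very choice of $\alpha$, so (ii) is automatic.

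Invoking Theorem \ref{theo_min_2} then delivers that $\alpha$ is a weakly essential $i$-multi-index of $I$, which is the first assertion. The second assertion is the specialisation $I = {\cal I}(p)$, which is handled by the corresponding addendum of Theorem \ref{theo_min_2}. I do not anticipate any substantive obstacle: the corollary is essentially the observation that the lexicographic least element of $I \setminus {\cal G}_i(W)$ automatically satisfies the more general partial-sum criterion of the preceding theorem in the easiest possible way, namely with $l = 1$.
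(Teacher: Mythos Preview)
Your proposal is correct and matches the paper's intended approach: the corollary is stated without proof, as an immediate consequence of Theorem~\ref{theo_min_2}, and your verification that the lexicographic minimum satisfies conditions (i) and (ii) with $l=1$ is exactly the right way to see this.
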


	 By using Corollary \ref{coro_min_2} and Theorem \ref{theo_min_2}, the following two algorithms are provided to find the greatest weakly essential $i$-multi-index set of a set of proper $i$-multi-indices. 
	
\begin{algorithm}\label{alg_least}
	$I_i$ is a set of proper $i$-multi-indices. Determine $W_i$ the greatest weakly essential $i$-multi-index set of  $I_i$:\par
	\hangafter 0 
	\hangindent 1em 	
	Step 1) Set $W_i = \emptyset$.\par
	Step 2) If $I_i \backslash {\cal G}_i(W_i)= \emptyset$, then the algorithm terminates; else find the least $i$-multi-index of $I_i \backslash {\cal G}_i(W_i)$, denoted it by $\alpha$, set $W_i = W_i \bigcup \{ \alpha \} $, and then go to Step 2).
\end{algorithm}
	
\begin{algorithm}\label{alg_least_i}
	$I_i$ is a set of proper $i$-multi-indices. Determine $W_i$ the greatest weakly essential $i$-multi-index set of $I_i$:\par
	\hangafter 0 
	\hangindent 1em 	
	Step 1) Set $W_i = \emptyset$.\par
	Step 2) If  $I_i \setminus {\cal G}_i(W_i)= \emptyset$, then the algorithm terminates; else for every $k=1,\dots,i$ find the least multi-index of the set
	\begin{equation}\nonumber
		 \left\{ \alpha \left| \alpha  \in I_i \setminus {\cal G}_i(W_i) \, \wedge \, {\sum\limits_{j = 1}^k {{\alpha _j} = } } \right.\mathop {\min }\limits_{\alpha ' \in I_i \setminus {\cal G}_i(W_i)} \sum\limits_{j = 1}^k {{{\alpha '}_j}} \right\},
	\end{equation}
	denoted it by $\alpha^k$, set $W_i = W_i \bigcup \{ \alpha^1,\dots,\alpha^i \}$, and then go to Step 2).
\end{algorithm}\par
\begin{remark}
	For a function $p(x_1,\dots,x_m)$, the above two algorithms provide methods to obtain ${\cal W}_i(p)$ from ${\cal I}_i(p)$.
\end{remark}

	It is clear that ${\cal E}_i(p) \subseteq {\cal W}_i(p)$ for a function $p(x_1,\dots,x_m)$. In this paper, we pay special attention to ${\cal E}_m(p)$. Making use of Proposition \ref{prop_ngene}, we get the following theorem.  
\begin{theorem}\label{theo_wmem}
	Suppose $p(x_1,\dots,x_m)$ is a smooth function (or a holomorphic function). Then ${\cal E}_m(p) = {\cal W}_m(p)$.  
\end{theorem}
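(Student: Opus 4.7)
The plan is to show the two inclusions separately, with essentially all the work packed into Proposition \ref{prop_ngene}. The direction ${\cal E}_m(p) \subseteq {\cal W}_m(p)$ is just unpacking definitions: if $\alpha \in {\cal E}_m(p)$ then $\alpha' \nprec \alpha$ for every $\alpha' \in {\cal I}(p)$, and restricting this quantifier to the proper $m$-multi-indices of ${\cal I}(p)$ different from $\alpha$ gives exactly the defining property of ${\cal W}_m(p)$.

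For the reverse inclusion ${\cal W}_m(p) \subseteq {\cal E}_m(p)$, I would fix $\alpha \in {\cal W}_m(p)$ and an arbitrary $\alpha' \in {\cal I}(p)$, and argue $\alpha' \nprec \alpha$ by a case split on the dimension $k$ of $\alpha'$ (writing $\alpha'$ as a proper $k$-multi-index with $0 \le k \le m$, using that $p$ depends only on $m$ variables so ${\cal I}(p) = \bigcup_{k=0}^{m} {\cal I}_k(p)$). If $k < m$, Proposition \ref{prop_ngene} immediately says the lower-dimensional $\alpha'$ cannot generate the proper $m$-multi-index $\alpha$, so $\alpha' \not\preceq \alpha$ and hence $\alpha' \nprec \alpha$. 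If $k = m$ and $\alpha' \neq \alpha$, then the weak essentiality of $\alpha$ in the sense of Definition \ref{def_we_e} directly forbids $\alpha' \prec \alpha$; and $\alpha' = \alpha$ gives $\alpha' \nprec \alpha$ trivially, since $\prec$ is defined to be strict. Combining the two cases yields $\alpha \in {\cal E}_m(p)$.

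I expect no real obstacle here: the theorem is a short corollary of the dimension-monotonicity statement of Proposition \ref{prop_ngene}. The only conceptual point is that at the top dimension $i = m$ there is no ``room'' for a proper $k$-multi-index with $k > m$ inside ${\cal I}(p)$, so Proposition \ref{prop_ngene} eliminates every cross-dimensional generation and collapses ``essential'' and ``weakly essential'' at this top level. The same argument also shows exactly why the analogous identity ${\cal E}_i(p) = {\cal W}_i(p)$ cannot be expected to hold for $i < m$: proper $k$-multi-indices of $p$ with $i < k \le m$ can legitimately generate proper $i$-multi-indices, and so they contribute to the essential condition without being visible to the weakly essential one.
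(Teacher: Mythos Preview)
Your proposal is correct and matches the paper's own reasoning: the paper states the theorem immediately after the sentence ``Making use of Proposition \ref{prop_ngene}, we get the following theorem,'' offering no further argument, and your write-up simply spells out that one-line deduction. The only thing the paper leaves implicit that you make explicit is the trivial inclusion ${\cal E}_m(p) \subseteq {\cal W}_m(p)$ and the case split on the dimension of $\alpha'$, both of which are exactly what is needed.
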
\par
\begin{proposition}
	$p(x_1,\dots,x_m)$ is a smooth function (or a holomorphic function). Then we have
	\begin{equation}\nonumber
		{\cal E}_i(p)={{\cal W}_i(p)} \bigg \backslash { {\cal G}_i\left (\bigcup_{j=i+1}^{m} {\cal W}_j(p)\right )},
	\end{equation}
and 
	\begin{equation}\nonumber
		{\cal E}(p)= \bigcup_{j=0}^{m} {\cal E}_j(p) = {{\cal W}(p)} \bigg \backslash \bigcup_{j=0}^{m}\left ( {\cal G}\left ({\cal W}_j(p)\right)\backslash{\cal W}_j(p) \right)
	\end{equation}
\end{proposition}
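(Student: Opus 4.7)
The plan is to deduce both identities from a single structural observation: under the partial order $\preceq$, the finite set ${\cal W}_j(p)$ is cofinal from below inside ${\cal I}_j(p)$, so replacing ${\cal I}_j(p)$ by ${\cal W}_j(p)$ inside any generation closure ${\cal G}$ does not alter the result. To unpack the first identity, I note that $\alpha\in{\cal E}_i(p)$ means $\alpha$ is a proper $i$-multi-index of $p$ such that no $\alpha'\in{\cal I}(p)$ satisfies $\alpha'\prec\alpha$. Proposition \ref{prop_ngene} restricts any potential generator $\alpha'$ to be a proper $j$-multi-index with $j\ge i$, so the condition splits cleanly into (a) no other element of ${\cal I}_i(p)$ generates $\alpha$, which is exactly $\alpha\in{\cal W}_i(p)$, and (b) no element of ${\cal I}_j(p)$ with $j>i$ generates $\alpha$, i.e.\ $\alpha\notin{\cal G}_i(\bigcup_{j>i}{\cal I}_j(p))$.

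The key lemma I need is: for every $\beta\in{\cal I}_j(p)$ there exists $\beta^{*}\in{\cal W}_j(p)$ with $\beta^{*}\preceq\beta$. To prove it I consider $S=\{\gamma\in{\cal I}_j(p):\gamma\preceq\beta\}$; by Theorem \ref{thm_gene_condi} every $\gamma\in S$ satisfies $\sum_{k=1}^{\ell}\gamma_k\le\sum_{k=1}^{\ell}\beta_k$ for all $\ell\le j$, so each entry of $\gamma$ is bounded by $|\beta|$ and $S$ is finite. By the corollary following Theorem \ref{thm_gene_condi} the relation $\preceq$ is a partial order, so the nonempty finite set $S$ has a $\prec$-minimal element $\beta^{*}$; any $\gamma\in{\cal I}_j(p)$ with $\gamma\prec\beta^{*}$ would belong to $S$ by transitivity, contradicting minimality, hence $\beta^{*}\in{\cal W}_j(p)$. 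Transitivity of $\preceq$ then yields ${\cal G}_i(\bigcup_{j>i}{\cal I}_j(p))={\cal G}_i(\bigcup_{j>i}{\cal W}_j(p))$, which combined with (a) proves the first identity.

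For the second identity the first equality is the definition of ${\cal E}(p)$, so I focus on the second equality. Fix $\alpha\in{\cal W}(p)$, say $\alpha\in{\cal W}_k(p)$, and ask when $\alpha\in{\cal G}({\cal W}_j(p))\setminus{\cal W}_j(p)$ for some $j$: by Proposition \ref{prop_ngene} this is impossible for $j<k$, is trivially false for $j=k$ because $\alpha\in{\cal W}_k(p)$, and for $j>k$ occurs exactly when some $\beta\in{\cal W}_j(p)$ satisfies $\beta\prec\alpha$, since $\alpha$ has the wrong dimension to lie in ${\cal W}_j(p)$. Therefore deleting $\bigcup_{j}({\cal G}({\cal W}_j(p))\setminus{\cal W}_j(p))$ from ${\cal W}(p)$ leaves precisely those $\alpha\in{\cal W}_k(p)$ that no element of $\bigcup_{j>k}{\cal W}_j(p)$ generates, which by the first identity is ${\cal E}_k(p)$; taking the union over $k$ returns ${\cal E}(p)$.

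The main obstacle is the cofinality lemma: although ${\cal I}_j(p)$ is typically infinite, one must show the finite set ${\cal W}_j(p)$ already dominates it from below under $\prec$. The decisive ingredient is the partial-sum bound in Theorem \ref{thm_gene_condi}, which forces any descent $\gamma\prec\beta$ to keep entries bounded by $|\beta|$ and therefore to terminate. Without this quantitative bound, the reduction of condition (b) to its ${\cal W}_j(p)$-version, and hence the whole proposition, would fail.
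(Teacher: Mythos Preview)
The paper states this proposition without proof, so there is no paper argument to compare against directly. Your proof is correct and complete.

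Your cofinality lemma---every $\beta\in{\cal I}_j(p)$ lies above some $\beta^{*}\in{\cal W}_j(p)$ under $\preceq$---is the right hinge, and your finiteness argument for $S=\{\gamma\in{\cal I}_j(p):\gamma\preceq\beta\}$ via the partial-sum bounds of Theorem~\ref{thm_gene_condi} is clean. With that lemma in hand, the replacement ${\cal G}_i\bigl(\bigcup_{j>i}{\cal I}_j(p)\bigr)={\cal G}_i\bigl(\bigcup_{j>i}{\cal W}_j(p)\bigr)$ follows by transitivity, and the first identity drops out from the decomposition of ``essential'' into ``weakly essential in dimension $i$'' plus ``not generated from higher dimensions,'' using Proposition~\ref{prop_ngene} to rule out generators from lower dimensions. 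Your derivation of the second identity from the first, by the case split on $j<k$, $j=k$, $j>k$, is also sound; the implicit use of ${\cal E}(p)\subseteq{\cal W}(p)$ (needed to ensure nothing outside ${\cal W}(p)$ belongs to ${\cal E}(p)$) is immediate from the definitions.
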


\subsection{Invariant Multi-indies of Functions}
In this subsection, we consider a question that for $\alpha$ a given multi-index of a function $p({x_1}, \dots ,{x_m})$ whether there exists a lower triangular coordinate transformation $x=V(y)$ such that $\alpha$ is not a multi-index of $p(V(y_1, \dots ,y_m))$.
\begin{definition}
	$p(x_1,\dots,x_m)$ is a smooth function (or a holomorphic function). $\alpha$, a proper $i$-multi-index of function $p$ with $i \in \{0,\dots,m\}$, is said to be invariant under every lower triangular coordinate transformation $x=(x_1,\dots,x_m)=V(y)=(V_1(y_1), \dots ,V_m(y_1,\dots,y_m))$ if $\alpha$ is still a proper $i$-multi-index of the function $q(y_1, \dots ,y_m)=p(V_1(y_1), \dots ,V_m(y_1,\dots,y_m))$.
\end{definition}\par
Proposition \ref{prop_ess_inv1} implies the following proposition.
\begin{proposition}\label{prop_ess_inv} 
	All the essential multi-indices of the function $p({x_1}, \dots ,{x_m})$ are invariant. 
\end{proposition}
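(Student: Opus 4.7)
The plan is to obtain the statement directly from Proposition \ref{prop_ess_inv1}, together with the observation that being an essential multi-index is strictly stronger than being a multi-index. Invariance, as defined just above, requires only that $\alpha$ remain a proper $i$-multi-index of $q(y) = p(V(y))$, i.e.\ that $\partial^\alpha q/\partial y^\alpha(0)\neq 0$, whereas essentiality demands this \emph{and} the absence of any $\alpha'\in\mathcal{I}(q)$ with $\alpha'\prec\alpha$. So if the stronger property is preserved by every lower triangular change of coordinates, the weaker one certainly is.

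Concretely, I would argue as follows. Fix an essential $i$-multi-index $\alpha$ of $p$ and any lower triangular coordinate transformation $x = V(y)$, and put $q(y) = p(V(y))$. By Definition \ref{def_we_e}, $\alpha\in\mathcal{I}_i(p)$ and no $\alpha'\in\mathcal{I}(p)$ satisfies $\alpha'\prec\alpha$. Proposition \ref{prop_ess_inv1} then asserts that $\alpha$ remains an essential multi-index of $q$. In particular $\alpha$ lies in $\mathcal{I}(q)$, and since the notion of "proper $i$-multi-index" depends only on the multi-index itself (not on the function), $\alpha$ is actually in $\mathcal{I}_i(q)$. That is precisely what it means for $\alpha$ to be invariant under $V$, and since $V$ was an arbitrary lower triangular transformation, we are done.

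The main technical input is therefore not produced here but in Proposition \ref{prop_ess_inv1}, whose proof relies on the chain-rule formula \eqref{eq_pd}: one expands $\partial^\alpha q/\partial y^\alpha(0)$ as a sum over $\beta\preceq\alpha$, observes that each $\beta\prec\alpha$ contributes $0$ because $\beta\notin\mathcal{I}(p)$ by essentiality of $\alpha$, and reads off the $\beta=\alpha$ term to see that $\partial^\alpha q/\partial y^\alpha(0)\neq 0$. So the only obstacle, already dispatched earlier in the paper, is the combinatorial bookkeeping in that chain-rule expansion; the proposition itself is then a one-line corollary and will be presented as such.
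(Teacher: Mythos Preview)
Your proposal is correct and matches the paper's approach exactly: the paper simply states that Proposition \ref{prop_ess_inv1} implies Proposition \ref{prop_ess_inv}, and your argument spells out precisely why---essentiality of $\alpha$ in $q(y)=p(V(y))$ entails in particular that $\alpha\in\mathcal{I}_i(q)$, which is the definition of invariance.
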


Now we only need to consider, for $\alpha \in {\cal I}(p) \backslash {\cal E}(p)$, whether there exists a lower triangular coordinate transformation $x=V(y)$ such that $\alpha$ is not a multi-index of $q(y)=p(V(y))$. The next example illustrates that this kind of lower triangular coordinate transformation may not exist when we restrict it to real-value coordinate transformations.

\begin{example}
	Consider the function
	\begin{equation}\nonumber
		{p_{\rm{1}}}(x_1,x_2) = x_1 x_2^2 - x_1^{\rm{3}}.
	\end{equation}
${p_{\rm{1}}}(x_1,x_2)$ has proper $2$-multi-indices $(1,2)$ and $(3,0)$. $(1,2)$ is the least $2$-multi-index of $p_1$ and can generate $(3,0)$. Select a lower triangular coordinate transformation $y=U(x)$ as
	\begin{equation}\nonumber
		\begin{aligned}
			&{y_1} = {x_1}\\
			&{y_2} =  - {x_1} + {x_2},
		\end{aligned}
	\end{equation}
the inverse  transformation of which, denoted by $x=V(y)$, is
	\begin{equation}\nonumber
		\begin{aligned}
			&{x_1} = {y_1}\\
			&{x_2} = {y_1} + {y_2}.
		\end{aligned}
	\end{equation}
We rewrite $p_1$ in $y$-coordinates 
	\begin{equation}\nonumber
		{p_1}(V(y_1,y_2))  = {y_1}y_2^2 + 2y_1^2{y_2}.
	\end{equation}	
$(3,0)$ is not a multi-index of $p_1(V(y))$. Now consider another function
	\begin{equation}\nonumber
		{p_{\rm{2}}}(x_1,x_2) = x_1x_2^2{\rm{ + }}x_1^{\rm{3}}.
	\end{equation}	
Choose a lower triangular coordinate transformation 
	\begin{equation}\label{eq_tri_coord_2}
	\begin{aligned}
		&{x_1} = {d_{11}}{y_1} + {r_1}(y_1)\\
		&{x_2} = {d_{21}}{y_1} + {d_{22}}{y_2} + {r_2}(y_1,y_2)
	\end{aligned}
	\end{equation}
where $d_{11}, d_{21}, d_{22}$ are parameters with $d_{11},d_{22} \ne 0$ and $r_1, r_2$ are smooth functions with $\partial r_1/\partial y_1(0) = 0$, $\partial r_2/\partial y_1(0) = 0$,  and $\partial r_2/\partial y_2(0) = 0$. In $y$-coordinates, we have
	\begin{equation}\label{eq_p2}
	\begin{aligned}
		p_2(V(y_1,y_2)) =& d_{11}d_{22}^2y_1y_2^2 + 2d_{11}{d_{21}}{d_{22}}y_1^2{y_2} \\
		&+ {d_{11}}(d_{11}^2 + d_{21}^2)y_1^{\rm{3}} + \dots\;,
	\end{aligned}
	\end{equation}
	where we only present all the cubic terms of $p_2(V(y_1,y_2))$. Because of the arbitrariness of \eqref{eq_tri_coord_2}, it is impossible to find a real-valued smooth lower triangular coordinate transformation such that ${d_{11}}(d_{11}^2 + d_{21}^2) = 0 $. In order to eliminate the multi-index $(3,0)$ from the right-hand side of \eqref{eq_p2}, we have to take $d_{11}$ and $d_{21}$ as complex numbers.\par
\end{example}

	The above example prompts us to use complex-valued lower triangular coordinate transformations.
	\begin{theorem}
		Let $p({x_1}, \dots ,{x_m})$ be a smooth function (or a holomorphic function). A multi-index of $p$ is invariant under any biholomorphic lower triangular coordinate transformations if and only if it belongs to ${\cal E}(p)$.
	\end{theorem}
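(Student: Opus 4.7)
The plan is to establish both directions of the biconditional. Sufficiency---that every $\alpha \in {\cal E}(p)$ remains a multi-index of $p \circ V$ under any biholomorphic lower triangular coordinate transformation $V$---follows immediately from Proposition \ref{prop_ess_inv}, which the paper has already extended to the holomorphic setting. The bulk of the argument therefore goes into the converse.

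Fix $\alpha \in {\cal I}(p) \setminus {\cal E}(p)$; the goal is to exhibit a biholomorphic lower triangular coordinate transformation $x = V(y)$ for which $\alpha$ is not a multi-index of $q(y) = p(V(y))$. Since $\alpha$ is not essential, Definition \ref{def_ge} supplies some $\beta \in {\cal I}(p)$, $\beta \neq \alpha$, with $\beta \prec \alpha$. Following the constructive proof of Theorem \ref{thm_gene_condi}, I build a lower triangular transformation $V_\star$ of the form identity plus higher-order polynomial corrections (each $x_j$ depending only on $y_1, \ldots, y_j$) under which $y^\alpha$ appears with nonzero coefficient in the expansion of $x^\beta = V_\star(y)^\beta$.

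Next, introduce a complex parameter $\lambda$ and define $V_\lambda$ by scaling each polynomial correction of $V_\star$ by $\lambda$. Then $V_0$ is the identity, every $V_\lambda$ is a biholomorphic lower triangular coordinate transformation on a common neighborhood of the origin (its Jacobian at $0$ is the identity), and $q_\lambda(y) = p(V_\lambda(y))$ depends polynomially on $\lambda$. By formula \eqref{eq_pd}, the coefficient $\Phi(\lambda)$ of $y^\alpha$ in $q_\lambda$ is a polynomial in $\lambda$ with $\Phi(0) = c_\alpha \neq 0$, receiving contributions only from multi-indices $\gamma$ of $p$ with $\gamma \preceq \alpha$; by the construction of $V_\star$, the $\gamma = \beta$ contribution is itself non-constant in $\lambda$.

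The main obstacle is ruling out accidental cancellation: other $\gamma \preceq \alpha$ of $p$ could conspire with the $\beta$-contribution to keep $\Phi$ at the constant value $c_\alpha$. The remedy is to promote $V_\star$ to a multi-parameter deformation whose independent degrees of freedom couple distinctly to different $\gamma$'s, making the resulting multivariate polynomial in the parameters genuinely non-constant; specializing all but one parameter to a generic complex value then reduces the problem to a non-constant univariate polynomial equation, whose complex root, guaranteed by the fundamental theorem of algebra, yields the desired $\lambda_0$. The corresponding $V_{\lambda_0}$ is the required biholomorphic lower triangular coordinate transformation, and the preceding example makes transparent why this step genuinely requires passage to complex coordinates: over ${\mathbb R}$, the analogous equation there has no solution.
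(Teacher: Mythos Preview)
Your sufficiency direction matches the paper's exactly. The overall architecture of your necessity argument---set up a parametrized family of lower triangular transformations, view the coefficient of $y^{\alpha}$ in $p(V(y))$ as a polynomial in the parameters, and invoke the fundamental theorem of algebra---is also the paper's strategy. The difference lies entirely in how the polynomial is shown to have a zero, and this is where your proposal has a genuine gap.

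You correctly identify the obstacle: with a single scaling parameter $\lambda$, the contributions from the various $\gamma \preceq \alpha$ in ${\cal I}(p)$ could cancel so that $\Phi(\lambda)\equiv c_\alpha$. Your remedy, ``promote $V_\star$ to a multi-parameter deformation whose independent degrees of freedom couple distinctly to different $\gamma$'s,'' is not made precise, and the non-constancy claim that follows is asserted rather than proved. A single transformation acts on every $x^{\gamma}$ simultaneously, so there is no obvious way to isolate parameters that affect one $\gamma$-contribution without touching the others; showing that the combined polynomial in the parameters is non-constant is exactly the hard part, and you have not done it.

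The paper closes this gap by a structural device you are missing. It lists \emph{all} $\alpha^{1},\dots,\alpha^{r}\in{\cal I}(p)$ with $\alpha^{k}\prec\alpha$ and chooses a \emph{minimal} family of monomial perturbations $\beta^{i,j}$ (conditions (1)--(3) in the proof) so that each $c_{\beta^{i,j}}$ is genuinely needed for at least one $\alpha^{k}$ to reach $\alpha$. The resulting equation \eqref{eq_coef} then has a triangular structure: peeling off one undetermined coefficient at a time as in \eqref{eq_poly1}, the final leading coefficient $P_{j_1+\cdots+j_m+1}$ is a nonzero constant precisely because the $\alpha^{k}$ are distinct. This both guarantees non-constancy and lets one solve sequentially for \emph{nonzero} values of every $c_{\beta^{i,j}}$. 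Your deformation idea, once fully specified, would have to reproduce an argument of this kind; as it stands, the crux of the proof is missing.
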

	\begin{proof}
		We only prove that, for $h=1,\dots,m$ and a proper $h$-multi-index $\alpha \in {\cal I}(p) \backslash {\cal E}(p)$, there exists a biholomorphic lower triangular coordinate transformation  $x=V(y)$ such that $\alpha \notin {\cal I}(q)$ for $q(y) = p(V(y))$. \par
		Let $r$ be a positive integer and $\alpha^1,\dots,\alpha^r \in {\cal I}(p)$ be all the multi-indices each of which can generate $\alpha$ and is different from $\alpha$. Let
		\begin{equation}\nonumber
			{\lambda ^i} = (0, \dots ,0,1)
		\end{equation} 
	be proper $i$-multi-index for $i=1,\dots,m$. Choose a biholomorphic lower triangular coordinate transformation $x=V(y)$ as 
	\begin{equation}\label{eq_inv_coord}
		\begin{aligned}
			&{x_1} = V_1(y_1) ={c_{{\lambda ^1}}}{y_1} + \sum\limits_{j=1}^{j_1} {{c_{{\beta ^{1,j}}}}{y^{{\beta ^{1,j}}}}} \\
			&\quad \vdots \\
			&{x_m} = V_m(y_1,\dots,y_m) = {c_{{\lambda ^m}}}{y_m} + \sum\limits_{j=1}^{j_m} {{c_{{\beta ^{m,j}}}}{y^{{\beta ^{m,j}}}}} 
		\end{aligned}
	\end{equation}	
	where ${c_{{\lambda ^1}}}, \dots ,{c_{{\lambda ^m}}} \ne 0$ are given real numbers, and ${c_{{\beta ^{i,j}}}}$, $i = 1,\dots,m$, $j =1,\dots,j_i$, and $j_i \ge 0$, are undetermined complex-valued  coefficients.
	The multi-indices 
	\begin{equation}\label{eq_beta}
	\begin{aligned}
		&{\beta ^{1,1}}, \dots ,{\beta ^{1,{j_1}}},\\
		&\quad \vdots \\
		& {\beta ^{m,1}}, \dots ,{\beta ^{i,{j_m}}},
	\end{aligned}
	\end{equation}
	introduced in \eqref{eq_inv_coord} satisfy three conditions: \par
	\begin{enumerate}
		\item $\beta^{i,j}$, $i=1,\dots,m$, $j=1,\dots,j_i$, and $j_i \ge 0$, are $i$-multi-indices with $\beta^{i,j} \ne 0$ and $\beta^{i,j} \ne \lambda^i$. 
		\item There exist at least one multi-index $\alpha^k=(\alpha_1^k,\dots,\alpha_m^k)$ with $k \in \{1,\dots,r\}$ and a family of multi-indices $\gamma ^{i,j}$ ($i=1,\dots,m$ and $j=1,\dots,\alpha^k_i$) selected from $\lambda^i,\beta^{i,1},\dots,\beta^{i,j_i}$ such that
		\begin{equation}\label{eq_gen_alpha}
			\alpha =\sum\limits_{i=1}^{m} {\;\sum\limits_{j=1}^{\alpha_i^k} {{\gamma ^{i,j}}} }  =
			\sum\limits_{i'=1}^{m}n^{i',0}\lambda^{i'} + \sum\limits_{i'=1}^{m} {\sum\limits_{j'=1}^{j_{i'}} {n^{i',j'} \beta ^{i',j'}} } 
		\end{equation}
		where all the $n^{i',0}$ are nonnegative integers, all the $n^{i',j'}$ ($j' \ge 1$) are positive integers, and $\sum_{j'=0}^{j_{i'}} n^{i',j'} =\alpha^k_{i'}$. 
		\item if any multi-index listed in \eqref{eq_beta} is removed, \eqref{eq_gen_alpha} is not satisfied for all $\alpha^1,\dots,\alpha^r$.	
	\end{enumerate}
	The existence of \eqref{eq_beta} is guaranteed by $\alpha^k \prec \alpha$ for $k=1,\dots,r$. Without loss of generality, assume that $\alpha^1,\dots,\alpha^s$ with $1 \le s \le r$ satisfy \eqref{eq_gen_alpha}. $p$ can be expressed as 
	\begin{equation}\nonumber
	\begin{aligned}
		p({x_1}, \dots ,{x_m}) =&{c_\alpha }{x^\alpha } + {c_{{\alpha ^1}}}{x^{{\alpha ^1}}} +  \dots  + {c_{{\alpha ^r}}}{x^{{\alpha ^s}}} \\
		&+ p'({x_1}, \dots ,{x_m})
	\end{aligned}
	\end{equation}
	where $c_\alpha, c_{{\alpha ^1}}, \dots, c_{{\alpha ^s}}$ are nonzero coefficients, and  $\alpha, \alpha ^1, \dots, \alpha ^s$ are not multi-indices of function $p'({x_1}, \dots ,{x_m})$. We also assume that for a fixed $\alpha^k$ there are different $t_k \ge 1$ families of integers $n^{1,0},\dots,n^{1,j_1},\dots\dots,n^{m,0},\dots,n^{m,j_m}$ satisfying \eqref{eq_gen_alpha}. Substituting \eqref{eq_inv_coord} into $p(x_1,\dots,x_m)$ and taking account of the requirement that $\alpha$ should be not a multi-index of $q(y_1,\dots,y_m)=p(V(y))$ yield
	\begin{equation}\label{eq_coef}
		{c_\alpha }\prod\limits_{i = 1}^h {c_{{\lambda ^i}}^{{\alpha _i}}}  + \sum\limits_{k = 1}^s  \sum\limits_{l = 1}^{t_k}    {\left( {{c_{{\alpha ^k}}}\prod\limits_{i = 1}^m {\prod\limits_{j = 1}^{\alpha _i^k} {{c_{{\chi ^{i,j,k,l}}}}} } } \right)}  = 0
	\end{equation}
	where every $\chi ^{i,j,k,l}$ is selected from $\lambda^i,\beta ^{i,1}$,$\dots$,$\beta ^{i,j_i}$,  correspondingly ${c_{{\chi ^{i,j,k,l}}}}$ is selected from $c_{\lambda^i},{c_{{\beta ^{i,1}}}}$,$\dots$,${c_{{\beta ^{i,j_i}}}}$, and $\sum_{i,j} {{\chi ^{i,j,k',l'}}}  = \alpha $ holds for any fixed pair  of the numbers $k'$ and $l'$. From condition 3), all the undetermined coefficients ${c_{{\beta ^{i,j}}}}$, $i = 1,\dots,m$ and $j =1,\dots,j_i$, are factors of every term in the left-hand side of \eqref{eq_coef} except ${c_\alpha }\prod_{i = 1}^h {c_{{\lambda ^i}}^{{\alpha _i}}}$.  
	
	It remains to verify that there exist ${c_{{\beta ^{i,j}}}}$, $i = 1,\dots,m$ and $j =1,\dots,j_i$, such that \eqref{eq_coef} holds. For convenience, rename ${\beta ^{1,1}}, \dots ,{\beta ^{1,{j_1}}}, \dots  \dots, {\beta ^{m,1}}, \dots ,{\beta ^{m,{j_m}}}$ to $\beta^1,\dots,\beta^{j_1+\dots+j_m}$, and rename $c_{\beta ^{1,1}}, \dots ,c_{\beta ^{1,{j_1}}}$, $\dots  \dots$, $c_{\beta ^{m,1}}, \dots ,c_{\beta ^{m,{j_m}}}$ to $c_{\beta^1},\dots,c_{\beta^{j_1+\dots+j_m}}$. Let us regard the left-hand side of \eqref{eq_coef} as a polynomial in indeterminate $c_{\beta ^{1}}$, denoted the polynomial by $P_1(c_{\beta ^{1}})$, and assume that the degree of $P_1(c_{\beta ^{1}})$ is $e_{1}$. Then, the polynomial can be rewritten in the form 	
	\begin{equation}\nonumber
		P_1(c_{\beta ^{1}}) = P_{2} c_{\beta ^{1}}^{e_{1}}+R_{1} 
	\end{equation}
	where $P_{2}$ and $R_{1}$ are functions satisfying $\partial P_{2}/ \partial c_{\beta ^{1}} = 0$ and $\partial^{e_{1}} R_{1}/ \partial c_{\beta ^{1}}^{e_{1}} = 0$. $P_{2}$ can also be regarded as a polynomial in indeterminate $c_{\beta ^{2}}$. Let us, in general, consider $P_k$ ($k = 1,\dots,j_1+\dots+j_m$) as a polynomial in indeterminate $c_{\beta ^{k}}$ and suppose the degree of $P_{k}(c_{\beta ^{k}})$ is $e_{k}$ ($e_{k} \ge 1$), then we have 
	\begin{equation}\label{eq_poly1}
		P_{k}(c_{\beta ^k}) = P_{k+1} c_{\beta ^{k}}^{e_{k}}+R_{k} 
	\end{equation}
	where $P_{k+1}$ is a function satisfying $\partial P_{k+1}/ \partial c_{\beta ^{\bar k}} = 0$ for $\bar k =1,\dots,k$, and $R_{k}$ is a function satisfying $\partial^{e_k} R_{k}/ \partial c_{\beta ^k}^{e_k} = 0$, $\partial R_{k}/ \partial c_{\beta ^{\hat k}} = 0$ for $\hat k =1,\dots,k-1$, and $c_{\beta ^k}$ is a factor of every term in $R_{k}$. It is clear that $P_{k+1} $ can be regarded as a polynomial in indeterminate $c_{\beta ^{k+1}}$ if $k+1 \le j_1+\dots+j_m$ is satisfied. Since any two of the multi-indices $\alpha^1,\dots,\alpha^s$ are different from each other, we know that $P_{j_1+\dots+j_m+1}$ must be a nonzero constant. Setting $P_{j_1+\dots+j_m}=r_{j_1+\dots+j_m} \ne R_{j_1+\dots+j_m}(0)=0$ where $r_{j_1+\dots+j_m}$ is a constant, \eqref{eq_poly1} has at least one nonzero solution for $c_{\beta ^{j_1+\dots+j_m}}$. When $c_{\beta ^{j_1+\dots+j_m}},\dots,c_{\beta ^{k+1}}$ have been determined for $k =j_1+\dots+j_m-1,\dots,2$, we set $P_{k}=r_k \ne R_k(0,c_{\beta ^{k+1}},\dots,c_{\beta ^{j_1+\dots+j_m}})=0$ where $r_k$ is a constant, and then we can find a nonzero $c_{\beta ^{k}}$ satisfying \eqref{eq_poly1}. We finally solve \eqref{eq_coef} for a nonzero $c_{\beta ^{1}}$. Therefore, an appropriate lower triangular coordinate transformation $x=V(y)$ such that $\alpha \notin {\cal I}(q)$ is obtained.
	\end{proof}

\subsection{Classifications of Lower Triangular Forms}
Having finished the previous discussions about the invariant multi-indices of functions, let us investigate what properties of lower triangular forms are invariant under lower triangular coordinate transformations.
\begin{definition}
$\alpha = (\alpha_1,\dots,\alpha_j)$ and $\beta = (\beta_1,\dots,\beta_j)$ are multi-indices. We write $\alpha \le \beta$ if $\alpha_i \le \beta_i$ holds for all $i=1,\dots,j$, and write $\alpha < \beta$ if $\alpha \le \beta$ and $\alpha \ne \beta$ \cite{rudin1991functional}.
\end{definition}\par
\begin{remark}
	Suppose $\alpha$ and $\beta$ are proper $j$-multi-indices. $\alpha \le \beta$ implies $\alpha \preceq \beta$.
\end{remark}
\begin{proposition}
	$p(x_1,\dots,x_{m})$ and $q(x_1,\dots,x_{m-1})$ are smooth functions with $p(0)=0$ and $q(0) \ne 0$. Then ${\cal E}_m(p) = {\cal E}_m(p \cdot q)$.
\end{proposition}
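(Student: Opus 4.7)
The plan is to work at the level of Taylor coefficients at the origin and exploit the fact that $q$ depends only on $x_1,\dots,x_{m-1}$. Every monomial in the Taylor series of $q$ at $0$ has the form $x^\gamma$ with $\gamma_m=0$, so for any proper $m$-multi-index $\alpha$ the Taylor coefficient of $x^\alpha$ in $p\cdot q$ will satisfy
\begin{equation}\nonumber
c^{pq}_\alpha \;=\; \sum_{\gamma\le\alpha,\,\gamma_m=0} c^{p}_{\alpha-\gamma}\,d^{q}_\gamma,
\end{equation}
where $c^{p}_\beta$ and $d^{q}_\gamma$ denote the Taylor coefficients of $p$ and $q$. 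I will combine this expansion with the remark that $\alpha-\gamma\le\alpha$ componentwise implies $\alpha-\gamma\preceq\alpha$ for proper $m$-multi-indices, and with Theorem~\ref{theo_wmem} (so ${\cal E}_m={\cal W}_m$), in order to transfer essentialness in both directions.

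First I will establish ${\cal E}_m(p)\subseteq{\cal E}_m(p\cdot q)$. Fix $\alpha\in{\cal E}_m(p)$. To see $\alpha\in{\cal I}_m(p\cdot q)$, I will argue that every summand with $\gamma\ne 0$ above must vanish: if $c^{p}_{\alpha-\gamma}\ne 0$ for some such $\gamma$, then $\alpha-\gamma$ is a proper $m$-multi-index of $p$ (its $m$-th entry is still $\alpha_m\ge 1$) with $\alpha-\gamma<\alpha$ componentwise, hence $\alpha-\gamma\prec\alpha$, contradicting the essentialness of $\alpha$. This forces $c^{pq}_\alpha = c^{p}_\alpha\,q(0)\ne 0$. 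For essentialness in $p\cdot q$, I will assume for contradiction that some $\beta\in{\cal I}_m(p\cdot q)$ with $\beta\ne\alpha$ satisfies $\beta\prec\alpha$. Expanding $c^{pq}_\beta$ produces a nonzero term $c^{p}_{\beta-\gamma}d^{q}_\gamma$, so $\beta-\gamma\in{\cal I}_m(p)$ with $\beta-\gamma\preceq\beta\prec\alpha$. The case $\beta-\gamma=\alpha$ is ruled out by antisymmetry of $\preceq$ (it would give $\alpha\preceq\beta\preceq\alpha$ with $\alpha\ne\beta$), so $\beta-\gamma$ is a distinct proper $m$-multi-index of $p$ strictly generating $\alpha$, again contradicting $\alpha\in{\cal E}_m(p)$.

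The converse inclusion will follow by a symmetry argument: because $q(0)\ne 0$, the reciprocal $1/q$ is smooth near the origin with $(1/q)(0)\ne 0$ and $p=(p\cdot q)\cdot(1/q)$, so applying the preceding step with $p\cdot q$ and $1/q$ playing the roles of $p$ and $q$ will yield ${\cal E}_m(p\cdot q)\subseteq{\cal E}_m(p)$. The only real obstacle I anticipate is ruling out accidental cancellations in the sum defining $c^{pq}_\alpha$; essentialness of $\alpha$ is precisely the property that prevents this, because any such cancellation would have to arise from a strictly smaller proper $m$-multi-index of $p$ that generates $\alpha$, violating $\alpha\in{\cal E}_m(p)={\cal W}_m(p)$.
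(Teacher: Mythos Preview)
Your proof is correct. The forward inclusion ${\cal E}_m(p)\subseteq{\cal E}_m(p\cdot q)$ is argued essentially as the paper does: both use the Leibniz expansion and the observation that, since $q$ does not depend on $x_m$, every contributing product has the form $c^p_{\alpha-\gamma}d^q_\gamma$ with $(\alpha-\gamma)_m=\alpha_m$, so $\alpha-\gamma$ is still a proper $m$-multi-index and $\alpha-\gamma\le\alpha$ forces $\alpha-\gamma\preceq\alpha$. The paper phrases this by splitting $q=q(0)+q_1$ and showing that $p\cdot q_1$ carries no proper $m$-multi-index $\preceq\gamma$ for any $\gamma\in{\cal E}_m(p)$, but the content is the same.

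Where you diverge is the reverse inclusion. The paper handles it directly: it takes a proper $m$-multi-index $\alpha$ that is $\preceq$-incomparable with every element of ${\cal E}_m(p)$ and shows, again via Leibniz, that $\alpha\notin{\cal I}_m(p\cdot q)$ (if $\beta\le\alpha$ is proper $m$, then $\beta\preceq\alpha$ forces $\beta\notin{\cal I}_m(p)$; if $\beta$ is not proper $m$, then $\alpha-\beta$ has positive $m$-th entry and $\partial^{\alpha-\beta}q_1(0)=0$). Together with the forward step this pins down ${\cal E}_m(p\cdot q)$ exactly. Your route---apply the already-proved inclusion to the pair $(p\cdot q,\,1/q)$---is shorter and bypasses that case analysis. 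The only extra ingredient you need is that $1/q$ is smooth near the origin with $(1/q)(0)\ne 0$ and depends only on $x_1,\dots,x_{m-1}$, all of which are immediate from $q(0)\ne 0$. Both arguments are valid; yours is more economical, while the paper's direct argument has the minor virtue of not invoking the reciprocal.
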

\begin{proof}
	Let ${q_1}({x_1}, \dots ,{x_m-1}) = q({x_1}, \dots ,{x_m-1}) - q(0)$. The following equation is the well-known Leibniz formula \cite{rudin1991functional}
	\begin{equation}\label{eq_Leibniz}
		\frac{{\partial ^\alpha }(p \cdot {q_1})}{\partial x^\alpha} = \sum\limits_{\beta  \le \alpha } \left( \frac{{\prod\limits_{i = 1}^m {{\alpha _i}!} }}{{\prod\limits_{i = 1}^m {({\beta _i}!({\alpha _i} - {\beta _i})!)} }} \frac{{\partial ^\beta }p}{\partial x^\beta} \frac{{\partial ^{\alpha  - \beta }}{q_1}}{\partial x^{\alpha  - \beta }} \right)
	\end{equation}
	where $\alpha$ is proper $m$-multi-index.
	Assuming $\gamma \in {\cal E}_m(p)$ and $\alpha \preceq \gamma$, \eqref{eq_Leibniz} yields ${{\partial ^\alpha }(p \cdot {q_1})}/{\partial x^\alpha}(0) = 0$; that is, $\alpha$ is  not a multi-index of $p \cdot q_1$. On the other hand, ${\cal E}_m(q(0) \cdot p) = {\cal E}_m(p)$. Thus we have ${\cal E}_m(p) \subseteq {\cal E}_m(p \cdot q)$. Now let $\alpha \notin {\cal E}_m(p) $ be a proper $m$-multi-index satisfying both $\alpha \npreceq \gamma$ and $\gamma \npreceq \alpha$  for all $\gamma \in {\cal E}_m(p)$, and let $\beta  \le \alpha$ be a multi-index. Since ${\partial ^\beta p}/{\partial x^\beta}(0) = 0$ holds if $\beta$ is a proper $m$-multi-index and ${\partial ^{\alpha-\beta} q_1}/{\partial x^{\alpha-\beta}}(0) = 0$ holds if $\beta$ is not a proper $m$-multi-index, we have ${\partial ^{\alpha}}(p \cdot {q_1})(0) = 0$, which implies ${\cal E}_m(p \cdot q) \subseteq {\cal E}_m(p)$. In conclusion, ${\cal E}_m(p) = {\cal E}_m(p \cdot q)$.   
\end{proof}
\begin{theorem}\label{theo_indices}
	Suppose $y=U(x)$ is a lower triangular coordinate transformation, and rewrite \eqref{eq_tri} in $y$-coordinates as follows
	\begin{equation}\label{eq_tri_y}
		\begin{aligned}
			&{{\dot y}_1} = {{\bar f}_1}({y_1},{y_2})\\
			&\; \vdots \\
			&{{\dot y}_{n - 1}} = {{\bar f}_{n - 1}}({y_1} \dots ,{y_n})\\
			&{{\dot y}_n} = {{\bar f}_n}({y_1}, \dots ,{y_n}) + {{\bar g}_n}({y_1}, \dots ,{y_n})v\;.
		\end{aligned}
	\end{equation}
	Then ${\cal E}_{i+1}({f_i}) = {\cal E}_{i+1}({\bar f_i})$ holds for any $i = 1, \dots ,n - 1$.
\end{theorem}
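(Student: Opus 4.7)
The plan is to exploit the triangular structure of $U$ to split $\bar f_i$ into a factor carrying all of its $y_{i+1}$-dependence together with a residual that does not depend on $y_{i+1}$ at all, then chain the proposition on ${\cal E}_m(p)={\cal E}_m(pq)$ immediately preceding this theorem with Theorem \ref{theo_ess_trans} applied to the restriction of the coordinate transformation to its first $i+1$ entries. Writing $x=V(y)$ for the inverse triangular transformation guaranteed by Lemma \ref{lem_tri_coord} and differentiating $y_i=U_i(x_1,\dots,x_i)$ along the dynamics gives
\begin{equation*}
\bar f_i(y)=q(y_1,\dots,y_i)\,p(y_1,\dots,y_{i+1})+B(y_1,\dots,y_i),
\end{equation*}
where $p(y_1,\dots,y_{i+1}):=f_i(V_1(y_1),\dots,V_{i+1}(y_1,\dots,y_{i+1}))$, $q(y_1,\dots,y_i):=(\partial U_i/\partial x_i)(V(y))$, and $B$ collects the contributions of $f_1,\dots,f_{i-1}$, which by the lower triangular form depend only on $x_1,\dots,x_i$ and hence only on $y_1,\dots,y_i$. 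Because $U$ is a triangular diffeomorphism its Jacobian has nonvanishing diagonal, so $q(0)\ne 0$.

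The identity then follows from a short chain of equalities. Since $B$ does not depend on $y_{i+1}$, it contributes no proper $(i+1)$-multi-index, so ${\cal I}_{i+1}(\bar f_i)={\cal I}_{i+1}(pq)$, whence ${\cal W}_{i+1}(\bar f_i)={\cal W}_{i+1}(pq)$. Both $\bar f_i$ and $pq$ depend on at most $y_1,\dots,y_{i+1}$, so Theorem \ref{theo_wmem} gives ${\cal E}_{i+1}(\bar f_i)={\cal W}_{i+1}(\bar f_i)$ and ${\cal E}_{i+1}(pq)={\cal W}_{i+1}(pq)$, hence ${\cal E}_{i+1}(\bar f_i)={\cal E}_{i+1}(pq)$. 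The proposition preceding this theorem, applied with $m=i+1$ and the nonvanishing factor $q$, yields ${\cal E}_{i+1}(pq)={\cal E}_{i+1}(p)$. Finally, the map $(y_1,\dots,y_{i+1})\mapsto(V_1(y_1),\dots,V_{i+1}(y_1,\dots,y_{i+1}))$ is itself a lower triangular coordinate transformation on $i+1$ variables and $p$ is exactly $f_i$ composed with this map, so Theorem \ref{theo_ess_trans} supplies ${\cal E}_{i+1}(p)={\cal E}_{i+1}(f_i)$.

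The step deserving most care is the reduction from essentiality to weak essentiality. One must notice that $\bar f_i$, though formally defined on the full $y$-space, genuinely depends only on $y_1,\dots,y_{i+1}$, so that by Proposition \ref{prop_ngene} no proper $j$-multi-index of $\bar f_i$ with $j>i+1$ is available to generate a proper $(i+1)$-multi-index; this is the point at which the residual $B$ becomes invisible and the classification data at level $i+1$ is truly captured by ${\cal I}_{i+1}$ alone. The rest of the argument is a routine chaining of the results already established in the paper.
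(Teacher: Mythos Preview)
Your proof is correct and follows essentially the same strategy as the paper's: express $\bar f_i$ via the chain rule as $(\partial U_i/\partial x_i)\,f_i$ plus terms depending only on the first $i$ variables, apply the proposition on ${\cal E}_m(p)={\cal E}_m(pq)$, and then invoke Theorem \ref{theo_ess_trans}. The only cosmetic difference is the order of operations---the paper carries out the product argument in $x$-coordinates and then transports via Theorem \ref{theo_ess_trans}, whereas you first pass to $y$-coordinates and apply the product proposition there; your explicit detour through ${\cal I}_{i+1}$, ${\cal W}_{i+1}$, and Theorem \ref{theo_wmem} to dispose of the residual $B$ is slightly more careful than the paper's one-line dismissal of those terms.
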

\begin{proof}
	Let us compute the ${{\bar f}_i}({y_1} \dots ,{y_{i+1}})$ in $x$-coordinates
	\begin{equation}\nonumber
	\begin{aligned}
			&{{\bar f}_i}({y_1}, \dots ,{y_{i + 1}}) = \frac{{\partial {U_i}}}{{\partial {x_i}}}{f_i}({x_1}, \dots ,{x_{i + 1}})\\
			&\qquad + \sum\limits_{k = 1}^{i - 1} {\frac{{\partial {U_i}}}{{\partial {x_k}}}} {f_i}({x_1}, \dots ,{x_{k+1}}) = {\bar f}_i(U({x_1}, \dots ,{x_{i + 1}}))\;.
	\end{aligned}
	\end{equation}
	Thanks to the above proposition, we have ${\cal E}_{i+1} ({{\partial {U_i}}}/{{\partial {x_i}}} \cdot {f_i}) = {\cal E}_{i+1}({f_i})$. In addition, ${\cal E}_{i+1} ({{\partial {U_i}}}/{{\partial {x_k}}} \cdot {f_k}) = \emptyset $ is satisfied for any $k =1,\dots,i-1$. Therefore ${\cal E}_{i+1} ({\bar f}_i(U({x_1}, \dots ,{x_{i + 1}}))) = {\cal E}_{i+1}({f_i})$ holds. Using Theorem \ref{theo_ess_trans}, we conclude ${\cal E}_{i+1} ({\bar f}_i({y_1}, \dots ,{y_{i + 1}})) = {\cal E}_{i+1}({f_i})$. 
\end{proof}\par
\begin{corollary}
	Suppose $y=U(x)$ is a lower triangular coordinate transformation.  Rewriting \eqref{eq_tri} in $y$-coordinates yields \eqref{eq_tri_y}. Then ${\cal L}_{i+1}({f_i}) = {\cal L}_{i+1}({\bar f_i})$ for $i = 1, \dots ,n - 1$.
\end{corollary}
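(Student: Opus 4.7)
The plan is to reduce the corollary to the already-established Theorem~\ref{theo_indices} by showing that the least $(i+1)$-multi-index of any such $f_i$ is in fact an essential $(i+1)$-multi-index, and then exploiting the symmetry of a lower triangular change of coordinates (by Lemma~\ref{lem_tri_coord}, $U^{-1}$ is also lower triangular).

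First, I would establish the key auxiliary claim: if $\alpha = {\cal L}_{i+1}(f_i)$, then $\alpha \in {\cal E}_{i+1}(f_i)$. Since the lower triangular structure of \eqref{eq_tri} forces $f_i$ to depend only on $x_1,\dots,x_{i+1}$, every multi-index of $f_i$ is a proper $k$-multi-index with $k \le i+1$. Suppose, toward a contradiction, that some $\beta \in {\cal I}(f_i)$ with $\beta \ne \alpha$ satisfies $\beta \prec \alpha$. By Proposition~\ref{prop_ngene}, the proper dimension of $\beta$ must be at least $i+1$, so $\beta$ is a proper $(i+1)$-multi-index. By Theorem~\ref{thm_gene_condi}, $\beta \prec \alpha$ yields $\sum_{j=1}^{k}\beta_j \le \sum_{j=1}^{k}\alpha_j$ for every $k=1,\dots,i+1$. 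Locating the first index $k_0$ where $\beta_{k_0} \ne \alpha_{k_0}$ (which exists since $\beta \ne \alpha$), the partial-sum inequality at $k_0$ forces $\beta_{k_0} < \alpha_{k_0}$, hence $\beta \lessdot \alpha$. This contradicts the lexicographic minimality of $\alpha$, and so $\alpha$ is essential.

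Applying Theorem~\ref{theo_indices}, which guarantees ${\cal E}_{i+1}(f_i) = {\cal E}_{i+1}(\bar f_i)$, I conclude ${\cal L}_{i+1}(f_i) \in {\cal E}_{i+1}(\bar f_i) \subseteq {\cal I}_{i+1}(\bar f_i)$. Running the same argument with the roles of $f_i$ and $\bar f_i$ swapped — legitimate because $x = U^{-1}(y)$ is itself a lower triangular coordinate transformation by Lemma~\ref{lem_tri_coord}, so $\bar f_i$ has the same structural property of depending only on $y_1,\dots,y_{i+1}$ — gives ${\cal L}_{i+1}(\bar f_i) \in {\cal I}_{i+1}(f_i)$. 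Minimality of ${\cal L}_{i+1}(f_i)$ inside ${\cal I}_{i+1}(f_i)$ and of ${\cal L}_{i+1}(\bar f_i)$ inside ${\cal I}_{i+1}(\bar f_i)$ then yield ${\cal L}_{i+1}(f_i) \lessdot {\cal L}_{i+1}(\bar f_i)$ (or equal) and the reverse, forcing equality.

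The only genuinely delicate step is the implication \emph{generation plus distinctness $\Rightarrow$ strict lexicographic precedence}, i.e.\ showing that the partial-sum characterization of Theorem~\ref{thm_gene_condi} precludes $\beta$ from being lexicographically larger than $\alpha$ when $\beta \prec \alpha$; everything else is bookkeeping on top of the previously proved theorems. Once that observation is in hand, the corollary is essentially a one-line consequence of Theorem~\ref{theo_indices} combined with the symmetry provided by Lemma~\ref{lem_tri_coord}.
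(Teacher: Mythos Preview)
Your argument is correct and follows the route the paper intends: the corollary is stated without proof immediately after Theorem~\ref{theo_indices}, and the implicit reasoning is exactly that the least $(i+1)$-multi-index is essential, so invariance of ${\cal E}_{i+1}$ forces invariance of ${\cal L}_{i+1}$. The only cosmetic difference is that the paper has already packaged your ``$\beta \prec \alpha \Rightarrow \beta \lessdot \alpha$'' observation into Corollary~\ref{coro_min_2} (with $W=\emptyset$) together with Theorem~\ref{theo_wmem} (${\cal E}_{i+1}(f_i)={\cal W}_{i+1}(f_i)$ since $f_i$ depends only on $x_1,\dots,x_{i+1}$), so one can cite those rather than rerun the partial-sum argument; once ${\cal L}_{i+1}(f_i)\in{\cal E}_{i+1}(f_i)={\cal E}_{i+1}(\bar f_i)$ and symmetrically, equality of the lexicographic minima is immediate and the separate appeal to Lemma~\ref{lem_tri_coord} is not even needed.
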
\par
This corollary leads to a way to classify lower triangular forms. 
\begin{definition}\label{def_typeL}
	All the lower triangular forms taking the form \eqref{eq_tri} and satisfying ${\cal L}_{i+1}(f_i) =\alpha^i$ for $i=1,\dots,n-1$ are grouped under a specific type, denoted by $[\alpha^1,\dots,\alpha^{n-1}]$. Arbitrary element of $[\alpha^1,\dots,\alpha^{n-1}]$ can be expressed as
	\begin{equation}
		\begin{aligned}\label{eq_lea_sys}
			&{{\dot x}_1} = {c_1}{x^{{\alpha ^1}}} + {{\hat f}_1}(x_1,x_2)\\
			&\; \vdots \\
			&{{\dot x}_{n - 1}} = {c_{n - 1}}{x^{{\alpha ^{n - 1}}}} + {{\hat f}_{n - 1}}(x_1 \dots ,x_n)\\
			&{{\dot x}_n} = {f_n}({x_1}, \dots ,{x_n}) + {g_n}({x_1}, \dots ,{x_n})v
		\end{aligned}
	\end{equation} 
	where, for any $i=1,\dots,n-1$, ${\hat f}_i$ is smooth function vanishing at the origin and $\alpha^i \lessdot \beta$, provided that $\beta$ is any $(i+1)$-multi-index of ${\hat f}_i$, is satisfied.
\end{definition}\par  
\begin{remark}
	System \eqref{eq_p} is of type
	\begin{equation}\nonumber
		[(0,p_1), (0,0,p_2), \dots, (0,\dots,0,p_{n-1})]\;.
	\end{equation} 
\end{remark}

Theorem \ref{theo_indices} results in another way to classify lower triangular forms. 
\begin{definition}\label{def_typeE}
	All the systems taking the form \eqref{eq_tri} and having the same ${\cal E}_{i+1}(f_i)$ for $i=1,\dots,n-1$ can be expressed as
	\begin{equation}\label{eq_ess_sys} 
		\begin{aligned}
			&{{\dot x}_1} =f(x_1,x_2)= \sum\limits_{\alpha  \in {\kern 1pt} {{\cal E}_2(f_1)}} {c_1^\alpha {x^\alpha }}  + {{\tilde f}_1}({x_1,x_2})\\
			&\; \vdots \\
			&{{\dot x}_{n - 1}} = {f_{n-1}}({x_1}, \dots ,{x_n})\\
			&\;\;= \sum\limits_{\alpha  \in {\kern 1pt} {{\cal E}_n(f_{n-1})}} {c_{n - 1}^\alpha {x^\alpha }}  + {{\tilde f}_{n - 1}}({x_1} \dots ,{x_{n}})\\
			&{{\dot x}_n} = {f_n}({x_1}, \dots ,{x_n}) + {g_n}({x_1}, \dots ,{x_n})v
		\end{aligned}
	\end{equation} 
	where every ${\tilde f}_i$ for $i=1,\dots,n-1$ is smooth functions with ${\tilde f}_i(0)=0$ and every proper $(i+1)$-multi-index of ${\tilde f}_i$ can be generated by some element of ${\cal E}_{i+1}(f_i)$. For sake of convenience, we say that \eqref{eq_ess_sys} is of type $[\kern-0.15em[ {\cal E}_2(f_1) , \dots ,{\cal E}_n(f_{n-1})  
	]\kern-0.15em]$.     
	
\end{definition}\par
\begin{remark}\label{rem_controlterms}
	Apart from the invariance of ${\cal L}_{i+1}(f_i)$ and ${\cal E}_{i+1}(f_i)$ under lower triangular coordinate transformations, another reason we think the two classifications given in Definition \ref{def_typeL} and \ref{def_typeE} are helpful is as follows. For a lower triangular form taking the form \eqref{eq_tri}, $x_{i+1}$ can be seen as a control input of $\dot x_i = f_i(x_1,\dots, x_{i+1})$ to some extent, such as designing a feedback controller for  \eqref{eq_tri} using backstepping. So we may also consider $(x_1,\dots,x_{i+1})^{\alpha}$ where $\alpha = {\cal L}_{i+1}(f_i)$ or $\alpha \in {\cal E}_{i+1}(f_i)$ as one of the "control" terms for $\dot x_i = f_i(x_1,\dots, x_{i+1})$. From some literature, such as \cite{qian2001continuous,lin2002adaptive,hong2006adaptive,long2015integral}, we know that, at least for several types of lower triangular forms, there are some control strategies that can be applied to the entire type of lower triangular form to meet some control objectives, no matter what $\hat{f}_i$ and $\tilde{f}_i$ are. Of course, for many other types of lower triangular forms, a control strategy may only be effective when $\hat{f}_i$ and $\tilde{f}_i$ satisfy certain conditions, such as \cite{lin2000adding,qian2001non,qian2002practical,qian2006recursive,sun2015new,long2012h,su2014stabilization,ding2020nonsingular,ding2021low}. We look forward to more research on the control algorithms for \eqref{eq_lea_sys} and \eqref{eq_ess_sys}.
\end{remark}
\begin{remark}
	If the proper $(i+1)$-multi-index $(0,\dots,0,1)$ belongs to ${\cal E}_{i+1}(f_i)$ then it is the only element of ${\cal E}_{i+1}(f_i)$. In addition, the proper $(i+1)$-multi-index $(0,\dots,0,k)$ with $k \ge 1$ can generate any proper $(i+1)$-multi-index $\alpha$ satisfying $\left | \alpha \right | \ge k$. So there are at most a finite number of proper $(i+1)$-multi-indices that can not be generated by ${\cal E}_{i+1}(f_i)$ when $(0,\dots,0,k) \in {\cal E}_{i+1}(f_i)$; see the following proposition.
\end{remark}

\begin{proposition}\label{prop_finiteset}
	$I$ is a set of proper $i$-multi-indices and ${\cal A}_i$ represents the set consisting of all the proper $i$-multi-indices. ${\cal A}_i \setminus {\cal G}_i(I)$ is finite if and only if one can find some positive integer $k$ for which the proper $i$-multi-index $\lambda^{i,k}=(0,\dots,0,k)$ belongs to $I$. 
\end{proposition}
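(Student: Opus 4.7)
The plan is to apply Theorem \ref{thm_gene_condi} in both directions, exploiting the very rigid cumulative-sum constraints that the multi-index $\lambda^{i,k}=(0,\dots,0,k)$ imposes both on what can generate it and on what it can generate.

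For the sufficiency, I would assume $\lambda^{i,k}\in I$ and show that every proper $i$-multi-index $\beta$ with $|\beta|\ge k$ lies in ${\cal G}_i(I)$. When $\beta=\lambda^{i,k}$ this follows from reflexivity $\alpha\preceq\alpha$. Otherwise, since both multi-indices are proper $i$-multi-indices, Theorem \ref{thm_gene_condi} applies with $m_\alpha=m_\beta=i$: $\lambda^{i,k}\prec\beta$ iff $\sum_{j=1}^l 0\le\sum_{j=1}^l\beta_j$ for $l<i$ and $k\le|\beta|$. The first batch of inequalities is automatic, so the condition collapses to $|\beta|\ge k$. Consequently ${\cal A}_i\setminus{\cal G}_i(I)$ is contained in the set of proper $i$-multi-indices with $|\beta|<k$, which is finite.

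For the necessity I would argue the contrapositive: if no $\lambda^{i,k}$ with $k\ge 1$ belongs to $I$, then $\lambda^{i,K}\in{\cal A}_i\setminus{\cal G}_i(I)$ for every $K\ge 1$, giving infinitely many elements. To see this, suppose $\alpha\in I$ satisfies $\alpha\preceq\lambda^{i,K}$. If $\alpha=\lambda^{i,K}$ this contradicts the hypothesis. Otherwise $\alpha\prec\lambda^{i,K}$, and Theorem \ref{thm_gene_condi} forces $\sum_{j=1}^l\alpha_j\le 0$ for every $l<i$; since the entries are nonnegative integers this yields $\alpha_1=\cdots=\alpha_{i-1}=0$, and because $\alpha$ is a proper $i$-multi-index we obtain $\alpha=\lambda^{i,\alpha_i}$ with $1\le\alpha_i\le K$, again contradicting the hypothesis.

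I do not anticipate a serious obstacle: the whole argument is a mechanical application of Theorem \ref{thm_gene_condi}, with the sparse shape of $\lambda^{i,k}$ collapsing the cumulative-sum inequalities into a single constraint on $|\beta|$ in one direction and forcing any generator to share the same sparse pattern in the other. The only minor care needed is the reflexive boundary case $\beta=\lambda^{i,k}$ in the sufficiency direction, handled by item (i) of the corollary following Theorem \ref{thm_gene_condi}.
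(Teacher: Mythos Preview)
Your proposal is correct and follows essentially the same route as the paper. The paper declares the sufficiency ``obvious'' (relying on the preceding remark that $(0,\dots,0,k)$ generates every proper $i$-multi-index $\alpha$ with $|\alpha|\ge k$), and for the necessity argues the contrapositive via the same key observation you make---that the only proper $i$-multi-indices generating $\lambda^{i,K}$ are $\lambda^{i,k'}$ with $1\le k'\le K$; the only cosmetic difference is that the paper phrases this by passing through ${\cal E}_i(I)$ rather than working directly with an arbitrary $\alpha\in I$ as you do.
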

\begin{proof}
	The sufficiency is obvious, we only prove the necessity. Assume $\lambda^{i,k} \not\in I$ for all positive integer $k$. $\alpha$ is arbitrary element of ${\cal E}_i(I)$. With the assumption in mind, $\lambda^{i,k} \not\in {\cal G}_i(\{\alpha\})$ for all $k>0$ because all the proper $i$-multi-indices that can generate $\lambda^{i,k}$ are $\lambda^{i,k'}$, $k'=1,\dots,k$. It follows that $\lambda^{i,k} \not\in {\cal G}_i({\cal E}_i(I))={\cal G}_i(I)$ for all $k>0$. This means that ${\cal A}_i \setminus {\cal G}_i(I)$ is infinite. This contradiction completes the proof.  
\end{proof}

\begin{example} 
	Consider the following lower triangular form.
	\begin{equation}\nonumber
		\begin{aligned}
			&{{\dot x}_1} = \sin x_2^3 + {x_1}{x_2}\\
			&{{\dot x}_2} = {x_3}x_2^3 + {x_3}x_1^3 + {x_2}\\
			&{{\dot x}_3} = x_4^3{x_3} + {x_4}{x_1} + {x_3}\\
			&{{\dot x}_4} = {x_4} + v
		\end{aligned}
	\end{equation} 
	Let us focus on the functions expressed by the right-hand sides of the first three equations of the above system. From the least multi-indices of those functions, this system is of type
	\begin{equation}\nonumber
		[(0,3),(0,3,1),(0,0,1,3)],
	\end{equation}
	and, after having computed essential multi-indices of those functions, we know that the system is also of type 	
	\begin{equation}\nonumber 
		\left[\kern-0.15em\left[ {\{ (0,3),(1,1)\} ,\{ (0,3,1)\} ,\{ (0,0,1,3),(1,0,0,1)\} } 
		\right]\kern-0.15em\right].
	\end{equation}
\end{example}

\section{Feedback Equivalence}
In this section, we solve the problem of whether a nonlinear system is feedback equivalent to a given type of lower triangular form in two methods. The first one helps us determine what types the system belongs to by transforming the system into a lower triangular form if it is possible. And when the second method is adopted, we solve the problem by calculating a series of Lie brackets. 
\subsection{Transforming into Lower Triangular Forms}
Using the notation of the differential geometry, we write the drift vector field and the input vector field of \eqref{eq_s} as
\begin{equation}\nonumber
	F = {F_1}\frac{\partial }{{\partial {\xi _1}}}+  \dots  + {F_n}\frac{\partial }{{\partial {\xi _n}}},
\end{equation} 
and
\begin{equation}\nonumber
	G = {G_1}\frac{\partial }{{\partial {\xi _1}}} +  \dots  + {G_n}\frac{\partial }{{\partial {\xi _n}}},
\end{equation} 
respectively. Similarly, the drift vector field and the input vector field of \eqref{eq_tri} can be denoted by
\begin{equation}\nonumber
	f = {f_1}\frac{\partial }{{\partial {x_1}}} +  \dots  + {f_n}\frac{\partial }{{\partial {x_n}}},
\end{equation} 
and
\begin{equation}\nonumber
	g = {g_n}\frac{\partial }{{\partial {x_n}}}.
\end{equation}
Let $X$ and $Y$ be two $n$ dimensional vector fields defined on a neighborhood of the origin, $\mathrm{ad}_{X}Y=[X,Y]$ is the Lie bracket of $X$ and $Y$. Further let $h(\xi_1,\dots,\xi_n)$ be a smooth function, then $X(h)= \sum_{i=1}^{n} X_i \cdot \partial h/\partial \xi_i$.\par 
Though the sufficient and necessary condition under which a nonlinear system is equivalent to a lower triangular form has been already given in \cite{vcelikovsky1996equivalence}, let us show here a new condition that may be easier to check and may simplify the implementation of the equivalent transformation.
\begin{theorem}\label{theorem_tri}
	Let $D ^{n + 1} =  {\rm{span}}\left\{ 0 \right\}$, $D^n = \hat{D^n}  = {\rm{span}}\left\{ G \right\}$. System \eqref{eq_s} is locally equivalent to \eqref{eq_tri} via a feedback \eqref{eq_fb} and a change of coordinates \eqref{eq_ct} if and only if, for every $i=n-1,\dots,1$, \eqref{eq_s} satisfies the following condition: suppose $D^{k}$, $k=n+1,\dots,i+1$, and $\hat D^{l}$, $l=n,\dots,i+1$, have already been defined, take a vector field $G^{i+1} \in D^{i+1} \setminus D^{i+2}$, and set ${\hat D^{i}}= \mathrm{span} \{ \mathrm{ad}_{G^{i+1}}F,  \hat D^{i+1} \}$, then there exists an $n-i+1$ dimensional involutive distribution $D^i$ in a neighborhood of the origin such that  $D^i = \hat D^i$ in an open subset of ${{\mathbb{R}}^n}$ whose closure is a neighborhood of the origin.
\end{theorem}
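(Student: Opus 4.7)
The plan is to prove both directions of the equivalence, with the main structural tool being the simultaneous Frobenius theorem applied to a nested flag of involutive distributions, together with the feedback/diffeomorphism invariance of the constructed distributions.

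For the \emph{necessity} direction, I would first observe that all the distributions $D^i, \hat D^i$ in the statement are invariant under the class of transformations at issue: a diffeomorphism pushes involutive distributions to involutive ones, while a feedback $u = \alpha_u(\xi) + \beta_u(\xi) v$ replaces $F$ by $F + \alpha_u G$ and $G$ by $\beta_u G$, so for $G^{i+1} \in D^{i+1}$ one has $\mathrm{ad}_{G^{i+1}}(F + \alpha_u G) \equiv \mathrm{ad}_{G^{i+1}} F \pmod{\mathrm{span}\{G, G^{i+1}\}} \subseteq D^{i+1}$. Hence I may work directly in the lower triangular coordinates. There $g = g_n(x)\,\partial/\partial x_n$ spans $D^n$. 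A direct computation of $[\partial/\partial x_n, f]$ exploits $\partial f_i/\partial x_n = 0$ for $i < n-1$ to conclude that $\hat D^{n-1}$ agrees on an open dense subset with the manifestly involutive $D^{n-1} := \mathrm{span}\{\partial/\partial x_{n-1}, \partial/\partial x_n\}$, the density coming from $\partial f_{n-1}/\partial x_n \not\equiv 0$. Iterating, at step $i$ I pick $G^{i+1}$ whose $\partial/\partial x_{i+1}$ component is nonzero; the triangular structure forces $[G^{i+1}, f] \in \mathrm{span}\{\partial/\partial x_i, \dots, \partial/\partial x_n\}$ with nonvanishing $\partial/\partial x_i$ component generically (because $\partial f_i/\partial x_{i+1} \not\equiv 0$), yielding the required coordinate-aligned involutive $D^i$.

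For \emph{sufficiency}, I start from the nested involutive distributions $D^{n+1} \subset D^n \subset \cdots \subset D^1$ with $\dim D^i = n - i + 1$ and apply the simultaneous Frobenius theorem for flags to produce local coordinates $x = T(\xi)$ with $T(0) = 0$ satisfying $D^i = \mathrm{span}\{\partial/\partial x_i, \dots, \partial/\partial x_n\}$. In these coordinates, $G \in D^n$ takes the form $G = \tilde g(x)\,\partial/\partial x_n$ with $\tilde g(0) \neq 0$ because $G(0) \neq 0$; a feedback then rescales $\tilde g$ and cancels an appropriate drift term if desired. The remaining task is to extract the triangularity of the new drift $F$ from the condition $\hat D^i = D^i$ on an open subset whose closure is a neighborhood of the origin. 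Picking $G^{i+1} \in D^{i+1} \setminus D^{i+2}$, which in the new coordinates has the form $\sum_{k \geq i+1} \nu_k \partial/\partial x_k$ with $\nu_{i+1}(0) \neq 0$, the inclusion $[G^{i+1}, F] \in D^i$ holds on an open dense set, and therefore by continuity everywhere near the origin. Projecting onto $\partial/\partial x_j$ for $j < i$ and invoking triangularity established in previous (higher-$i$) steps by downward induction, the cross terms drop out and the relation collapses to $\nu_{i+1}\, \partial F_j/\partial x_{i+1} = 0$, so $\partial F_j/\partial x_{i+1} \equiv 0$ for all $j < i$. Doing this for every $i$ yields that $F_j$ depends only on $x_1, \dots, x_{j+1}$; the density condition $\hat D^i = D^i$ further gives $\partial F_j/\partial x_{j+1} \not\equiv 0$, completing the lower triangular form.

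The main obstacle I anticipate is the interplay between the pointwise equality $\hat D^i = D^i$ (required only on an open dense subset, not at the origin) and the smoothness needed to identify vanishing partial derivatives. The key reconciliation is that a smooth function vanishing on an open dense set vanishes identically in a neighborhood, so inclusions like $[G^{i+1}, F] \in D^i$ that hold generically propagate to identities everywhere near the origin, while the converse statement $\hat D^i = D^i$ genericly yields the nontrivial nondegeneracy $\partial F_j/\partial x_{j+1} \not\equiv 0$. A secondary subtlety is the independence of the construction from the choice of $G^{i+1}$: this is handled exactly as in the feedback-invariance argument above, since any two choices differ by an element of $D^{i+2}$ plus a function multiple, and the bracket with $F$ changes by an element of $\hat D^{i+1}$. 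Once these technicalities are managed, the rest of the proof reduces to bookkeeping with Lie brackets in straightened coordinates.
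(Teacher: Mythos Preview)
Your proposal is correct and follows essentially the same route as the paper: straighten the flag $D^n \subset \cdots \subset D^1$ by the Frobenius theorem, then compute $\mathrm{ad}_{g^i}f$ in those coordinates and read off from $\mathrm{ad}_{g^i}f \in D^{i-1}\setminus D^i$ the vanishing of $\partial f_j/\partial x_i$ for $j<i-1$ and the nonvanishing of $\partial f_{i-1}/\partial x_i$. Your write-up is in fact more careful than the paper's on several points the paper leaves implicit: the feedback invariance of the $D^i$, the propagation of $[G^{i+1},F]\in D^i$ from a dense open set to a full neighborhood by continuity, and the independence of the construction from the particular section $G^{i+1}$ chosen; the paper simply invokes $T_*[X,Y]=[T_*X,T_*Y]$ for necessity and omits these checks in the sufficiency direction.
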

\begin{proof}
	Note that, for any smooth vector fields $X=\textstyle \sum_{i=1}^n X_i {\partial}/{\partial x_i}$ and $Y=\textstyle \sum_{i=1}^n Y_i {\partial}/{\partial x_i}$, we have
	\begin{equation}\label{eq_pushforward}
		S_*([X,Y]) = [S_*(X),S_*(Y)]
	\end{equation}
	where $S:\mathbb{R}^n \to \mathbb{R}^n, x \mapsto \xi$ is a change of coordinates and $S_*$ is so-called the differential of $S$ or the pushforward induced by $S$ \cite{lang2012differential}. By using \eqref{eq_pushforward}, the necessity is clear because \eqref{eq_tri} satisfies the condition given in the theorem. Let us verify that the condition is sufficient. Due to ${D^n} \subset  \dots   \subset {D^1}$, we can find a change of coordinates $x = T(\xi )$ such that
		${D^i} = {\rm{span}}\left\{ {\partial }/{{\partial {x_i}}, \dots ,\partial }/{{\partial {x_n}}} \right\}$ \cite{lang2012differential}. 
	Let ${g^i} = {T_*}({G^i})$ and $f = {T_*}(F)$. Since  ${g^i} \in {D^i} $ for $i=n,\dots,1$, it can be expressed as ${g^i} = \textstyle \sum_{k = i}^n {g_k^i {\partial }/{{\partial {x_k}}}} $. Calculate the vector field ${\rm{a}}{{\rm{d}}_{{g^i}}}f$ as follows.
	\begin{equation}\nonumber
	\begin{aligned}
		{\rm{a}}{{\rm{d}}_{{g^i}}}f =&\sum\limits_{k = 1}^{i - 1} {\left( {\sum\limits_{j = i}^n {\frac{{\partial {f_k}}}{{\partial {x_j}}}g_j^i} } \right)\frac{\partial }{{\partial {x_k}}}}\\ 
		& + \sum\limits_{k = i}^n {\left( {\sum\limits_{j = i}^n {\frac{{\partial {f_k}}}{{\partial {x_j}}}g_j^i - } \sum\limits_{j = 1}^n {\frac{{\partial g_k^i}}{{\partial {x_j}}}{f_j}} } \right)} \frac{\partial }{{\partial {x_k}}}
	\end{aligned}
	\end{equation}
		Then, $\mathrm{ad}_{g^i}f \in {D^{i - 1}} \setminus {D^i}$, $i=n,\dots,2$, result in  
	\begin{equation}\nonumber
		\frac{{\partial {f_{i - 1}}}}{{\partial {x_i}}} \not\equiv 0,\;
		\frac{{\partial {f_j}}}{{\partial {x_i}}} \equiv 0,j = 1,\dots ,i - 2
	\end{equation}
	in a neighborhood of the origin. Thus \eqref{eq_s} in $x$-coordinates is of the form \eqref{eq_tri}.	
\end{proof}\par
\begin{remark}
	It is also clear that if a nonlinear system satisfies the condition given in Theorem \ref{theorem_tri} then the system can be transformed into a lower triangular form only via a change of coordinates.
\end{remark}
\begin{remark}
	Taking $G^n=G$ and $G^i=\mathrm{ad}_{G^{i+1}}F$ for $i=n-1,\dots,2$, the condition introduced in the above theorem is the same as the condition presented in \cite{vcelikovsky1996equivalence}. By choosing appropriate $G^i$, the calculations of the Lie brackets and design of equivalent transformation can be simplified. 
\end{remark}

The next example shows how to transform a system into its equivalent lower triangular form by using Theorem \ref{theorem_tri} and determine the types the system belongs to. 
\begin{example}\label{exa_transformation}
	Let us consider a nonlinear system expressed by \eqref{eq_nse1}
	\begin{figure*}
		\begin{equation}\label{eq_nse1}
			\begin{aligned}
				\dot{\xi}_{1}=& \xi_{1}-\xi_{3}+\left(\xi_{1}-\xi_{3}\right)\left(\xi_{2}-\xi_{3}^{2}+\xi_{4}\right)+\left(\xi_{1}-\xi_{3}+\xi_{4}\right)\left(\xi_{2}-\xi_{3}^{2}+\xi_{4}\right)+\left(\xi_{2}-\xi_{3}^{2}+\xi_{4}\right)^{3} \\
				\dot{\xi}_{2}=& \xi_{2}-\xi_{3}^{2}+2 \xi_{3}\left(\xi_{1}-\xi_{3}+\left(\xi_{1}-\xi_{3}+\xi_{4}\right)\left(\xi_{2}-\xi_{3}^{2}+\xi_{4}\right)\right)+\xi_{3}+\xi_{4} \\
				&+\left(\xi_{1}-\xi_{3}\right)\left(\xi_{2}-\xi_{3}^{2}+\xi_{4}\right)+\left(\xi_{2}-\xi_{3}^{2}+\xi_{4}\right)^{3}-\left(\left(\xi_{1}-\xi_{3}+\xi_{4}\right)^{2}+1\right) u \\
				\dot{\xi}_{3}=& \xi_{1}-\xi_{3}+\left(\xi_{1}-\xi_{3}+\xi_{4}\right)\left(\xi_{2}-\xi_{3}^{2}+\xi_{4}\right) \\
				\dot{\xi}_{4}=&\left(-\xi_{1}+\xi_{3}-\left(\xi_{2}-\xi_{3}^{2}+\xi_{4}\right)^{2}\right)\left(\xi_{2}-\xi_{3}^{2}+\xi_{4}\right)+\left(\left(\xi_{1}-\xi_{3}+\xi_{4}\right)^{2}+1\right) u
			\end{aligned}
		\end{equation}
		\hrulefill
	\end{figure*}
	and denote the drift vector field and input vector field of the system by $F(\xi)$ and $G(\xi)$.	
	Select a nonsingular vector field $G^4(\xi) \in D^4 = {\rm{span}} \{ G(\xi)\}$ as
	\begin{equation}\nonumber
		G^4(\xi)=G(\xi) \bigg{/}\left(\left(\xi_{1}-\xi_{3}+\xi_{4}\right)^{2}+1\right)=-\frac{\partial}{\partial \xi_{2}}+\frac{\partial}{\partial \xi_{4}}
	\end{equation}
	and calculate the Lie bracket of $G^4$ and $F$
	\begin{equation}\nonumber
		\begin{aligned}
			\mathrm{ad}_{G^4} F=&\left(\xi_{2}-\xi_{3}^{2}+\xi_{4}\right) \frac{\partial}{\partial \xi_{1}}+2 \xi_{3}\left(\xi_{2}-\xi_{3}^{2}+\xi_{4}\right) \frac{\partial}{\partial \xi_{2}}\\
			&+\left(\xi_{2}-\xi_{3}^{2}+\xi_{4}\right) \frac{\partial}{\partial \xi_{3}}.
	\end{aligned}
	\end{equation}
	In noting the form of the right-hand side of the above equation, we select 
	\begin{equation}\nonumber
		G^3(\xi)=\frac{\partial}{\partial \xi_{1}}+2 \xi_{3} \frac{\partial}{\partial \xi_{2}}+\frac{\partial}{\partial \xi_{3}}\;.
	\end{equation}
	Thanks to the choise for $G^3$, $\operatorname{ad}_{G^3} F$ is of such a simple form that we immediately take     
	\begin{equation}\nonumber
		G^2(\xi)=\operatorname{ad}_{G^3} F=\frac{\partial}{\partial \xi_{2}}.
	\end{equation}
	After finishing the computation of $\operatorname{ad}_{G^2} F$, as shown in \eqref{eq_adG2F}, 
	\begin{figure*}
		\begin{equation}\label{eq_adG2F}
			\begin{aligned}
				{\rm{ad}}_{F} G^{2}=&\left(2 \xi_{1}-2 \xi_{3}+\xi_{4}+3\left(\xi_{2}-\xi_{3}^{2}+\xi_{4}\right)^{2}\right) \frac{\partial}{\partial \xi_{1}} 
				+\left(\xi_{1}+2 \xi_{3}\left(\xi_{1}-\xi_{3}+\xi_{4}\right)-\xi_{3}+3\left(\xi_{2}-\xi_{3}^{2}+\xi_{4}\right)^{2}+1\right) \frac{\partial}{\partial \xi_{2}} \\
				&+\left(\xi_{1}-\xi_{3}+\xi_{4}\right) \frac{\partial}{\partial \xi_{3}}+\left(-\xi_{1}+\xi_{3}-3\left(\xi_{2}-\xi_{3}^{2}+\xi_{4}\right)^{2}\right) \frac{\partial}{\partial \xi_{4}} \\
				=&\left(\xi_{1}-\xi_{3}+3\left(\xi_{2}-\xi_{3}^{2}+\xi_{4}\right)^{2}\right) \frac{\partial}{\partial \xi_{1}}+G^{2}+\left(\xi_{1}-\xi_{3}+\xi_{4}\right) G^{3} 
				+\left(-\xi_{1}+\xi_{3}-3\left(\xi_{2}-\xi_{3}^{2}+\xi_{4}\right)^{2}\right) G^{4}
			\end{aligned}
		\end{equation}
		\hrulefill
	\end{figure*}
	we take		
	\begin{equation}\nonumber
		G^1(\xi)=\frac{\partial}{\partial \xi_{1}}.
	\end{equation}
	It is easy to verify that $D^i = {\rm{span}} \{G^i,\dots,G^4\} $, $i = 1, \dots, 4$,  are $5-i$ dimensional involutive distributions satisfying $ D^i = {\mathrm{span}} \{ \mathrm{ad}_{G^{i+1}} F, \hat D^{i+1}\} $ in an open set whose closure is a neighborhood of the origin. The Frobenius theorem \cite{lang2012differential} guarantees that there exists a change of coordinates $x=T(\xi)$ such that
	\begin{equation}\nonumber
		\begin{aligned}
			& G^j(T_i)  \ne 0, \;{\rm{for}} \;  i=j \\
			& G^j(T_i)   = 0, \;{\rm{for}} \;  i<j
		\end{aligned}
	\end{equation}
	where $i= 1,\dots,4$. Solving the above equations, we obtain a change of coordinates
	\begin{equation}\nonumber
		\begin{aligned}
			x_{1} & = \xi_{1}-\xi_{3} \\
			x_{2} & = \xi_{2}+\xi_{4}-\xi_{3}^{2} \\
			x_{3} & = \xi_{3} \\
			x_{4} & = \xi_{4}
		\end{aligned}
	\end{equation}
	and, in $x$-coordinates, \eqref{eq_nse1}  can be rewritten as 
	\begin{equation}\label{eq_nse1x}
		\begin{aligned}
			\dot{x}_{1}&=x_{2}^{3}+x_{1} x_{2} \\
			\dot{x}_{2}&=x_{3}+x_{2} \\
			\dot{x}_{3}&=x_{4} x_{2}+x_{2} x_{1}+x_{1} \\
			\dot{x}_{4}&=-x_{2}^{3}-x_{2} x_{1}+\left(1+(x_1+x_4)^{2}\right) u.
		\end{aligned}
	\end{equation}
	Examining the right-hand sides of the first three equations of \eqref{eq_nse1x}, this system is of type $[(0,3),(0,0,1),(0,1,0,1)]$, and is also of type $\left[\kern-0.15em\left[ {\{ (0,3),(1,1)\} ,\{ (0,0,1)\} ,\{(0,1,0,1)\} } \right]\kern-0.15em\right]$.  

\end{example}

\subsection{Conditions for a System to be Equivalent to a Given Type of Lower Triangular Form}
In this subsection, we investigate what condition is met to judge that a nonlinear system is equivalent to a specific type of lower triangular system without taking an equivalent transformation. Let us start with the following definition.\par

\begin{definition}\label{defi_leftless}
	$\alpha$ is a multi-index and $\beta$ is a proper $m_\beta$-multi-index with $1 \le m_{\beta} \le n$. $\alpha$ is said to be left equal to $\beta$, denoted by $\alpha  =_l \beta $, if $\alpha_i = \beta_i $ for all $i = 1,2, \dots ,m_\beta$; $\alpha$ is said to be left less than $\beta$, denoted by $\alpha  <_l \beta $, if $\alpha_i \le \beta_i $ holds for all $i = 1, \dots ,m_\beta$ and there exists at least one $j \in \{1, \dots ,m_\beta\}$ such that $\alpha_j < \beta_j$. We also define that $0$ is the only multi-index left equal to $0$ and there is no multi-index left less than $0$. Moreover, if $\alpha <_l \beta$ or $\alpha =_l \beta$, we write $\alpha \le_l \beta$.  
\end{definition}
\begin{example}
		According to the definition above, we have $(1,1,2) =_l (1,1)$ and $(1,1,1,1) <_l (1,1,2)$. 
\end{example}\par
From the above definition, it is trivial to verify the following lemma.
\begin{lemma}\label{lemma_partial}
	$p(x_1,\dots,x_n)$ is a smooth function, $\alpha=(\alpha_1,\dots,\alpha_m)$ is a proper $m$-multi-index with $m \le n$, and there is no multi-index of $p(x_1,\dots,x_n)$ left less than $\alpha$. Then, any multi-index of $\partial p / \partial x_k$ for $k \in \{m,\dots,n\}$ is not left less than $\alpha'=({\alpha _1}, \dots ,{\alpha _{m - 1}},{\alpha _m} - 1)$, and $\alpha'$ is a multi-index of $\partial p / \partial x_k (k \ge m)$ if and only if $\alpha$ is a multi-index of $p$ and $k=m$.
\end{lemma}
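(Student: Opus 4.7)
The plan is to reduce both claims to statements about multi-indices of $p$ itself via the elementary identity
\[
\frac{\partial^\gamma}{\partial x^\gamma}\!\left(\frac{\partial p}{\partial x_k}\right) \;=\; \frac{\partial^{\gamma + e_k} p}{\partial x^{\gamma + e_k}},
\]
where $e_k$ denotes the $n$-multi-index with a $1$ in the $k$-th slot and zeros elsewhere. Thus $\gamma$ is a multi-index of $\partial p/\partial x_k$ if and only if $\gamma + e_k$ is a multi-index of $p$, and the hypothesis that no multi-index of $p$ is left less than $\alpha$ can then be translated into constraints on such $\gamma$.

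For the first claim I would argue by contradiction: suppose some multi-index $\gamma$ of $\partial p/\partial x_k$ satisfies $\gamma <_l \alpha'$. Componentwise this gives $\gamma_i \le \alpha_i$ for $i < m$ and $\gamma_m \le \alpha_m - 1$, with strict inequality at some $j \in \{1,\dots,m\}$. Since $k \ge m$, adding $e_k$ leaves the first $m-1$ components of $\gamma$ unchanged and raises the $m$-th component by at most $1$, so $(\gamma+e_k)_i \le \alpha_i$ for all $i = 1,\dots,m$. The key bookkeeping step is to check that the strict inequality survives: I would split into the subcases $j < m$, and $j = m$ (treating $k = m$ and $k > m$ separately). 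In each subcase I can pin down a position $i \in \{1,\dots,m\}$ where $(\gamma + e_k)_i < \alpha_i$, yielding $\gamma + e_k <_l \alpha$ and a contradiction to the hypothesis on $p$.

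For the second claim, the ``if'' direction is an immediate computation: when $k = m$, $\alpha' + e_m = \alpha$, so $\partial^{\alpha'}(\partial p/\partial x_m)/\partial x^{\alpha'}(0) = \partial^{\alpha} p/\partial x^{\alpha}(0) \ne 0$ precisely when $\alpha$ is a multi-index of $p$. For the ``only if'' direction, I would use the hypothesis: if $\alpha'$ is a multi-index of $\partial p/\partial x_k$ for some $k > m$, then $\alpha' + e_k$ is a multi-index of $p$ whose first $m-1$ components agree with $\alpha$ and whose $m$-th component equals $\alpha_m - 1 < \alpha_m$; hence $\alpha' + e_k <_l \alpha$, contradicting the hypothesis. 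So $k$ must equal $m$, giving $\alpha' + e_m = \alpha$ as a multi-index of $p$.

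The main bookkeeping obstacle I foresee is the degenerate case $\alpha_m = 1$, where $\alpha'$ has a zero in its $m$-th slot and may even collapse to the zero multi-index when $\alpha = (0,\dots,0,1)$. Under the convention that no multi-index is left less than $0$, the first claim becomes vacuous in that extreme subcase, while the ``only if'' half of the second claim still goes through because, for $k > m$, the multi-index $e_k$ is itself left less than $(0,\dots,0,1)$. Once these boundary conventions are handled explicitly, the remainder is pure componentwise arithmetic.
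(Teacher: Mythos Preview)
The paper gives no proof of this lemma beyond the sentence ``it is trivial to verify the following lemma'' from the definition of $<_l$. Your reduction via the identity $\partial^{\gamma}(\partial p/\partial x_k)=\partial^{\gamma+e_k}p$ followed by componentwise comparison of $\gamma+e_k$ against $\alpha$ is exactly the intended routine verification, and it is correct.
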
\par
Using \eqref{eq_Leibniz},  we obtain the lemma as follows. 
\begin{lemma}\label{lemma_multiply}
	$p(x_1,\dots,x_n)$ and $q(x_1,\dots,x_n)$ are smooth functions with $p(0)=0$, and $\alpha$ is a proper $i$-multi-index. Suppose $L_\alpha=\{\beta|\beta<_l\alpha\}$ and $L_\alpha \bigcap {\cal I}(p) = \emptyset$. Then (i) $L_\alpha \bigcap {\cal I}(p \cdot q) = \emptyset$; (ii) $\alpha \in {\cal I}(p \cdot q)$ if and only if $q(0) \neq 0$ and $\alpha \in {\cal I}(p)$; (iii) For some $\alpha'=_l \alpha$, $\alpha' \in {\cal I}(p \cdot q)$ if and only if there exists a multi-index of $q$ left less than the proper $i$-multi-index $(0,\dots,0,1)$ and there exists $\bar\alpha =_l \alpha$ satisfying $\bar\alpha \in {\cal I}(p)$.\par	
\end{lemma}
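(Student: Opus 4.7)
The plan is to read all three assertions off the multivariate Leibniz formula \eqref{eq_Leibniz} (applied with $q$ in place of $q_1$; the two versions differ only by a constant multiple of $p$). Every claim reduces to deciding whether $\partial^{\gamma}(pq)/\partial x^{\gamma}(0)$ is zero, and Leibniz rewrites this as a finite sum
\[
\frac{\partial^{\gamma}(pq)}{\partial x^{\gamma}}(0)\;=\;\sum_{\beta\le\gamma}\binom{\gamma}{\beta}\frac{\partial^{\beta}p}{\partial x^{\beta}}(0)\cdot\frac{\partial^{\gamma-\beta}q}{\partial x^{\gamma-\beta}}(0).
\]
The uniform observation driving everything is: for any $\beta\le\gamma$ in such a sum, the $p$-factor must vanish unless $\beta =_l \alpha$. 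Indeed, if $\beta=0$ then $p(0)=0$ kills the term, and otherwise $\beta\le\gamma$ forces $\beta_j\le\gamma_j$ on $j\le i$; a strict inequality somewhere in $j\le i$ together with $\beta_j\le\alpha_j$ there puts $\beta\in L_\alpha$, whence by hypothesis $\partial^{\beta}p/\partial x^{\beta}(0)=0$.

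For (i), take $\gamma\in L_\alpha$. Every nonzero $\beta\le\gamma$ inherits a strict inequality from $\gamma<_l\alpha$ (the witness index $j_0\le i$ for $\gamma$ gives $\beta_{j_0}\le\gamma_{j_0}<\alpha_{j_0}$), so $\beta\in L_\alpha$ and every Leibniz term is zero. For (ii), take $\gamma=\alpha$. Any $\beta<\alpha$ in the componentwise order with $\beta\ne 0$ satisfies $\beta_j=0$ for $j>i$ (since $\alpha_j=0$ there) and therefore has strict inequality in some $j\le i$, placing it in $L_\alpha$; only $\beta=\alpha$ survives and yields $\partial^{\alpha}(pq)/\partial x^{\alpha}(0) = \partial^{\alpha}p/\partial x^{\alpha}(0)\cdot q(0)$, from which (ii) is immediate.

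For (iii), I would apply the same bookkeeping at an arbitrary $\alpha'=_l\alpha$. A pair $(\beta,\gamma)$ with $\beta+\gamma=\alpha'$ contributes non-trivially only when $\beta=_l\alpha$ and $\gamma_j=0$ for $j\le i$: otherwise either $\beta_j<\alpha_j$ for some $j\le i$ (placing $\beta\in L_\alpha$ and killing the $p$-factor), or $\beta_j>\alpha_j$ for some $j\le i$ (forcing $\gamma_j<0$, which is impossible). Consequently the Taylor coefficient of $pq$ at every $\alpha'=_l\alpha$ agrees with the corresponding coefficient of the formal product $\tilde p\cdot\tilde q$, where $\tilde p$ is the restriction of the Taylor series of $p$ to multi-indices left equal to $\alpha$ and $\tilde q$ is the restriction of the Taylor series of $q$ to multi-indices with vanishing first $i$ components; the latter set is exactly $\{\mu: \mu<_l(0,\dots,0,1)\}$, and it includes the zero multi-index precisely when $q(0)\ne 0$. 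Since $\mathbb{R}[[x_1,\dots,x_n]]$ is an integral domain, $\tilde p\cdot\tilde q\ne 0$ iff both $\tilde p\ne 0$ and $\tilde q\ne 0$, which is exactly the pair of existence conditions stated in (iii).

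The main obstacle I anticipate is precisely this last step: one cannot just exhibit a single pair $(\bar\alpha,\mu)$ satisfying the right-hand conditions and claim that $\alpha'=\bar\alpha+\mu$ lies in ${\cal I}(pq)$, because distinct pairs summing to the same $\alpha'$ could in principle cancel in the Leibniz sum. Passing to the formal Taylor series restricted to the slice $\{\alpha'=_l\alpha\}$ and invoking the integral-domain property of $\mathbb{R}[[x]]$ is what lets me avoid fighting these cancellations by hand; once that trick is in place, the rest is the routine Leibniz bookkeeping used in (i).
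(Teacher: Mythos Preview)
Your argument is correct and follows the paper's own approach: the paper simply says ``Using \eqref{eq_Leibniz}, we obtain the lemma,'' and you have filled in exactly the Leibniz bookkeeping that this hint calls for. Parts (i) and (ii) are straightforward once you observe that any $\beta\le\gamma$ with $\gamma\le_l\alpha$ and $\beta\ne_l\alpha$ must land in $L_\alpha$ (or be $0$), and your write-up handles this cleanly.

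The one place where you go beyond what the paper's hint makes explicit is part (iii), and your treatment there is the right one. The naive attempt---pick $\bar\alpha\in{\cal I}(p)$ with $\bar\alpha=_l\alpha$ and $\mu\in{\cal I}(q)$ with $\mu<_l(0,\dots,0,1)$, then declare $\bar\alpha+\mu\in{\cal I}(pq)$---does not work directly, because several such pairs can sum to the same $\alpha'$ and their contributions could cancel. Your observation that the slice of Taylor coefficients of $pq$ at multi-indices $=_l\alpha$ is exactly the product $\tilde p\cdot\tilde q$ of the corresponding slices, together with the fact that $\mathbb{R}[[x_1,\dots,x_n]]$ is an integral domain, disposes of this cancellation issue in one stroke. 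One small wording point: the sentence ``it includes the zero multi-index precisely when $q(0)\ne 0$'' should say that $\tilde q$ has a nonzero constant term precisely when $q(0)\ne 0$; the index set always contains $0$.
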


Next, we present a differential geometric lemma that is useful for the further discussion in this subsection.
\begin{lemma}\label{lemm_Y}
	$Y(\xi)$ is a smooth vector field. There exists a change of coordinates $x=T(\xi)$ such that $Y$, in $x$-coordinates, can be expressed as
	\begin{equation}\label{eq_yk}
		\begin{aligned}
			Y(x)&=\sum\limits_{i = 1}^{k - 1} {Y_i(x_1, \dots ,x_{k - 1})\frac{\partial}{\partial x_i}} + \\
			&{Y_k}({x_1}, \dots ,{x_k})\frac{\partial }{{\partial {x_k}}}
			+ \sum\limits_{i = k + 1}^n {{Y_i}({x_1}, \dots ,{x_n})\frac{\partial }{{\partial {x_i}}}} 
		\end{aligned}
	\end{equation}
	if and only if there exist smooth vector fields $X^1(\xi),\dots,X^n(\xi)$ such that $D^l = {\rm{span}}  \{X^l, \dots ,X^n\}$, $l=n,\dots,1$, are $n-l+1$ dimensional involutive distributions and
	\begin{equation}\nonumber
		[{X^l},Y] \in \left\{ {\begin{matrix}
				{{D^{k + 1}}}&{k+1 \le l \le n}\\
				{{D^k}}&{l = k}
		\end{matrix}} \right..
	\end{equation}
\end{lemma}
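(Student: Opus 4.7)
The plan is to work in coordinates adapted to the flag of distributions $D^n\subset\dots\subset D^1$, which reduces both directions of the equivalence to component-wise computations on $Y$. For necessity the coordinates come for free from the hypothesis, whereas for sufficiency they are produced via the Frobenius theorem already invoked in the proof of Theorem \ref{theorem_tri}.

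For necessity, assume $Y$ already has the prescribed form in coordinates $x=T(\xi)$. I would take $X^l$ to be the vector field that equals $\partial/\partial x_l$ in the new coordinates (pulled back to $\xi$-space by $T^{-1}$). Then $D^l=\mathrm{span}\{\partial/\partial x_l,\dots,\partial/\partial x_n\}$ is trivially an $(n-l+1)$-dimensional involutive distribution and $[X^l,Y]_i=\partial Y_i/\partial x_l$. The assumed dependence $Y_i=Y_i(x_1,\dots,x_{k-1})$ for $i\le k-1$ and $Y_k=Y_k(x_1,\dots,x_k)$ immediately forces $\partial Y_i/\partial x_l=0$ for all $i\le k$ when $l\ge k+1$, and for all $i\le k-1$ when $l=k$, which gives exactly the required inclusions $[X^l,Y]\in D^{k+1}$ and $[X^k,Y]\in D^k$.

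For sufficiency, I would first apply the Frobenius theorem to the nested involutive flag to obtain a change of coordinates $x=T(\xi)$ in which $D^l=\mathrm{span}\{\partial/\partial x_l,\dots,\partial/\partial x_n\}$. In these new coordinates each $X^l$ decomposes as $X^l=\sum_{j\ge l}a_j^l(x)\,\partial/\partial x_j$, and the spanning condition $\mathrm{span}\{X^l,\dots,X^n\}=D^l$ forces the coefficient matrix $(a_j^l)$ to be upper triangular with $a_l^l(0)\ne 0$. Writing $Y=\sum_i Y_i\,\partial/\partial x_i$, a direct computation gives $[X^l,Y]_i=\sum_{j\ge l}a_j^l\,\partial Y_i/\partial x_j$ whenever $i<l$. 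Imposing $[X^l,Y]\in D^{k+1}$ for every $l=k+1,\dots,n$ and every $i\le k$ produces, for each fixed $i$, an upper triangular linear system in the unknowns $\partial Y_i/\partial x_{k+1},\dots,\partial Y_i/\partial x_n$ whose diagonal does not vanish near the origin; back-substitution forces those derivatives to vanish, so $Y_i$ depends only on $x_1,\dots,x_k$ for $i\le k$. Substituting this into the remaining condition $[X^k,Y]\in D^k$ reduces it, for $i\le k-1$, to $a_k^k\,\partial Y_i/\partial x_k=0$, yielding the finer dependence $Y_i=Y_i(x_1,\dots,x_{k-1})$ for $i\le k-1$.

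The main obstacle I expect is precisely the linear-algebra step in the sufficiency direction: because the $X^l$ are not required to be coordinate vector fields but only to span $D^l$ together with the higher $X^m$, the bracket conditions couple the different partial derivatives $\partial Y_i/\partial x_j$ instead of isolating them. The key observation is that the spanning hypothesis on $\{X^l,\dots,X^n\}$ translates, in the straightened coordinates, into upper triangularity of $(a_j^l)$ together with nonvanishing of the diagonal at the origin, so the coupled system is invertible in a neighborhood of $0$ and the component-wise vanishing of $\partial Y_i/\partial x_j$ follows exactly as if one were working with the coordinate vector fields $\partial/\partial x_l$ directly.
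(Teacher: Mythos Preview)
Your proposal is correct and follows essentially the same approach as the paper: straighten the flag $D^l$ via Frobenius, write $X^l=\sum_{j\ge l}X_j^l\,\partial/\partial x_j$, and extract $\partial Y_i/\partial x_l=0$ from the bracket conditions by a triangular back-substitution (the paper compresses this last step into a single sentence, whereas you spell out the invertibility of the upper-triangular coefficient system). Your necessity argument with $X^l=\partial/\partial x_l$ is exactly what the paper dismisses as ``clear.''
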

\begin{proof}
	The necessity is clear, we only prove the sufficiency here. According to the Frobenius theorem, we can find a change of coordinates $x=T(\xi)$ such that $D^l=\mathrm{span}\{\partial/\partial x_l,\dots,\partial/\partial x_n\}$ and $X^l =  {\textstyle\sum_{i = l}^{n}{X^l_i(x) \partial / \partial x_i}} $ for $l = n,\dots,1$. Let $Y(x) = {\textstyle \sum_{i = 1}^{n}{Y_i(x) \partial / \partial x_i}}$. Noting that $[{X^l},Y]={\textstyle\sum_{i = l}^{n} \left( {X^l_i[\partial / \partial x_i,Y]}-Y(X^l_i)\partial / \partial x_i \right) } \in D^{k+1}$ for $l=n,\dots,k+1$, we have $\partial Y_j / \partial x_l =0$ for all $j = 1,\dots ,k$ and $l = n,\dots ,k+1$. Additionally, $[{X^k},Y] \in D^k$ implies that $\partial Y_j / \partial x_k =0$ for any $j = 1,\dots ,k-1$. Thus \eqref{eq_yk} holds.  	
\end{proof}\par

Let $X$ be a vector field, $Y^1, \dots, Y^m$ a family of vector fields, and $\alpha = (\alpha_1,\dots,\alpha_k)$ a $k$-multi-index. We denote, for $i=1,\dots,m$ and an integer $j \ge 0$, $\mathrm{ad}_{Y^i}^0 X = X$, $\mathrm{ad}_{Y^i}^1 X = \mathrm{ad}_{Y^i} X$, $\mathrm{ad}_{Y^i}^{j+1} X = \mathrm{ad}_{Y^i} \mathrm{ad}_{Y^i}^j X$, and $\mathrm{ad}_Y^\alpha X=[{Y^\alpha},X] = \mathrm{ad}_{Y^1}^{\alpha_1} \dots \mathrm{ad}_{Y^k}^{\alpha_k}X$. Now we are ready to state several properties of lower triangular forms.\par
\begin{proposition}\label{prop_essindofsys}
	System \eqref{eq_tri} is of type $[\kern-0.15em[ {\cal E}_2(f_1),\dots,{\cal E}_n(f_{n-1}) ]\kern-0.15em]$. Let
	\begin{equation}\nonumber
		{D}^i={\rm{span}} \left\{ \frac{\partial}{\partial x_i},\dots,\frac{\partial}{\partial x_n} \right\}, i=n,\dots,1,
	\end{equation}
	and
	\begin{equation}\label{eq_triY}
	\begin{aligned}
		&{Y^{i + 1}} = \sum\limits_{k = i + 1}^n {Y_k^{i + 1}(x)\frac{\partial }{\partial {x_k}}} \\
		&{Y^j} = \sum\limits_{k = j}^{i - 1} {Y_k^j({x_1}, \dots ,{x_{i - 1}}) \frac{\partial }{\partial {x_k}}}  + Y_i^j({x_1}, \dots ,{x_i})\frac{\partial }{\partial {x_i}} \\
		&\qquad + \sum\limits_{k = i + 1}^n {Y_k^j(x)\frac{\partial }{\partial {x_k}}},\; j=1,\dots,i,		
	\end{aligned}
	\end{equation}
	where ${Y_{i + 1}^{i + 1}(0) \ne 0}$ and ${Y_j^j(0) \ne 0}$. Then $\epsilon \in {\cal E}_{i+1}(f_i)$ if and only if 
	\begin{equation}\label{eq_epsilon}
		\mathrm{ad}_Y^\epsilon f(0) \notin {D}^{i + 1}(0)
	\end{equation}
	and
	\begin{equation}\label{eq_alpha}
		\mathrm{ad}_Y^\alpha f(0) \in {{D}^{i + 1}}(0)
	\end{equation}
	for every proper $(i+1)$ multi-index $\alpha \prec \epsilon$. In addition, a proper $(i+1)$-multi-index $\zeta$ and all the $(i+1)$-multi-indices that can generate $\zeta$ do not belong to ${\cal I}_{i+1}(f_i)$ if and only if 
	\begin{equation}\label{eq_alpha2}
		\mathrm{ad}_Y^\alpha f(0) \in {{D}^{i + 1}}(0)
	\end{equation}
	for every proper $(i+1)$ multi-index $\alpha \preceq \zeta$.
\end{proposition}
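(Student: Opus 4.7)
My plan is to reduce both statements to a single expansion formula: for every proper $(i+1)$-multi-index $\alpha$,
\begin{equation*}
  \bigl(\mathrm{ad}_Y^\alpha f(0)\bigr)_i\;=\;c_\alpha\,\frac{\partial^\alpha f_i}{\partial x^\alpha}(0)\,+\,\sum_{\beta\prec\alpha}c_{\alpha,\beta}\,\frac{\partial^\beta f_i}{\partial x^\beta}(0),
\end{equation*}
and $(\mathrm{ad}_Y^\alpha f(0))_j=0$ for every $j<i$, where the sum runs over proper $(i+1)$-multi-indices $\beta\prec\alpha$, the coefficients $c_{\alpha,\beta}$ depend only on the jets of $Y^1,\dots,Y^{i+1}$ at $0$, and $c_\alpha:=\prod_{l=1}^{i+1}Y^l_l(0)^{\alpha_l}\neq 0$.

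I would establish the expansion by induction on $|\alpha|$. The base case $|\alpha|=1$ forces $\alpha=(0,\dots,0,1)$, and a direct computation of $[Y^{i+1},f]$ using $Y^{i+1}_k=0$ for $k\le i$, the dependence $f_k=f_k(x_1,\dots,x_{k+1})$, and $f_k(0)=0$ collapses the $j$th component identically to zero for $j<i$ and the $i$th component at $0$ to $Y^{i+1}_{i+1}(0)\,\partial f_i/\partial x_{i+1}(0)$. For the inductive step, pick $k$ with $\alpha_k\ge 1$ and write $\mathrm{ad}_Y^\alpha f=[Y^k,\mathrm{ad}_Y^{\alpha-e_k}f]$ (reordering via Jacobi when necessary, the commutator terms feeding only into subleading corrections). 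The inductive hypothesis, combined with the triangular dependence restrictions on $Y^k$ and Lemma~\ref{lemma_partial}, yields the leading contribution $Y^k_k(0)\cdot c_{\alpha-e_k}=c_\alpha$ multiplying $\partial^\alpha f_i/\partial x^\alpha(0)$; every remaining contribution to the $i$th component at $0$ is a constant multiple of $\partial^\gamma f_i/\partial x^\gamma(0)$ for some proper $(i+1)$-multi-index $\gamma\preceq\alpha$, by Theorem~\ref{thm_gene_condi}.

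Granting the expansion, both statements follow by a minimality argument. By Proposition~\ref{prop_ngene}, any $\alpha'\in{\cal I}(f_i)$ with $\alpha'\prec\epsilon$ is forced to be a proper $(i+1)$-multi-index, because $m_{\alpha'}\ge m_\epsilon=i+1$ and $f_i$ depends only on $x_1,\dots,x_{i+1}$. For~(1), if $\epsilon\in{\cal E}_{i+1}(f_i)$ then $\partial^\epsilon f_i/\partial x^\epsilon(0)\neq 0$ and every other term in the expansion vanishes, giving \eqref{eq_epsilon} and \eqref{eq_alpha}. Conversely, let $\beta^\ast$ be a $\preceq$-minimal element of $S:=\{\beta\in{\cal I}_{i+1}(f_i):\beta\preceq\epsilon\}$; if $S$ is non-empty, the expansion at $\beta^\ast$ forces $\mathrm{ad}_Y^{\beta^\ast}f(0)\notin D^{i+1}(0)$, so hypothesis \eqref{eq_alpha} forces $\beta^\ast=\epsilon$, and essentiality follows. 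Statement~(2) is exactly the case in which $S$ is empty for every $\alpha\preceq\zeta$.

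The main obstacle is the combinatorial bookkeeping of the inductive step: every term produced by $[Y^k,\mathrm{ad}_Y^{\alpha-e_k}f]$ at $0$ modulo $D^{i+1}(0)$ must be classified as either a constant multiple of $\partial^\gamma f_i/\partial x^\gamma(0)$ with $\gamma\preceq\alpha$, or a vanishing term (because a factor $\partial Y^k_l/\partial x_m$ is forced to zero by the dependence restrictions on $Y^k$, or because it multiplies some $f_l(0)=0$). The decisive structural input is that, for $j\le i$, the first $i$ components of $Y^j$ depend only on $x_1,\dots,x_i$ (and only on $x_1,\dots,x_{i-1}$ at positions below $i$), which is precisely the restriction needed for the partial-sum inequalities of Theorem~\ref{thm_gene_condi} to certify $\gamma\preceq\alpha$. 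A parallel triangular argument handles the vanishing of the first $i-1$ components throughout the induction.
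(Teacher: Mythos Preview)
Your overall strategy—reduce everything to an expansion $(\mathrm{ad}_Y^\alpha f(0))_i = c_\alpha\,\partial^\alpha f_i/\partial x^\alpha(0) + \sum_{\beta\prec\alpha} c_{\alpha,\beta}\,\partial^\beta f_i/\partial x^\beta(0)$ with $c_\alpha=\prod_l Y^l_l(0)^{\alpha_l}\ne 0$, and then argue by $\preceq$-minimality—is sound, and once that expansion is in hand your deductions of both equivalences (including the use of Proposition~\ref{prop_ngene} to force $m_{\alpha'}=i+1$) are correct. The gap is in the induction that is meant to produce the expansion. Your inductive hypothesis records only the \emph{value} $(\mathrm{ad}_Y^{\alpha-e_k} f)_i(0)$, but the step $[Y^k,\mathrm{ad}_Y^{\alpha-e_k}f]_i(0)=\sum_{m\ge k}Y^k_m(0)\,\partial(\mathrm{ad}_Y^{\alpha-e_k}f)_i/\partial x_m(0)-(\mathrm{ad}_Y^{\alpha-e_k}f)(Y^k_i)(0)$ requires first partial derivatives of the intermediate component at $0$, which your at-$0$ formula does not supply. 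This is also why your appeal to Lemma~\ref{lemma_partial} is out of place: that lemma speaks about the multi-index set of a \emph{function}, not a single number. The Jacobi remark is a red herring—always peel off the outermost bracket (the smallest $k$ with $\alpha_k\ge 1$), so no reordering is ever needed and $\alpha-e_k$ is again a proper $(i+1)$-multi-index or the base case.

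The repair is to carry a function-level invariant through the iteration: after the first $\mathrm{ad}_{Y^{i+1}}$ the components $j<i$ vanish \emph{identically} (not just at $0$), and the $i$th component is identically a finite sum $\sum_\gamma C_\gamma(x)\,\partial^\gamma f_i/\partial x^\gamma$ over proper $(i+1)$-multi-indices $\gamma$ obeying the partial-sum bounds of Theorem~\ref{thm_gene_condi} relative to the brackets already applied, with leading coefficient $C_\alpha=\prod_l (Y^l_l)^{\alpha_l}$. This invariant survives one further bracket via the product rule and the dependence restrictions on $Y^j_k$; it is what your sketch is implicitly using. The paper takes a parallel but genuinely different route: instead of decomposing the intermediate $i$th component into $f_i$-derivatives, it tracks the multi-index set of that component \emph{as a function} via the ``left-less-than'' relation $<_l$ of Definition~\ref{defi_leftless} and propagates the invariant ``$\theta^{l,k}\in{\cal I}(X_i^{l,k})$ and no $\beta<_l\theta^{l,k}$ lies in ${\cal I}(X_i^{l,k})$'' using Lemmas~\ref{lemma_partial}--\ref{lemma_multiply}. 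Both approaches are function-level; yours just needs to be stated as one.
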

\begin{proof}
	We first calculate ${\rm{ad}}_Y^{\theta} f$, where $\theta=(\theta_1,\dots,\theta_{i+1})$ is a proper $(i+1)$-multi-index, step by step. Let ${X^{i + 1,\theta _{i + 1}}} = f$. Compute the following Lie brackets
	\begin{equation}\nonumber
	\begin{aligned}
		&{X^{i + 1,\theta _{i + 1}-1}} = [{Y^{i + 1}},{X^{i + 1,\theta _{i + 1}}}] =\sum_{j=i}^{n} X_{j}^{i + 1,\theta _{i + 1}-1}(x)\frac{\partial }{{\partial {x_j}}} \\
		&\quad \vdots\\
		&{X^{i + 1,0}} = [{Y^{i + 1}},{X^{i+1,1}}] = \sum_{j=i}^{n} X_{j}^{i + 1,0}(x)\frac{\partial }{{\partial {x_j}}}
	\end{aligned}
	\end{equation}
	where $X_{j}^{i + 1,k}(x)$, $j=i,\dots,n$ and $k=\theta _{i + 1}-1,\dots,0$, are all smooth functions, especially
	\begin{equation}\label{eq_adi+1}
	\begin{aligned}
		&X_{i}^{i + 1,\theta _{i + 1}-1}(x) = \frac{{\partial X_i^{i + 1,\theta _{i + 1}}}}{{\partial {x_{i+1}}}}Y_{i+1}^{i + 1}=\frac{{\partial f_i}}{{\partial {x_{i+1}}}}Y_{i+1}^{i + 1}\\
		&X_{i}^{i + 1,\theta _{i + 1}-2}(x) = \sum_{j=i+1}^{n} \frac{{\partial X_{i}^{i + 1,\theta _{i + 1}-1}}}{{\partial {x_j}}}Y_{j}^{i+1} \\
		&\quad \vdots \\
		&X_{i}^{i + 1,0}(x)= \sum_{j=i+1}^{n} \frac{{\partial X_{i}^{i + 1,1}}}{{\partial {x_j}}}Y_{j}^{i + 1} .
	\end{aligned}
	\end{equation}
	Let ${X^{l,\theta_l}} = {X^{l + 1,0}}$ for $l = i, \dots ,1$. Proceeding in the same manner, one can calculate
	\begin{equation}\nonumber
	\begin{aligned}
		&{X^{l,\theta_l-1}} = [{Y^l},{X^{l,\theta_l}}]= \sum_{j=i}^{n} X_{j}^{l,\theta_l-1}(x)\frac{\partial }{{\partial {x_j}}}\\
		&\quad \vdots\\
		&{X^{l,0}} = [{Y^l},{X^{l,1}}]=\sum_{j=i}^{n} X_{j}^{l,0}(x)\frac{\partial }{{\partial {x_j}}}
	\end{aligned}
	\end{equation}
	where $X_{j}^{l,k}(x)$, $l=i,\dots,1$, $j=i,\dots,n$, and $k=\theta_l-1,\dots,0$, are all smooth functions with
	\begin{equation}\label{eq_adi_1}
	\begin{aligned}
		&X_{i}^{l,\theta_l-1}(x) =   - \frac{{\partial Y_i^l}}{{\partial {x_i}}}{X_{i}^{l,\theta_l}} + \sum_{j=l}^{n}{\frac{{\partial {X_{i}^{l,\theta_l}}}}{{\partial {x_{j}}}}Y_{j}^l}  \\
		&\quad \vdots \\
		&X_{i}^{l,0}(x) = - \frac{{\partial Y_i^l}}{{\partial {x_i}}}{X_{i}^{l, 1}} + \sum_{j=l}^{n} {\frac{{\partial {X_{i}^{l,1}}}}{{\partial {x_{j}}}}Y_{j}^l}\;.
	\end{aligned}
	\end{equation}	\par
	Assuming $\theta=(\theta_1,\dots,\theta_{i+1})$ is a multi-index belonging to ${\cal E}_{i+1}(f_i)$, we now prove ${\rm{ad}}_Y^{\theta} f(0) \notin {D}^{i + 1}(0)$. For the sake of convenience, we denote $\theta^{l,k} = (\theta_1,\dots,\theta_{l-1},k)$ for $l=i+1,\dots,1$ and $k=\theta_l,\dots,0$. So we have $\theta^{l,0} =\theta^{l-1,\theta_{l-1}}$ for $l=i+1,\dots,2$. It is clear that $\theta^{i+1,\theta_{i + 1}-1}$ is a multi-index of ${\partial X_i^{i + 1,\theta_{i+1}}} / {\partial {x_{i+1}}}$. Let $\beta$ be an $(i+1)$-multi-index with $\beta <_l \theta^{i+1,\theta_{i + 1}-1}$, then we can assert that $\beta \notin {\cal I}({\partial X_i^{i + 1,\theta_{i+1}}} / {\partial {x_{i+1}}})$ because otherwise one would exhibit $(\beta_1,\dots,\beta_i,\beta_{i+1}+1) \in {\cal I}_{i+1}(f_i)$ and $(\beta_1,\dots,\beta_i,\beta_{i+1}+1) \prec \theta$, which is contradictory with $\theta \in {\cal E}_{i+1}(f_i)$. From the first equation of \eqref{eq_adi+1} and lemma \ref{lemma_multiply}, $\theta^{i+1,\theta_{i + 1}-1} \in {\cal I}(X_{i}^{i + 1,\theta_{i+1}-1}(x))$ holds and there is no element of ${\cal I}(X_{i}^{i + 1,\theta_{i+1}-1}(x))$ which is left less than $\theta^{i+1,\theta_{i + 1}-1}$. Suppose, for any $k \in \{\theta_{i+1}-1,\dots,1\}$, $X_{i}^{i + 1,k}(x))$ satisfies $\theta^{i+1,k} \in {\cal I}(X_{i}^{i + 1,k}(x))$ and $\beta \not\in {\cal I}(X_{i}^{i + 1,k}(x))$ for all $\beta <_l \theta^{i+1,k}$. It follows from \eqref{eq_adi+1}, lemma \ref{lemma_partial}, and lemma \ref{lemma_multiply} that $\theta^{i+1,k-1} \in {\cal I}(X_{i}^{i + 1,k-1}(x))$ and $\beta \not\in {\cal I}(X_{i}^{i + 1,k-1}(x))$ for all $\beta <_l \theta^{i+1,k-1}$. Consider $X_{i}^{l,k}(x)$ for $l \in \{i,\dots,1\}$ and $k \in \{ \theta_l-1,\dots, 0\}$ given by \eqref{eq_adi_1}. Assume that $\theta^{l,k+1} \in {\cal I}(X_{i}^{l,k+1}(x))$ and $\beta \not\in {\cal I}(X_{i}^{l,k+1}(x))$ for all $\beta <_l \theta^{l,k+1}$. Take account of $\theta^{l,k} <_l \theta^{l,k+1}$ and lemma \ref{lemma_multiply}, $\beta \not\le_l \theta^{l,k}$ holds for any $\beta \in {\cal I} ( {\partial Y_i^l}/{{\partial {x_i}}}\cdot{X_{i}^{l,k+1}} )$. Using lemma \ref{lemma_partial} and \ref{lemma_multiply}, we have $\theta^{l,k} \in {\cal I} ( {{\partial X_i^{l,k+1}}}/{{\partial {x_l}}}\cdot{Y_{l}^{l}} )$, $\theta^{l,k} \not\in {\cal I} ( {{\partial X_i^{l,k+1}}}/{{\partial {x_j}}}\cdot{Y_{j}^{l}} )$ for $j=l+1,\dots,n$, and $\beta \not\in {\cal I} ( {{\partial X_i^{l,k+1}}}/{{\partial {x_{j'}}}}\cdot{Y_{j'}^{l}} )$ for any $\beta <_l \theta^{l,k}$ and $j'=l,\dots,n$. This means $\theta^{l,k}  \in {\cal I}(X_{i}^{l,k}(x))$ and $\beta \not\in {\cal I}(X_{i}^{l,k}(x))$ for all $\beta <_l \theta^{l,k}$. Especially,  $\theta^{1,0}=(0)  \in {\cal I}(X_{i}^{1,0}(x))$, which implies $X_{i}^{1,0}(0) \ne 0$ and ${\rm{ad}}_Y^{\theta} f(0) \notin {D}^{i + 1}(0)$. \par
	In a similar way, we can prove that if $\theta$ is a proper $(i+1)$ multi-index such that $\beta \not\in {\cal I}^{i+1}(f_i)$ holds for every $\beta \preceq \theta$ then $\gamma \not\in {\cal I}(X_{i}^{l,k}(x))$ for any $\gamma \le_l \theta^{l,k} = (\theta_1,\dots,\theta_{l-1},k)$ $(l=i+1,\dots,1 $ and $k = \theta_l-1,\dots,0)$. Hence ${\rm{ad}}_Y^{\theta} f(0) \in {D}^{i + 1}(0)$.\par
	Consider $\mathrm{ad}_Y^\epsilon f$  and $\mathrm{ad}_Y^\alpha f$ with $\epsilon \in {\cal E}_{i+1}(f_i)$ and $\alpha \prec \epsilon$. Directly from the previous discussions, one can obtain \eqref{eq_epsilon} and \eqref{eq_alpha}. We now prove that \eqref{eq_epsilon} and \eqref{eq_alpha} imply $\epsilon \in {\cal E}_{i+1}(f_i)$. Suppose the proper $(i+1)$-multi-index $\alpha$ introduced in \eqref{eq_alpha} satisfies $\alpha \in {\cal I}_{i+1}(f_i)$. There exists a proper $(i+1)$ multi-index $\alpha' \preceq \alpha$ and $\alpha' \in {\cal E}_{i+1}(f_i)$. Since it has been proved that \eqref{eq_epsilon} holds when $\epsilon \in {\cal E}_{i+1}(f_i)$, we obtain $\mathrm{ad}_Y^{\alpha'} f(0) \notin {D}^{i + 1}(0)$, which contradicts \eqref{eq_alpha}. Therefore, $\alpha \not\in {\cal I}_{i+1}(f_i)$ must be true. We next consider the $(i+1)$ multi-index $\epsilon$. It is clear that $\epsilon \in {\cal I}_{i+1}(f_i)$ implies $\epsilon \in {\cal E}_{i+1}(f_i)$. If $\epsilon \not\in {\cal E}_{i+1}(f_i)$ were true then, with $\alpha \not\in {\cal I}_{i+1}(f_i)$ for every  $\alpha \prec \epsilon$ in mind, $\mathrm{ad}_Y^{\epsilon} f(0) \in {D}^{i + 1}(0)$ would hold, which also contradicts \eqref{eq_alpha}. Thus, we conclude that $\epsilon \in {\cal E}_{i+1}(f_i)$.\par
	It has been proved that \eqref{eq_alpha2} holds when $\alpha \notin {\cal I}_{i+1}(f_i)$ for all $\alpha \preceq \zeta$. On the other hand, the existence of some $\alpha \preceq \zeta$ satisfying  $\alpha \in {\cal I}_{i+1}(f_i)$, together with \eqref{eq_alpha2},  obviously contradicts \eqref{eq_epsilon}. Hence, \eqref{eq_alpha2} implies $\alpha \notin {\cal I}_{i+1}(f_i)$ for all $\alpha \preceq \zeta$.
\end{proof}\par
\begin{example}
	This counterexample shows that the above proposition is not valid if one modifies \eqref{eq_triY} to ${Y^j} = \sum_{k = j}^{n} Y_k^j(x)  \partial / \partial x_k$. Consider the following system
	\begin{equation}\nonumber
		\begin{aligned}
			&{{\dot x}_1} = x_1^3{x_2}\\
			&{{\dot x}_2} = {x_3}\\
			&{{\dot x}_3} = {x_2} + v\;.
		\end{aligned}
	\end{equation}
	Let $Y_3 = {\partial }/{\partial x_3}$, $Y_2 = {\partial }/{\partial x_2}$, and $Y_1 = (1 + x_3){\partial }/{\partial x_1} + (x_2 - x_1){\partial }/{\partial x_3}$. Here $Y_1$ does not satisfy \eqref{eq_triY}.
	One can obtain
	\begin{equation}\nonumber
		\mathrm{ad}_{Y_1} \mathrm{ad}_{Y_2}f = \left( {3x_1^2\left( {{x_3} + 1} \right) - 1} \right)\frac{\partial }{{\partial {x_1}}} + x_1^3\frac{\partial }{{\partial {x_3}}}
	\end{equation}
	and ${\rm{a}}{{\rm{d}}_{{Y_1}}}{\rm{a}}{{\rm{d}}_{{Y_2}}}f(0) \not\in \mathrm{span}\{\partial/{\partial x_2}, \partial/{\partial x_3}\}$. But ${\cal L}_2(x_1^3{x_2}) $ $= (3,1) $.
\end{example}\par

	Combining Proposition \ref{prop_essindofsys}, Lemma \ref{lemm_Y}, and \eqref{eq_pushforward}, it is easy to see the next two theorems.
\begin{theorem}\label{theo_sysleast}
	System \eqref{eq_s} is locally equivalent to a lower triangular form satisfying ${\cal L}_{i+1}(f_i)=\alpha^i$ $(i=1,\dots,n-1)$, namely \eqref{eq_lea_sys}, via a feedback \eqref{eq_fb} and a change of coordinates \eqref{eq_ct} if and only if the following conditions are satisfied:\par
	(i) System \eqref{eq_s} is locally feedback equivalent to \eqref{eq_tri}. \par
	(ii) Suppose $D^l$, $l=1,\dots,n+1$, are $n-l+1$ dimensional involutive distributions defined in Theorem \ref{theorem_tri} and $X^l$, $l=1,\dots,n$, are smooth vector fields satisfying $X^l \in D^l$ and $X^l(0) \not\in D^{l+1}(0)$. Let $Y^{i}= (Y^{i,1},\dots,Y^{i,i+1})$, $i=1,\dots,n-1$, be tuples of smooth vector fields satisfying $Y^{i,j} \in D^j$, $Y^{i,j}(0) \not\in D^{j+1}(0)$, and	
	\begin{equation}\nonumber
		[{X^l},Y^{i,j}] \in \left\{ {\begin{matrix}
				D^{i+1}& i+1\le l \le n\\
				D^{i}&l = i
		\end{matrix}} \right.
	\end{equation}
	for $j=1,\dots,i+1$. Then ${\mathrm{ad}_{Y^i}^{\alpha^i} F}(0) \notin {D^{i + 1}}(0)$ and ${\mathrm{ad}_{Y^i}^\alpha F}(0) \in {D^{i + 1}}(0)$ for every proper $(i+1)$-multi-index $\alpha  \lessdot \alpha^i$.
\end{theorem}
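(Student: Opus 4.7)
The plan is to reduce the theorem to Proposition \ref{prop_essindofsys} on essential multi-indices. Two preliminary observations unlock this reduction. First, the implication $\beta \prec \alpha \neq \beta \Rightarrow \beta \lessdot \alpha$ holds: by Theorem \ref{thm_gene_condi} the partial sums satisfy $\sum_{j=1}^k \beta_j \le \sum_{j=1}^k \alpha_j$, and at the first index where $\beta$ and $\alpha$ disagree the inequality must be strict. Consequently ${\cal L}_{i+1}(f_i) \in {\cal E}_{i+1}(f_i)$, and every proper $(i+1)$-multi-index $\alpha \lessdot {\cal L}_{i+1}(f_i)$---together with every $\beta \preceq \alpha$---is absent from ${\cal I}_{i+1}(f_i)$. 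Second, once we fix a coordinate change $T$ bringing \eqref{eq_s} to lower triangular form, Lemma \ref{lemm_Y} (applied with $k=i$) combined with $Y^{i,j} \in D^j$ forces the pushforwards $\tilde Y^{i,j} = T_*(Y^{i,j})$ to take precisely the structural form \eqref{eq_triY} required by Proposition \ref{prop_essindofsys}; the nondegeneracy conditions $Y_j^j(0) \ne 0$ and $Y_{i+1}^{i+1}(0) \ne 0$ correspond exactly to $Y^{i,j}(0) \notin D^{j+1}(0)$.

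For necessity, Theorem \ref{theorem_tri} yields (i) and a $T$ bringing \eqref{eq_s} to \eqref{eq_lea_sys}. Since $\alpha^i \in {\cal E}_{i+1}(f_i)$, the first clause of Proposition \ref{prop_essindofsys} gives $\mathrm{ad}_{\tilde Y^i}^{\alpha^i} f(0) \notin \tilde D^{i+1}(0)$. For any proper $(i+1)$-multi-index $\alpha \lessdot \alpha^i$, every $\beta \preceq \alpha$ is lex-smaller than $\alpha^i$ and hence not a multi-index of $f_i$; the second clause of Proposition \ref{prop_essindofsys} (applied with $\zeta = \alpha$) gives $\mathrm{ad}_{\tilde Y^i}^{\alpha} f(0) \in \tilde D^{i+1}(0)$. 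Pulling these relations back via \eqref{eq_pushforward} produces (ii).

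For sufficiency, Theorem \ref{theorem_tri} applied to (i) produces $T$ and a lower triangular form \eqref{eq_tri} in $x$-coordinates. Translating (ii) to $x$-coordinates via \eqref{eq_pushforward} and invoking Proposition \ref{prop_essindofsys} yields $\alpha^i \in {\cal E}_{i+1}(f_i)$. To upgrade this to ${\cal L}_{i+1}(f_i) = \alpha^i$, suppose instead that $\beta := {\cal L}_{i+1}(f_i) \lessdot \alpha^i$; since the least multi-index is essential, Proposition \ref{prop_essindofsys} would then give $\mathrm{ad}_{\tilde Y^i}^{\beta} f(0) \notin \tilde D^{i+1}(0)$, contradicting the second part of (ii). Hence ${\cal L}_{i+1}(f_i) = \alpha^i$ for every $i$, and the resulting lower triangular form is of type $[\alpha^1,\dots,\alpha^{n-1}]$, which is exactly \eqref{eq_lea_sys}. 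The main obstacle is not technical depth but careful translation between the two orders $\lessdot$ and $\prec$, and the verification via Lemma \ref{lemm_Y} that the admissible $Y^{i,j}$ of (ii) correspond under $T_*$ to exactly the $Y^j$ permitted in Proposition \ref{prop_essindofsys}.
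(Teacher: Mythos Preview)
Your proposal is correct and follows essentially the same approach as the paper, which simply states that the theorem is obtained by combining Proposition \ref{prop_essindofsys}, Lemma \ref{lemm_Y}, and \eqref{eq_pushforward}. You have filled in the details the paper omits, in particular the observation (via Theorem \ref{thm_gene_condi}) that $\beta \prec \alpha$ implies $\beta \lessdot \alpha$, which is precisely what is needed to translate between the $\lessdot$-condition in the theorem and the $\prec$-condition in Proposition \ref{prop_essindofsys}.
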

\begin{remark}
	$X^1,\dots,X^n$ given in the above theorem obviously satisfy $\mathrm{span}\{X^l,\dots,X^n\} = D^l$ in a neighborhood of the origin. It is not difficult to find $X^1,\dots,X^n$ when $D^l$, $l=1,\dots,n$, are known. 
\end{remark}
\begin{remark}
	The necessary and sufficient condition introduced in the above theorem for a nonlinear system to be equivalent to a $p$-normal form is consistent with the condition given in \cite{respondek2003transforming} if we take $Y^{n-1,n}=G$ and $Y^{i,i+1}=\mathrm{ad}_{Y^{i+1,i+2}}^{p_{i+1}}F$ for $i=n-2,\dots,1$.
\end{remark}

\begin{theorem}\label{theorem_type}
	System \eqref{eq_s} is locally equivalent to a lower triangular form taking the form \eqref{eq_ess_sys} via a feedback \eqref{eq_fb} and a change of coordinates \eqref{eq_ct} if and only if the following conditions are satisfied:\par
	(i) System \eqref{eq_s} is locally feedback equivalent to \eqref{eq_tri}. \par
	(ii) Suppose $D^l$, $l=1,\dots,n+1$, are $n-l+1$ dimensional involutive distributions defined in Theorem \ref{theorem_tri} and $X^l$, $l=1,\dots,n$, are smooth vector fields satisfying $X^l \in D^l$ and $X^l(0) \not\in D^{l+1}(0)$. Let $Y^{i}= (Y^{i,1},\dots,Y^{i,i+1})$, $i=1,\dots,n-1$, be tuples of smooth vector fields satisfying $Y^{i,j} \in D^j$, $Y^{i,j}(0) \not\in D^{j+1}(0)$, and	
	\begin{equation}\nonumber
		[{X^l},Y^{i,j}] \in \left\{ {\begin{matrix}
		D^{i+1}& i+1\le l \le n\\
		D^{i}&l = i
		\end{matrix}} \right. 
	\end{equation}
	for $j=1,\dots,i+1$. Then for every $\epsilon \in {\cal E}_{i+1}(f_i)$ and every $\zeta  \in {\cal A}_{i+1} \setminus {\cal G}_{i+1}({\cal E}_{i+1}(f_i))$, where ${\cal A}_{i+1}$ is the set consisting of all the proper $i$-multi-indices, the relations $\mathrm{ad}_{Y^i}^\epsilon F(0) \notin D^{i + 1}(0)$ and $\mathrm{ad}_{Y^i}^{\zeta} F(0) \in D^{i + 1}(0)$ hold.
\end{theorem}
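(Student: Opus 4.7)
The plan is to derive Theorem~\ref{theorem_type} as a direct application of Proposition~\ref{prop_essindofsys}, with Lemma~\ref{lemm_Y} used to guarantee the triangular structure of the test vector fields and the naturality formula \eqref{eq_pushforward} used to transport Lie bracket conditions between $\xi$- and $x$-coordinates. In both directions the key intermediate step is the same: pass to a coordinate system in which \eqref{eq_s} takes the lower triangular form \eqref{eq_tri}, and decode the essential multi-index set of each $f_i$ from Lie brackets via Proposition~\ref{prop_essindofsys}, which characterizes ${\cal E}_{i+1}(f_i)$ and its generator-closure purely in terms of the Lie brackets $\mathrm{ad}_Y^{\alpha}f(0)$ modulo $D^{i+1}(0)$.

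For necessity, I would assume \eqref{eq_s} is equivalent to a system of the form \eqref{eq_ess_sys} via \eqref{eq_fb} and $x=T(\xi)$, so that (i) is immediate. Under $T_{*}$ the distributions $D^{l}$ become $\mathrm{span}\{\partial/\partial x_{l},\dots,\partial/\partial x_{n}\}$, and the bracket hypotheses imposed on each $Y^{i,j}$ become exactly the hypotheses of Lemma~\ref{lemm_Y} for the vector field $T_{*}(Y^{i,j})$; hence $T_{*}(Y^{i,j})$ takes the structural form \eqref{eq_triY} required by Proposition~\ref{prop_essindofsys}. That proposition then yields $\mathrm{ad}_{T_{*}(Y^{i})}^{\epsilon}f(0)\notin D^{i+1}(0)$ for every $\epsilon\in{\cal E}_{i+1}(f_{i})$ and $\mathrm{ad}_{T_{*}(Y^{i})}^{\zeta}f(0)\in D^{i+1}(0)$ whenever $\zeta$ together with every multi-index generating $\zeta$ avoids ${\cal I}_{i+1}(f_{i})$, and \eqref{eq_pushforward} transfers both relations back to $\xi$-coordinates. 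Sufficiency is the same argument in reverse: condition (i) combined with Theorem~\ref{theorem_tri} produces the triangularizing $T$, Lemma~\ref{lemm_Y} again confirms that $T_{*}(Y^{i,j})$ satisfies \eqref{eq_triY}, and the bracket hypotheses of (ii) are transported by \eqref{eq_pushforward} to $x$-coordinates, where Proposition~\ref{prop_essindofsys}, read in the converse direction, identifies ${\cal E}_{i+1}(f_{i})$ as the prescribed set and hence certifies that the transformed system is of the required type $[\kern-0.15em[{\cal E}_{2}(f_{1}),\dots,{\cal E}_{n}(f_{n-1})]\kern-0.15em]$.

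The main obstacle I expect is the bookkeeping that connects the two distinct ways of describing ``non-membership in ${\cal I}_{i+1}(f_{i})$''. Proposition~\ref{prop_essindofsys} packages its second clause in terms of the generation preorder $\preceq$ (namely, every $\alpha\preceq\zeta$ fails to lie in ${\cal I}_{i+1}(f_{i})$), while Theorem~\ref{theorem_type} packages it set-theoretically as $\zeta\in{\cal A}_{i+1}\setminus{\cal G}_{i+1}({\cal E}_{i+1}(f_{i}))$. Matching the two formulations relies on the fact that every element of ${\cal I}_{i+1}(f_{i})$ is generated by an essential multi-index, so that $\zeta\in{\cal A}_{i+1}\setminus{\cal G}_{i+1}({\cal E}_{i+1}(f_{i}))$ is equivalent to $\zeta$ together with all its generators lying outside ${\cal I}_{i+1}(f_{i})$. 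I would also appeal to the finiteness of ${\cal E}_{i+1}(f_{i})$ from Theorem~\ref{theo_w_fin} to ensure that the family of bracket tests supplied by condition (ii) is rich enough to pin the essential set down uniquely rather than merely constraining it.
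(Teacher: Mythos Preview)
Your proposal is correct and follows exactly the route the paper takes: the paper's entire proof of Theorem~\ref{theorem_type} (together with Theorem~\ref{theo_sysleast}) is the single sentence ``Combining Proposition~\ref{prop_essindofsys}, Lemma~\ref{lemm_Y}, and \eqref{eq_pushforward}, it is easy to see the next two theorems.'' Your elaboration accurately unpacks that sentence, and the bookkeeping issue you flag---reconciling the $\preceq$-formulation in Proposition~\ref{prop_essindofsys} with the set-theoretic formulation ${\cal A}_{i+1}\setminus{\cal G}_{i+1}({\cal E}_{i+1}(f_i))$ in the theorem---is real and is handled exactly by the observation you give (every proper $(i+1)$-multi-index of $f_i$ is generated by some element of ${\cal E}_{i+1}(f_i)$, together with the fact that ${\cal E}_{i+1}(f_i)$ is an antichain under~$\preceq$).
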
\par

\begin{remark}
	 When the proper $(i+1)$-multi-index $(0,\dots,0,k) \in {\cal E}_{i+1}(f_i)$, we know that ${\cal A}_{i+1} \setminus {\cal G}_{i+1}({\cal E}_{i+1}(f_i))$ is finite from Proposition \ref{prop_finiteset}. To check the condition (ii) in the above theorem, we only need to calculate Lie brackets a finite number of times. But when the proper $(i+1)$-multi-index $(0,\dots,0,k) \not\in {\cal E}_{i+1}(f_i)$ for all the positive integer $k$, ${\cal A}_{i+1} \setminus {\cal G}_{i+1}({\cal E}_{i+1}(f_i))$ is infinite. Although it follows that we may need to compute Lie brackets infinitely many times in the case, this is acceptable because one may also have to check infinite many $i$-multi-indices of $f_i$ to find ${\cal E}_{i+1}(f_i)$.
\end{remark}

\begin{remark} \label{rem_Yi}
	We now consider how to obtain $Y^i$ required in Theorem \ref{theo_sysleast} and \ref{theorem_type}. $Y^{i,i+1}$ can be selected as any smooth vector field belonging to $D^{i+1}$ such that $Y^{i+1}(0) \not\in D^{i+2}(0)$. Let, for $j =1,\dots\,i$, $Y^{i,j} = \textstyle \sum_{k=j}^{n} h^{i,j}_k(\xi) X^k$ where $h^{i,j}_k(\xi)$ are undetermined smooth functions. Note that
	\begin{equation}\label{eq_XtoX}
		[X^{l_1},X^{l_2}] = a^{l_1,l_2}_{l'}(\xi)X^{l_2}+\dots+a^{l_1,l_2}_{n}(\xi)X^{n} 		
	\end{equation}
	where $l_1$ and $l_2$ are integers belonging to $\{1,\dots,n\}$, ${l'} = \mathrm{min}(l_1,l_2)$, and $a^{l_1,l_2}_{l'}(\xi),\dots, a^{l_1,l_2}_{n}(\xi)$ are smooth functions satisfying $a^{l_1,l_2}_k(\xi)=-a^{l_2,l_1}_k(\xi)$ and $a^{l,l}_k(\xi)=0$. Let us calculate the following Lie bracket, for $l=i+1,\dots,n$,
	\begin{equation}\nonumber
	\begin{aligned}
		&[X^l,Y^{i,j}]= \sum_{k=j}^{i}[X^l,h^{i,j}_k X^k]+\sum_{k={i+1}}^{n}[X^l,h^{i,j}_k X^k]\\
		&\;=\sum_{k=j}^{i} \left(X^l(h^{i,j}_k)X^k + h^{i,j}_k[X^l,X^k]\right)+\sum_{k={i+1}}^{n}[X^l,h^{i,j}_k X^k]\\
		&\;= \sum_{k=j}^{i} \left(X^l(h^{i,j}_k)X^k + h^{i,j}_k \sum_{\hat k=k}^{i} a^{l,k}_{\hat k}X^{\hat k}\right) +\\
		&\quad\;\, \sum_{k={j}}^{i} \left(h^{i,j}_k \sum_{\hat k={i+1}}^{n} a^{l,k}_{\hat k}X^{\hat k}\right)+\sum_{k={i+1}}^{n}[X^l,h^{i,j}_k X^k]\\		
		&\;= \sum_{k=j}^{i} \left(X^l(h^{i,j}_k) + \sum_{k'=j}^{k} h^{i,j}_{k'} a^{l,k'}_{k} \right)X^k +\\
		&\quad\;\, \sum_{k={j}}^{i} \left(h^{i,j}_k \sum_{\hat k={i+1}}^{n} a^{l,k}_{\hat k}X^{\hat k}\right)+\sum_{k={i+1}}^{n}[X^l,h^{i,j}_k X^k] \in D^{i+1}.	
	\end{aligned}
	\end{equation}
	Similarly, we have
	\begin{equation}\nonumber
	\begin{aligned}
		&[X^i,Y^{i,j}]= \sum_{k=j}^{i-1} \left(X^l(h^{i,j}_k) + \sum_{k'=j}^{k} h^{i,j}_{k'} a^{l,k'}_{k} \right)X^k +\\
		&\quad\;\, \sum_{k={j}}^{i-1} \left(h^{i,j}_k \sum_{\hat k={i}}^{n} a^{l,k}_{\hat k}X^{\hat k}\right)+\sum_{k={i}}^{n}[X^l,h^{i,j}_k X^k] \in D^{i}.
	\end{aligned}	
	\end{equation}
	Thus, $h^{i,j}_k(\xi)$, $k=j,\dots,i$, can be determined from the equations
	\begin{equation}\nonumber
		 X^l(h^{i,j}_k) + \sum_{k'=j}^{k} h^{i,j}_{k'} a^{l,k'}_{k} =0
	\end{equation}
	where $k=j,\dots,i$ when $l=i+1,\dots,n$, and $k=j,\dots,i-1$ when $l=i$. The existence of the solution of these equations is guaranteed by Proposition \ref{prop_essindofsys} and Lemma \ref{lemm_Y}. Additionally, $h^{i,j}_k(\xi)$, $k=i+1,\dots,n$, can be chosen to be arbitrary smooth functions.
\end{remark}\par
 To determine whether a nonlinear system can be transformed into a specific type of lower triangular form by using the previous two theorems, appropriate vector fields $Y^{i,j}$, $i=1,\dots,n-1$ and $j=1,\dots,i+1$, are required. Partly because there are so many vector fields to find out, this is not an easy task. The next two corollaries are the reduced versions of Theorem \ref{theo_sysleast} and \ref{theorem_type}, respectively. The following lemma can be proved in a way similar to the proof of Lemma \ref{lemm_Y}.
\begin{lemma}\label{lemma_Yn}
	$\{X^1(\xi), \dots ,X^n(\xi)\}$ and $\{Y^1(\xi), \dots ,Y^n(\xi)\}$ are two sets of  nonsingular vector fields such that $D^k = {\rm{span}}  \{{X^k}, \dots ,{X^n}\} ={\rm{span}}  \{{Y^k}, \dots ,{Y^n}\}$, $k=n,\dots,1$, are $n-k+1$ dimensional involutive distributions. Then there exists a change of coordinates $x=T(\xi)$ such that in $x$-coordinates
	\begin{equation}\nonumber
		{Y^k(x)} = \sum\limits_{i = k}^{n}{Y_i^k({x_1}, \dots ,{x_i})\frac{\partial }{{\partial {x_i}}}}
	\end{equation}
	if and only if the relation
	\begin{equation}\nonumber
		[{X^i},{Y^j}] \in {D^i}
	\end{equation}
	holds for any pair of $i,j = 1,\dots,n$ satisfying $i>j$.
\end{lemma}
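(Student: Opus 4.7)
The plan is to reduce to Frobenius coordinates adapted to the nested chain $D^1\supset D^2\supset\cdots\supset D^n$ and then read off the claim by matching components in the expansion of $[X^i,Y^j]$. The key observation is that the only nontrivial content of the hypothesis $[X^i,Y^j]\in D^i$ (for $i>j$) lives in the "low-index" components $\partial/\partial x_m$ with $m<i$, so each such component yields one equation $X^i(Y^j_m)=0$, and these equations together force $Y^j_m$ to be independent of $x_{m+1},\dots,x_n$.

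First, I would invoke the Frobenius theorem on the nested chain of involutive distributions to obtain local coordinates $x=T(\xi)$ such that $D^l=\mathrm{span}\{\partial/\partial x_l,\dots,\partial/\partial x_n\}$ for every $l=1,\dots,n$. In these coordinates I may write $X^i=\sum_{l=i}^n X^i_l(x)\,\partial/\partial x_l$ and $Y^j=\sum_{m=j}^n Y^j_m(x)\,\partial/\partial x_m$. Since $\{X^k,\dots,X^n\}$ is a pointwise basis of $D^k$, the matrix $(X^i_l)_{l,i\ge k}$ is nonsingular, so for any $k$ the sets $\{X^k,\dots,X^n\}$ and $\{\partial/\partial x_k,\dots,\partial/\partial x_n\}$ generate the same $C^\infty$-module of sections of $D^k$.

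For necessity, assuming each $Y^j_m$ depends only on $x_1,\dots,x_m$, I would expand
\begin{equation*}
[X^i,Y^j]=\sum_{l=i}^{n}\sum_{m=j}^{n} X^i_l\frac{\partial Y^j_m}{\partial x_l}\frac{\partial}{\partial x_m}-\sum_{l=i}^{n}\sum_{m=j}^{n} Y^j_m\frac{\partial X^i_l}{\partial x_m}\frac{\partial}{\partial x_l}.
\end{equation*}
In the first double sum $\partial Y^j_m/\partial x_l=0$ unless $l\le m$; together with $l\ge i$ this forces $m\ge i$, so these terms live in $D^i$. The second double sum already has $\partial/\partial x_l$ with $l\ge i$, hence also lies in $D^i$. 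This gives $[X^i,Y^j]\in D^i$ for $i>j$.

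For sufficiency, I would read the same expansion in the opposite direction: for $i>j$ and for any $m$ with $j\le m<i$, the $\partial/\partial x_m$ coefficient of $[X^i,Y^j]$ equals precisely $\sum_{l=i}^n X^i_l\,\partial Y^j_m/\partial x_l = X^i(Y^j_m)$, since the second double sum contributes only to components with index $\ge i$. The hypothesis $[X^i,Y^j]\in D^i$ therefore yields $X^i(Y^j_m)=0$ for every pair $(i,m)$ with $i>m\ge j$. Fixing $m$ and letting $i$ range over $m+1,\dots,n$, and using the module-equality noted above, I conclude $\partial Y^j_m/\partial x_l=0$ for every $l>m$, whence $Y^j_m=Y^j_m(x_1,\dots,x_m)$, which is the required structure. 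The main obstacle is purely notational: keeping track of the index ranges and checking that the "second double sum" genuinely contributes nothing to the components $\partial/\partial x_m$ with $m<i$; once that bookkeeping is laid out cleanly, both directions reduce to one line each, exactly mirroring the proof of Lemma~\ref{lemm_Y}.
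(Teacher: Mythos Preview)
Your proposal is correct and follows essentially the same approach as the paper. The paper does not spell out a proof of this lemma but states that it ``can be proved in a way similar to the proof of Lemma~\ref{lemm_Y}'', and that proof proceeds exactly as you do: pass to Frobenius coordinates adapted to the flag $D^1\supset\cdots\supset D^n$, expand $[X^i,Y^j]$, observe that the only contribution to the $\partial/\partial x_m$ component with $m<i$ is $X^i(Y^j_m)$, and then use that $\{X^{m+1},\dots,X^n\}$ and $\{\partial/\partial x_{m+1},\dots,\partial/\partial x_n\}$ generate the same module to convert the vanishing of $X^i(Y^j_m)$ for all $i>m$ into $\partial Y^j_m/\partial x_l=0$ for all $l>m$.
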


The next two corollaries follow at once from the previous two theorems and the above lemma.
\begin{corollary}\label{col_least}
	System \eqref{eq_s} is locally equivalent to a lower triangular form satisfying ${\cal L}_{i+1}(f_i)=\alpha^i$ $(i=1,\dots,n-1)$, namely \eqref{eq_lea_sys}, via a feedback \eqref{eq_fb} and a change of coordinates \eqref{eq_ct} if and only if the following conditions are satisfied:\par
	(i) System \eqref{eq_s} is locally feedback equivalent to \eqref{eq_tri}. \par
	(ii)  Suppose $D^l$, $l=1,\dots,n+1$, are $n-l+1$ dimensional involutive distributions defined in Theorem \ref{theorem_tri} and $X^l$, $l=1,\dots,n$, are smooth vector fields satisfying $X^l \in D^l$ and $X^l(0) \not\in D^{l+1}(0)$. Let  $Y= (Y^1,\dots,Y^n)$ be a tuple of smooth vector fields such that $D^k = {\mathrm{span}} \{Y^k,\dots,Y^n\}$ for $k=n,\dots,1$ and $[{X^l},{Y^k}] \in {D^l}$ for all $l > k$. Then ${\mathrm{ad}_{Y}^{\alpha^i} F}(0) \notin {D^{i + 1}}(0)$ and ${\mathrm{ad}_{Y}^\alpha F}(0) \in {D^{i + 1}}(0)$ for every proper $(i+1)$-multi-index $\alpha  \lessdot \alpha^i$.
\end{corollary}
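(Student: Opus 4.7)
The plan is to deduce Corollary \ref{col_least} directly from Theorem \ref{theo_sysleast}, showing that the single tuple $Y=(Y^1,\dots,Y^n)$ demanded by the corollary can play the role of each of the $n-1$ separate tuples $Y^i=(Y^{i,1},\dots,Y^{i,i+1})$ required by the theorem. Lemma \ref{lemma_Yn} is morally responsible for this reduction in the number of tuples.

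For the sufficiency direction, given $Y$ as in condition (ii), I would set $Y^{i,j}:=Y^j$ for every pair with $1\le j\le i+1\le n$ and verify the hypotheses of Theorem \ref{theo_sysleast}. Membership and nondegeneracy ($Y^{i,j}\in D^j$, $Y^{i,j}(0)\notin D^{j+1}(0)$) are immediate from $D^k=\mathrm{span}\{Y^k,\dots,Y^n\}$. The required bracket inclusions split by the relative size of $l$ and $j$: when $l>j$ the corollary's hypothesis $[X^l,Y^j]\in D^l$ combined with $D^l\subseteq D^{i+1}$ (when $l\ge i+1$) or $D^l=D^i$ (when $l=i$) gives the inclusion, while the corner cases $l=j=i+1$, $(l,j)=(i,i)$, and $(l,j)=(i,i+1)$ follow from the involutivity of $D^{i+1}$ or $D^i$, together with $Y^{i+1}\in D^{i+1}\subset D^i$ for the last. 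Since $\mathrm{ad}_{Y^i}^\alpha F=\mathrm{ad}_Y^\alpha F$ under the identification $Y^{i,j}=Y^j$, the Lie bracket conditions at the origin transfer verbatim, and Theorem \ref{theo_sysleast} delivers the equivalence to the target lower triangular form.

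For the necessity direction, assume \eqref{eq_s} is equivalent to the target lower triangular form via a feedback \eqref{eq_fb} and a change of coordinates $x=T(\xi)$. I would take $Y^k:=T_*^{-1}(\partial/\partial x_k)$: in $x$-coordinates each $Y^k$ is simply $\partial/\partial x_k$, so $D^k=\mathrm{span}\{Y^k,\dots,Y^n\}$, and for any $X^l=\sum_{m\ge l}X^l_m\,\partial/\partial x_m\in D^l$ one has $[X^l,\partial/\partial x_k]=-\sum_{m\ge l}(\partial X^l_m/\partial x_k)\,\partial/\partial x_m\in D^l$, verifying the corollary's bracket hypothesis. For the Lie bracket conditions at the origin, the fields $\partial/\partial x_k$ fit the template \eqref{eq_triY} as a degenerate case, so Proposition \ref{prop_essindofsys} applies in $x$-coordinates. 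Since $\alpha^i={\cal L}_{i+1}(f_i)$ is the lex-least $(i+1)$-multi-index of $f_i$, and since $\alpha\prec\beta$ implies $\alpha\lessdot\beta$ on $(i+1)$-multi-indices by Theorem \ref{thm_gene_condi} (while Proposition \ref{prop_ngene} rules out generation from lower-dimensional multi-indices), $\alpha^i$ is in fact essential and every $\alpha\lessdot\alpha^i$ has all its $\preceq$-predecessors outside ${\cal I}_{i+1}(f_i)$. The two characterizations in Proposition \ref{prop_essindofsys} then yield $\mathrm{ad}_Y^{\alpha^i}F(0)\notin D^{i+1}(0)$ and $\mathrm{ad}_Y^\alpha F(0)\in D^{i+1}(0)$ for each such $\alpha$.

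The main obstacle will be the necessity direction, specifically showing that $\alpha^i$, originally defined via lexicographical order, is essential in the $\preceq$-sense needed by Proposition \ref{prop_essindofsys}, and that each lex-predecessor of $\alpha^i$ has no $\preceq$-predecessor inside ${\cal I}_{i+1}(f_i)$. Both reductions hinge on the implication $\alpha\prec\beta\Rightarrow\alpha\lessdot\beta$ supplied by Theorem \ref{thm_gene_condi}; once that is in hand, the remaining verifications amount to routine bookkeeping with the involutivity of the $D^k$ and the pushforward identity \eqref{eq_pushforward}.
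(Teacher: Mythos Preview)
Your approach is correct and essentially matches the paper's: Corollary \ref{col_least} follows from Theorem \ref{theo_sysleast} together with Lemma \ref{lemma_Yn}, via the identification $Y^{i,j}:=Y^j$. Your verification that this assignment produces admissible tuples for Theorem \ref{theo_sysleast}---with the corner cases $l=j=i+1$, $(l,j)=(i,i)$, and $(l,j)=(i,i+1)$ handled by involutivity of $D^{i+1}$ and $D^i$---is exactly what is needed.

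One small point on the necessity direction: condition (ii) is phrased universally (``Let $Y$ be a tuple \dots\ Then \dots''), so you must establish the Lie-bracket conclusions for \emph{every} admissible $Y$, not only for the particular choice $Y^k=T_*^{-1}(\partial/\partial x_k)$. The cleanest fix is to reuse the reduction you already set up for sufficiency: once the system is equivalent to \eqref{eq_lea_sys}, Theorem \ref{theo_sysleast} (necessity) gives its condition (ii) for every admissible family $Y^i$; for an arbitrary admissible $Y$ in the corollary, the assignment $Y^{i,j}=Y^j$ produces such a family, and the conclusions transfer. This symmetric route also makes the direct appeal to Proposition \ref{prop_essindofsys} and the $\lessdot$/$\prec$ bookkeeping unnecessary here---that work is already absorbed into the proof of Theorem \ref{theo_sysleast}. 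Lemma \ref{lemma_Yn} is what justifies the whole manoeuvre: it guarantees that any $Y$ meeting the corollary's bracket hypothesis can, in suitable coordinates, be written in the triangular form needed for Proposition \ref{prop_essindofsys} simultaneously for every $i$, which is precisely why one tuple suffices in place of $n-1$.
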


\begin{corollary}\label{col_type}
	System \eqref{eq_s} is locally equivalent to \eqref{eq_ess_sys} via a feedback \eqref{eq_fb} and a change of coordinates \eqref{eq_ct} if and only if the following conditions are satisfied:\par
	(i) System \eqref{eq_s} is locally feedback equivalent to \eqref{eq_tri}. \par
	(ii)  Suppose $D^l$, $l=1,\dots,n+1$, are $n-l+1$ dimensional involutive distributions defined in Theorem \ref{theorem_tri} and $X^l$, $l=1,\dots,n$, are smooth vector fields satisfying $X^l \in D^l$ and $X^l(0) \not\in D^{l+1}(0)$. Let  $Y= (Y^1,\dots,Y^n)$ be a tuple of smooth vector fields such that $D^k = {\mathrm{span}} \{Y^k,\dots,Y^n\}$ for $k=n,\dots,1$ and $[{X^l},{Y^k}] \in {D^l}$ for all $l > k$. Then for every $\epsilon \in {\cal E}_{i+1}(f_i)$ and every $\zeta  \in {\cal A}_{i+1} \setminus  {\cal G}_{i+1}({\cal E}_{i+1}(p))$, where ${\cal A}_{i+1}$ is the set consisting of all the proper $i$-multi-indices, the relations $\mathrm{ad}_Y^\epsilon F(0) \notin D^{i + 1}(0)$ and $\mathrm{ad}_Y^{\zeta} F(0) \in D^{i + 1}(0)$ hold.
\end{corollary}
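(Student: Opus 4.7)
The plan is to derive Corollary \ref{col_type} directly from Theorem \ref{theorem_type} by taking the tuples $Y^i=(Y^{i,1},\dots,Y^{i,i+1})$ in the theorem to be truncations of the single tuple $Y=(Y^1,\dots,Y^n)$ in the corollary, namely $Y^{i,j}:=Y^j$. Under this identification, $\mathrm{ad}_{Y^i}^\alpha F=\mathrm{ad}_Y^\alpha F$ for every proper $(i+1)$-multi-index $\alpha$, so the $\mathrm{ad}$-conditions in the two statements coincide. The only real work is to check that the Lie-bracket hypotheses on $Y$ in Corollary \ref{col_type} imply those on $Y^i$ in Theorem \ref{theorem_type}, and, for the converse direction, to produce a suitable tuple $Y$ using Lemma \ref{lemma_Yn}.

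For sufficiency, assume $Y$ satisfies the hypotheses of Corollary \ref{col_type}, and set $Y^{i,j}:=Y^j$. Membership $Y^{i,j}\in D^j$ and transversality $Y^{i,j}(0)\notin D^{j+1}(0)$ are immediate. The bracket condition $[X^l,Y^{i,j}]\in D^{i+1}$ for $i+1\le l\le n$ and $j=1,\dots,i+1$ splits into two cases: if $l>j$, then the corollary's hypothesis gives $[X^l,Y^j]\in D^l\subseteq D^{i+1}$; if $l\le j$, the only possibility is $l=j=i+1$, and then $X^l$ and $Y^j$ both lie in $D^{i+1}$, so involutivity of $D^{i+1}$ yields the inclusion. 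Similarly, $[X^i,Y^{i,j}]\in D^i$ follows from the corollary when $i>j$, and from involutivity of $D^i$ together with $X^i,Y^j\in D^i$ when $i\le j$. Hence Theorem \ref{theorem_type} applies and gives the desired equivalence to \eqref{eq_ess_sys}.

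For necessity, assume \eqref{eq_s} is locally equivalent to \eqref{eq_ess_sys} via a feedback \eqref{eq_fb} and a coordinate change \eqref{eq_ct}. Condition (i) is automatic. To produce $Y$, work in the lower triangular coordinates $x=T(\xi)$ provided by Theorem \ref{theorem_tri} and set $\widetilde{Y}^k:=\partial/\partial x_k$. Pulling back via $T^{-1}_*$, the vector fields $Y^k:=T^{-1}_*(\widetilde{Y}^k)$ span $D^k=\mathrm{span}\{Y^k,\dots,Y^n\}$ and, by Lemma \ref{lemma_Yn} applied to the choice of $X^l$'s (which exist by the remark after Theorem \ref{theo_sysleast}), the bracket conditions $[X^l,Y^k]\in D^l$ for $l>k$ are equivalent to the existence of coordinates putting each $Y^k$ in the triangular form $\sum_{i\ge k}Y^k_i(x_1,\dots,x_i)\,\partial/\partial x_i$—precisely what our choice achieves. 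The $\mathrm{ad}$-conditions then transfer from Theorem \ref{theorem_type} via $\mathrm{ad}_{Y^i}^\alpha F=\mathrm{ad}_Y^\alpha F$.

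The main obstacle is the boundary analysis in sufficiency where $l=j$ or $i=j$: there the corollary's bracket hypothesis $[X^l,Y^k]\in D^l$ (stated only for $l>k$) is silent, and one must instead invoke involutivity of the appropriate $D^\ast$ together with $X^\ast, Y^\ast$ both lying in it. Once this case distinction is resolved, the rest of the argument is a clean reduction to the already-proven Theorem \ref{theorem_type} and Lemma \ref{lemma_Yn}.
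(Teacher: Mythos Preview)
Your proposal is correct and matches the paper's approach, which simply states that the corollary ``follows at once from the previous two theorems and the above lemma'' (i.e., Theorem \ref{theorem_type} and Lemma \ref{lemma_Yn}); your reduction $Y^{i,j}:=Y^j$ together with the involutivity check at the boundary cases $l=j$ and $i=j$ is exactly what is needed to make that one-line claim precise. One minor point: in the necessity direction, condition (ii) is a statement about \emph{every} admissible tuple $Y$, so rather than constructing a single $Y$ from $\partial/\partial x_k$ you should start from an arbitrary $Y$ satisfying the bracket hypotheses, invoke Lemma \ref{lemma_Yn} to obtain coordinates in which each $Y^k$ is triangular, and then run the same $Y^{i,j}:=Y^j$ reduction together with the necessity of Theorem \ref{theorem_type}.
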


\begin{remark}\label{rem_y}
	$Y$ mentioned in the previous two corollaries can be found by a method similar to Remark \ref{rem_Yi}. $Y^{n}$ can be selected as a smooth vector field belonging to $D^{n}$ with $Y^{n}(0) \ne 0$. Let, for $j =1,\dots\,n-1$, $Y^j = \textstyle \sum_{k=j}^{n} h_k^j(\xi) X^k$. Since $[X^l,Y^j] \in D^l$ for $l=j+1,\dots,n$, the functions $h_k^j(\xi)$, $k=j,\dots,n-1$, can be obtained from the solution of the equations
	\begin{equation}\nonumber
		X^l(h_k^j) + \sum_{k'=j}^{k} h_{k'}^j a^{l,k'}_{k} =0
	\end{equation}
	where every function $a^{l,k}_{k'}$ is defined by \eqref{eq_XtoX}. $h_n^j(\xi)$ can be any smooth function.
\end{remark}\par

\begin{example}
	Consider the system given by \eqref{eq_nse1}. By using the above corollary, we now show how to determine what type the system is without transforming it into a lower triangular form. Since it has been verified in Example \ref{exa_transformation} that this system satisfies the condition (i) introduced in Corollary \ref{col_type}, it is necessary to find four nonsingular vector fields $Y^4,\dots,Y^1$ such that, for $l=1,\dots,4$ and $k=1,\dots,l-1$, $[{X^l},{Y^k}] \in {D^l}$ where $D^l$ and $X^l=G^l$ are given in Example \ref{exa_transformation}. By using the method proposed in Remark \ref{rem_y}, let us take  
	\begin{equation}\nonumber
		\begin{aligned}
			&Y^{4}=-\frac{\partial}{\partial \xi_2}+\frac{\partial}{\partial \xi_4},\\			
			&Y^{3}=\frac{\partial}{\partial \xi_{1}}+\left(2 \xi_{3}-\xi_{4}\right) \frac{\partial}{\partial \xi_{2}}+\frac{\partial}{\partial \xi_{3}}+\xi_{4} \frac{\partial}{\partial \xi_{4}},\\			
			&Y^{2}=\xi_{3} \frac{\partial}{\partial \xi_{1}}+\left(2 \xi_{3}^{2}-\xi_{4}+1\right) \frac{\partial}{\partial \xi_{2}}+\xi_{3} \frac{\partial}{\partial \xi_{3}}+\xi_{4} \frac{\partial}{\partial \xi_{4}},
		\end{aligned}
	\end{equation}
	and	
	\begin{equation}\nonumber
		Y^{1}=\frac{\partial}{\partial \xi_{1}}-\xi_{4} \frac{\partial}{\partial \xi_{2}}+\xi_{4} \frac{\partial}{\partial \xi_{4}}.
	\end{equation}
	After computing several Lie brackets, it is straightforward to see that $[{X^l},{Y^k}] \in {D^l}$ for $l=1,\dots,4$ and $k=1,\dots,l-1$; that is, the condition (ii) introduced in Corollary \ref{col_type} is also satisfied. To simplify the notation, let \eqref{eq_nse1} be of a type $[\kern-0.15em[ E^1, E^2 ,E^3 ]\kern-0.15em]$, where $E^i$, $i=1,2,3$, are sets of proper $(i+1)$-multi-indices. To determine $E^3$, we first compute the following Lie bracket  
	\begin{equation}\nonumber
	\begin{aligned}
		\mathrm{ad}_{Y^{4}} F=&\left(\xi_{2}-\xi_{3}^{2}+\xi_{4}\right) \frac{\partial}{\partial \xi_{1}}+2 \xi_{3}\left(\xi_{2}-\xi_{3}^{2}+\xi_{4}\right) \frac{\partial}{\partial \xi_{2}}\\
		&+\left(\xi_{2}-\xi_{3}^{2}+\xi_{4}\right) \frac{\partial}{\partial \xi_{3}}.
	\end{aligned}
	\end{equation}
	Since $\mathrm{ad}_{Y^{4}} F(0) =0$, we have $(0,0,0,1) \not\in E^3$. After further computations, we obtain
	\begin{equation}\nonumber
	\begin{aligned}
		\mathrm{ad}&_{Y^{2}} \mathrm{ad}_{Y^{4}} F= \left(-\xi_{2}+\xi_{3}^{2}-\xi_{4}+1\right) \frac{\partial}{\partial \xi_{1}}+\\
		&2 \xi_{3}\left(-\xi_{2}+\xi_{3}^{2}-\xi_{4}+1\right) \frac{\partial}{\partial \xi_{2}}+\left(-\xi_{2}+\xi_{3}^{2}-\xi_{4}+1\right) \frac{\partial}{\partial \xi_{3}}
	\end{aligned}
	\end{equation}
	and 
	\begin{equation}\nonumber
		\mathrm{ad}_{Y^{1}} \mathrm{ad}_{Y^{4}} F=\mathrm{ad}_{Y^{3}} \mathrm{ad}_{Y^{4}} F=\mathrm{ad}_{Y^{4}} \mathrm{ad}_{Y^{4}} F=0.
	\end{equation}
	Seeing that $\mathrm{ad}_{Y^{2}} \mathrm{ad}_{Y^{4}} F(0) \not\in D^4$, it is clear that $\{(0,1,0,1)\} \in E^3$. Let $\alpha$ be a proper $4$-multi-index such that $\left| \alpha \right|>2$. Noting that $\mathrm{ad}_{Y}^\alpha F \ne 0$ implies $(0,1,0,1) \prec \alpha$, $E^3 = \{(0,1,0,1)\}$ holds. According to
	\begin{equation}\nonumber
	\begin{aligned}
		\mathrm{ad}&_{Y^{3}} F=\frac{\partial}{\partial \xi_{2}}+\xi_{4}\left(\xi_{2}-\xi_{3}^{2}+\xi_{4}\right)\left(\frac{\partial}{\partial \xi_{1}}+2 \xi_{3} \frac{\partial}{\partial \xi_{2}}+\frac{\partial}{\partial \xi_{3}}\right) \\
		&+(\xi_{1} \xi_{2}-\xi_{1} \xi_{3}^{2}+\xi_{1} \xi_{4}+\xi_{2}^{3}-3 \xi_{2}^{2} \xi_{3}^{2}+3 \xi_{2}^{2} \xi_{4}+3 \xi_{2} \xi_{3}^{4}\\
		&-6 \xi_{2} \xi_{3}^{2} \xi_{4}-\xi_{2} \xi_{3} ) \left(-\frac{\partial}{\partial \xi_{2}}+\frac{\partial}{\partial \xi_{4}}\right)
	\end{aligned}
	\end{equation}
	and $\mathrm{ad}_{Y_{3}} F(0) \not\in D^3$, $(0,0,1)$ must be the only element of $E^2$. Then, let us focus on $E^1$. Since the form of $\mathrm{ad}_{Y_{2}} F$ is so complex that, for the sake of simplicity, only $\mathrm{ad}_{Y_{2}} F(0)$ is shown here 
	\begin{equation}\nonumber
		\mathrm{ad}_{Y^{2}} F(0)=\frac{\partial}{\partial \xi_{2}}.
	\end{equation}
	Noting that $\mathrm{ad}_{Y_{2}} F(0) \in D^2$, it is definite that $(0,1) \not\in E^1$. We also compute the following vector fields at the origin
	\begin{equation}\nonumber
		\mathrm{ad}_{Y^{2}}^{2} F(0)=0,
	\end{equation}
	\begin{equation}\nonumber
		\mathrm{ad}_{Y^{2}}^{3} F(0)=6 \frac{\partial}{\partial \xi_{1}}+6 \frac{\partial}{\partial \xi_{2}}-6 \frac{\partial}{\partial \xi_{4}},
	\end{equation}
	\begin{equation}\nonumber
		\mathrm{ad}_{Y^{1}} \mathrm{ad}_{Y^{2}} F(0)=\frac{\partial}{\partial \xi_{1}}+\frac{\partial}{\partial \xi_{2}}-\frac{\partial}{\partial \xi_{4}}.
	\end{equation}
	Neither $\mathrm{ad}_{Y_{2}}^{3} F(0)$ nor $\mathrm{ad}_{Y_{1}} \mathrm{ad}_{Y_{2}} F(0)$ belongs to $D^2(0)$. Hence, $(0,3), (1,1)\in E^1$. Since $\mathrm{ad}_{Y}^\alpha F(0)=0$ holds for all $\alpha \in {\cal A}_2 \setminus  {\cal G}_{2}(\{(0,3), (1,1)\})$, it is impossible to find any other proper 2-multi-index belonging to $E^1$ yet. This allows us to conclude that $E^1 = \{(0,3), (1,1)\}$.
	Comparing this example with Example \ref{exa_transformation}, the type of \eqref{eq_nse1} determined by using Corollary \ref{col_type} is the same as the type judged from the equivalent lower triangular form of \eqref{eq_nse1}.
\end{example}

\section{Conclusion} 
We have developed a framework to analyze the multi-indices of the functions given by the right-hand sides of the system equations of lower triangular forms. This leads to two classification schemes of lower triangular forms. To expand the application of those two classifications, the problem of whether a nonlinear system is equivalent to a specific type of lower triangular form has also been solved in this paper.

\bibliographystyle{IEEEtran}
\bibliography{root_bib}

\end{document}